\documentclass[10pt,a4paper]{article}
\usepackage[utf8]{inputenc}
\usepackage[T1]{fontenc}
\usepackage{microtype}
\usepackage{fourier}
\usepackage{bm}
\usepackage[margin=1in]{geometry}
\usepackage{tikz}
\usepackage[margin=10pt,font=small,labelfont=bf,labelsep=period]{caption}
\usepackage{amsmath}
\usepackage{amsthm}
\usepackage{amssymb}
\usepackage{mathscinet}
\usepackage{pdfpages} 
\usepackage{empheq}
\usepackage{float}
\usepackage{subcaption}

\numberwithin{equation}{section}

\usepackage[nottoc,notlot,notlof]{tocbibind}

\usepackage{aliascnt}
\usepackage[colorlinks,linkcolor=blue,citecolor=blue]{hyperref}
\usepackage[nameinlink,capitalise]{cleveref}

\theoremstyle{plain}

\newtheorem{theorem}{Theorem}[section]

\newaliascnt{lemma}{theorem}
\newtheorem{lemma}[lemma]{Lemma}
\aliascntresetthe{lemma}

\newaliascnt{corollary}{theorem}
\newtheorem{corollary}[corollary]{Corollary}
\aliascntresetthe{corollary}

\theoremstyle{definition}

\newaliascnt{definition}{theorem}
\newtheorem{definition}[definition]{Definition}
\aliascntresetthe{definition}

\newaliascnt{example}{theorem}
\newtheorem{example}[example]{Example}
\aliascntresetthe{example}

\newaliascnt{remark}{theorem}
\newtheorem{remark}[remark]{Remark}
\aliascntresetthe{remark}

\newaliascnt{assumption}{theorem}

\aliascntresetthe{assumption}

\newaliascnt{proposition}{theorem}
\newtheorem{proposition}[proposition]{Proposition}
\aliascntresetthe{proposition}

\newcommand{\R}{\mathbf{R}}

\renewcommand{\epsilon}{\varepsilon}
 
\newcommand\Cl{\mathcal{C}\ell}

\newcommand{\abs}[1]{\left\lvert #1 \right\rvert}

\newcommand{\bo}{{\bf 1}}
\renewcommand{\l}{\lambda}

\newcommand{\Ithree}{I_3}
\newcommand{\asqrt}{\sqrt{\Ithree}}

\DeclareMathOperator{\sgn}{sgn}
\DeclareMathOperator{\tr}{tr}
\DeclareMathOperator{\arcsinh}{arcsinh}
\DeclareMathOperator{\sn}{sn}

\title{ Two-peakon dynamics in the Clifford algebra generalization of the Camassa–Holm equation }
\author{Alexander Karlson\thanks{Department of Mathematics \& Statistics and Centre for Quantum Topology and Its Applications (quanTA), University of Saskatchewan, Saskatoon, SK, S7N 5E6, CANADA;
E-mail: ytw374@mail.usask.ca}\and Jacek Szmigielski \thanks{Department of Mathematics \& Statistics and Centre for Quantum Topology and Its Applications (quanTA), University of Saskatchewan, Saskatoon, SK, S7N 5E6, CANADA; 
  E-mail: szmigiel@math.usask.ca} }

\date{\today}
\makeatletter
\hypersetup{%
  pdfauthor={Jacek Szmigielski},
  pdftitle={\@title},
}
\makeatother

\begin{document}

\maketitle

\begin{abstract}

We study the dynamics of a two-component perturbation of the
Camassa--Holm equation arising from a reformulation of the
Euler--Bernoulli beam problem, recently extended to a general
Clifford algebra setting. We focus on the original case associated
with a Clifford algebra with two generators and Minkowski signature,
for which the resulting equation admits nonsmooth soliton solutions
(peakons) carrying internal degrees of freedom.

We investigate analytically and numerically the dynamics of a
two-peakon solution and establish the existence of a synchronized
exchange of energy between spatially separated peaks, a phenomenon
previously observed only numerically. We obtain a complete
description of the long-time dynamics: the amplitudes approach a
periodic orbit determined by the spectral invariants, and the
resulting hidden periodicity governs the persistent exchange between
the two peakons. The limiting orbit and its averaged dynamics are
described explicitly in terms of elliptic functions and complete
elliptic integrals.

We also derive an exact identity relating the peak separation to the
accumulated imbalance of the two amplitudes. This identity, combined
with a frozen-parameter comparison argument, yields exponential
decay of the interaction between the peakons and exponential
convergence of the amplitude variables to the limiting periodic
orbit. Moreover, once the asymptotic regime is reached, the
separation increases from one period of the internal oscillation to
the next, even though its instantaneous rate may continue to change
sign.

These results reveal a dynamical feature absent from the scalar
Camassa--Holm equation: a persistent oscillatory transfer of energy
between increasingly separated peakons, coupled with quantitatively
controlled asymptotic decoupling.

\end{abstract}
\tableofcontents{}
\section{Introduction} 
In this paper, we examine a system of PDEs introduced in \cite{BealsSzm1} \footnote{An equivalent system appeared earlier in the work of Geng and Wang \cite{geng-wang:Coupled-CH}[equation 1.2 therein].   }
\begin{align} 
&m_t=(um)_x+u_x m-vm, \qquad &n_t&=(un)_x+u_x n+vn, \label{eq:2CH}\\
&u_{xx}-4u=n+m,  \qquad &v_x&=n-m. \label{eq:2CHconstraints}
\end{align} 

This system was derived in the context of isospectral deformations of the Euler--Bernoulli beam problem.  Yet, interestingly, this classical problem can be viewed as equipping the 
Camassa--Holm (CH) equation \cite{camassa-holm, CHH1994}, written below in the original notation, 
\begin{equation} \label{eq:CH} 
m_t+(um)_x+u_xm=0, \qquad m= u-u_{xx} 
\end{equation}
with an internal structure conceptually similar to the transition from the Klein--Gordon scalar equation to the spinor Dirac equation.  
We note that if $v=0$ then equations \eqref{eq:2CH} and \eqref{eq:2CHconstraints} decouple into two independent (rescaled) CH equations.  We will subsequently refer to $v$ as a \emph{coupling field}.  

One of the most interesting features of the original CH equation is the existence of non-smooth solitons, known as peakons. Using the notation (differing from ours only by scaling) from the original paper 
\cite{camassa-holm}, the peakons are obtained from the Ansatz: 
\begin{equation} \label{eq:uwave} 
u(x,t)=\sum_{j=1}^N m_j(t) e^{-\abs{x-x_j(t)}},  
\end{equation} 
which, after substitution into \eqref{eq:CH} and interpreting equations in a weak sense, yields a Hamiltonian system of ODEs: 
\begin{equation} \label{eq:CHpeakonsH}
\dot x_j=\frac{\partial H}{\partial m_j}, \qquad \dot m_j=-\frac{\partial H}{\partial x_j}, 
\end{equation} 
where $H=\frac12 \sum_{j=1, k=1}^N m_j m_k e^{-\abs{x_j-x_k}}$.  

There is a wealth of literature on peakons, starting with \cite{camassa-holm}, but perhaps the most sensible approach is to refer to the review paper written by one of us \cite{lundmark-szmigielski:review}, where the reader can find 
an extensive list of references (336 references to be exact)  to work on peakons over the last thirty years.  

There have been multiple generalizations of the CH equation in the past. The literature on multicomponent generalizations is also so vast that it cannot be done justice to in a few paragraphs. Nevertheless, we select a few representative works to illustrate the scope of generalizations. The first modification, to the best of our knowledge, was in the paper by Olver and Rosenau \cite{Olver-Rosenau:Tri}, followed by Chen, Liu and Zhang \cite{CLZ} and by Falqui \cite{Falqui_2006}.  Some of the bi-Hamiltonian context for the paper \cite{CLZ} is also contained in the paper by Liu and Zhang \cite{LiuZhang2005}.   Eventually, this generalization became known as the 2CH model. 
We should also mention some early work of Popowicz, for example \cite{Popowicz2007}, and its supersymmetric counterpart \cite{Popowicz-super}.  These early papers used 
as methodology, some combination of the Lax pair modification and/or a deformation of the CH bi-Hamiltonian structure.  
The physical justification for the 2CH model was presented in the paper by Constantin and Ivanov \cite{ConstantinIvanov2008}; these authors also studied the short-wave limit 
of the 2CH, observing that limit admits peakon solutions, in contrast to the full 2CH system.  A preliminary classification of two-component CH systems was given by Hone, Novikov, and Wang in \cite{HNW} and was later extended to include the system we study in this paper in \cite{HNS}.  
Further generalizations followed, for example, Holm and Ivanov \cite{HolmIvanov2010}, and a series of papers 
by Qiao and his collaborators \cite{XiaQiao2009,XZQ2015,XQ2016}.  It is worth pointing out that the system equivalent to our \eqref{eq:2CH} was presented as one of three interesting 
peakon equations associated with the $4\times 4$ Lax pair by Geng and Wang in \cite{geng-wang:Coupled-CH}.  Finally, it is fair to mention that many of the structures arising in the Clifford-algebra-based generalization of the Euler--Bernoulli beam problem proposed in \cite{BealsSzm1, BealsSzm2, HNS} have, in fact, been anticipated in the work of Olver and Sokolov \cite{Olver-Sokolov:IS-associative}.

A natural question is whether another multi-component CH is needed at all. First, we argue that calling \eqref{eq:2CH} another two-component CH equation is a misnomer, even though the second author used this term in his earlier publications \cite{BealsSzm1, BealsSzm2} on the subject. Instead, \eqref{eq:2CH} and \eqref{eq:VCliff line} should be viewed as the CH equation with an internal symmetry. This is particularly clear when one examines \eqref{eq:VCliff line}: the scalar CH equation represented by the field $u$ acquires internal degrees of freedom, controlled by the skew-symmetric tensor ${\bf V}$, which transforms covariantly under the orthogonal group $O(d-1)$. In the present case, $d=2$; ${\bf V}$ is a $2\times 2$ skew-symmetric matrix, represented by the coupling field $v$. The dynamics of the internal degrees of freedom are governed by the Clifford algebra structure. For more information on this point and the relevance of the Clifford algebra generalization of the 
Euler--Bernoulli beam, the reader might consult the recent papers \cite{BealsSzm2, HNS}.

Another aspect of the CH equation that takes an interesting turn concerns its particle nature.  Indeed, one of the most attractive features of the CH equation is that its $N$-peakon sector admits a very sharp wave-particle description: \eqref{eq:uwave} represents a wave with peaks at $x_j$, while equations \eqref{eq:CHpeakonsH} describe the dual mechanical system. The idea that solitons can be interpreted as particles dates back to Kruskal and Zabusky's landmark paper \cite{Kruskal-Zabusky}. This interpretation was given precise meaning within inverse scattering theory, where a finite-dimensional mechanical system can be identified using action-angle variables \cite{Faddeev-Zakharov-KdV}. However, the inverse canonical transformation to position space $x$ encounters structural obstacles, mainly because solitons are not sufficiently localized.
The situation with CH peakons provides a much sharper duality between waves and particles; the peak positions determine the particle positions uniquely, while the amplitudes of the individual peakons provide the 
momenta. Using spectral methods \cite{beals-sattinger-szmigielski:moment,beals-sattinger-szmigielski:Stieltjes}, it has been shown that, asymptotically in time, peakons become independent free particles.

In contrast to the scalar CH equation, where asymptotically separated peakons behave as independent particles, the present system exhibits \emph{persistent synchronization} of internal degrees of freedom. Even as the peak positions separate linearly in time, the amplitudes continue to exchange energy periodically. This produces a form of long-range coherence that persists despite asymptotic spatial decoupling. A somewhat similar behaviour has been observed in non-integrable cross-coupled CH equations studied by Cotter, Holm, Ivanov and Percival in \cite{Cotter_2011}, dubbed ``waltzing peakons'', but for bounded pairs rather than for spatially separated ones.  

The content of the paper can be summarized as follows:-
\begin{enumerate} 
\item In \autoref{sec:Lax}, we review the Clifford-based generalization of the Euler--Bernoulli beam.  The main purpose of this brief review is to place the current investigation in a broader context.  
We mostly follow \cite{BealsSzm2}, eventually returning to the main subject of our paper, namely the Clifford algebra with Minkowski signature and the system of equations \eqref{eq:2CH} and \eqref{eq:2CHconstraints}.  We conclude this section by stating the peakon equations for that system.  
\item In \autoref{sec:FP}, we study the forward spectral problem for the equation 
\[
D^2 \Phi=(\bo +\lambda M)\Phi 
\]
in the case of the peakon measure $M=\sum_{j=1}^N M_j \delta_{x_j}$.  In \autoref{thm:T}, we give an explicit formula for the transition matrix $T$, including the spectral invariants (constants of motion)
that appear as coefficients of the polynomials $\det(T_{11})$ and $\tr (T_{11} \sigma_1)$ (\autoref{cor:det/trace A}).  Finally, 
when $M$ is a positive measure (no anti-peakons), we establish in \autoref{lem:Lemma1} that peakons cannot collide.

\item The remainder of the paper is devoted to the case of two peakons.  The rationale is that, as in the soliton case, the two-peakon case 
is an avatar of the more complex dynamics involved in a multi-peakon case.  The dynamics of two peakons in \eqref{eq:CH} was already studied in the original paper of Camassa and Holm \cite{camassa-holm}, and one of the 
objectives of the present paper is to explore the ramifications of equipping \eqref{eq:CH}  with internal degrees by generalizing the CH equation to \eqref{eq:2CH} and \eqref{eq:2CHconstraints}.  
\eqref{sec:two peakons I} delves into the details of the two-peakon dynamics.  One important step is a precise description of the asymptotic tail of the coupling field $v$, denoted $v_+$.  
We identify a pair of canonical variables $(v_+,y)$ whose dynamics is governed by what we call a hyperbolic pendulum, and whose solution is expressed in terms of elliptic functions.  
\item In \autoref{sec:Dynamics}, we explore the asymptotic behaviour of the solutions by first establishing that, as time goes to infinity, peakons separate in space.  This is proven in 
\autoref{thm:gap}.  Furthermore,  the position-dependent factor $E:=e^{-2\abs{x_1-x_2}}$ that controls the separation is shown to be in $L^1(0,\infty)$, a fact we will exploit numerous times 
in further analysis.  In \autoref{subsec:Enull}, we elaborate on a simplified dynamics by setting $E\equiv 0$.  This leads us to define, in the space of masses (momenta), an asymptotic curve $\Gamma$ given in \eqref{eq:Gamma-curve}, which we uniformize using a hyperbolic parametrization (see \eqref{eq:hyper-sd}) and another parametrization in terms of $(v_+, y)$.  The latter is the subject of 
\autoref{prop:m1n1m2n2-d}.  

\item We subsequently analyze the full dynamics for $E \neq 0$, distinguishing two regimes: the boundary regime discussed in \autoref{sec:BC} and the generic regime treated in \autoref{sec:GC}. We demonstrate that the boundary regime reduces to the classical two-peakon problem for the original CH equation (see \autoref{sec:BC}). In contrast, the generic regime examined in \autoref{sec:GC} gives rise to a previously unobserved type of peakon interaction: the masses $(m_1, n_1, m_2, n_2)$ asymptotically approach a periodic orbit $\Gamma$ (see \autoref{prop:Gamma-approx}). In \autoref{sec:Dynamics of separation}, we derive an exact identity that connects the peak separation to the time-accumulated imbalance between the two amplitudes (see \autoref{thm:SI}). This identity, in conjunction with a frozen-parameter comparison principle, yields exponential decay of the interaction between the peakons and exponential convergence of the amplitude variables toward the limiting periodic orbit (see \autoref{cor:exponential-approach-Gamma}). Furthermore, the main result \autoref{thm:Tsampling} establishes, among other statements, that once the asymptotic regime has been entered, the separation grows from one period of the internal oscillation to the next, even though its instantaneous derivative may continue to change sign. A detailed description of the asymptotic trajectories is given in \autoref{prop:x-asymptotics}.

\item In \autoref{sec:gallery}, we present a collection of graphs that illustrate the principal dynamical differences between the CH peakons and the peakons with internal degrees of freedom analyzed in this work.

\item The paper concludes with appendices.  In \autoref{appendix:periodic}, we give the details about the periodic curve of masses $\Gamma$, one of the central concepts of the current paper.  
This is followed by \autoref{appendix:relations} in which we list a variety of interrelations between spectral invariants used in the body of the paper, while in \autoref{appendix:positions} we 
provide the computational details omitted in \autoref{sec:positions}.  

\end{enumerate}

\section{Lax pair formulation} \label{sec:Lax}
We will briefly recall a Clifford algebra-based generalization of \eqref{eq:2CH} and \eqref{eq:2CHconstraints} following \cite{BealsSzm2}.  
This generalization is not necessary for the remainder of this paper, but it provides a conceptual locale for the discussion of the significance of equations \eqref{eq:2CH} and \eqref{eq:2CHconstraints}.  

Let $W$ be a  $d$-dimensional real vector space with a bilinear, non-degenerate, symmetric form $B$, with signature $(p,q), p+q=d$. 
The Clifford algebra $\mathcal{C\ell} (W,B)$ is generated by relations 
\begin{equation} \label{eq:Clifdef}
w_1w_2+w_2w_1=2B(w_1,w_2), \qquad w_1, w_2 \in W. 
\end{equation}
In terms of a fixed orthonormal basis, the defining relations read
\begin{equation} \label{eq:Clifdef2}
e_\mu e_\nu+e_\nu e_\mu=2\epsilon_\mu \delta_{\mu \nu} , \qquad \mu, \nu  =1,2, \cdots, d, \qquad \epsilon_\mu=\pm 1.  
\end{equation} 
$\mathcal{C\ell} (W)$ splits as a vector space
\begin{equation} 
\mathcal{C\ell} (W)=\mathcal{C\ell}_0 (W)\oplus \mathcal{C\ell}_1 (W)\oplus \mathcal{C\ell}_2 (W)\oplus\cdots \oplus \mathcal{C\ell}_d(W). 
\end{equation} 
$\Cl_0 (W)$ is isomorphic to a copy of $\R$ while $\Cl_j(W)$ is spanned by $\binom{d}{j}$ elements $e_{\mu_1}e_{\mu_2}\cdots e_{\mu_j}$ for $\mu_1<\mu_2<\cdots<\mu_j$.  
$\Cl(W)$ is isomorphic as a vector space to the $2^d$ dimensional Grassmann algebra $\Lambda(W)=\bigoplus_{j=0}^d \Lambda ^j(W)$.  $\Cl_j(W)$ elements are assigned the (Grassmann) degree $j$.  
\begin{definition} \label{def:Mstring}
Consider a compactly supported measure $M\in \Cl_1(W)$.  
The boundary value problem 
\begin{align*} 
D_x^2 \Phi&=(1+\lambda M) \Phi, \qquad   -\infty<x<\infty, \,   \\
\Phi(x, \lambda) \rightarrow 0, \, \text{ as } x\rightarrow -\infty &,\, \text{ and $\Phi(x,\l)$ is non-invertible as } x\rightarrow \infty, 
\end{align*} 
will be called a \emph{Matrix String}.  
\end{definition} 
\begin{definition}\label{def:alphaMstring-Deformation}  
\begin{equation}\label{eq:alpha-t-flow-differential form} 
D_t \Phi=(\frac{a-b_x}{2} +b\, D_x)\Phi, 
\end{equation}
where $a,b\in \Cl(W) $ are functions of $x$, a deformation parameter $t$,  and the spectral variable $\lambda$, which are bounded as $\abs{x} \rightarrow \infty$.    
\end{definition} 
The generalization of \eqref{eq:2CH} written in vector form reads \cite{BealsSzm2}
\begin{equation} \label{eq:VCliff line}
\vec{M}_t=\mathcal{L}_{\vec{M}} u+AS\vec{M}, 
\end{equation} 
where $A=\begin{bmatrix} 0&\vec{v}\\-\vec{v}^T&0 \end{bmatrix}$, $S$ is the signature matrix, and 
$\mathcal{L}_{\vec{M}}u=(\vec{M} u)_x+u_x \vec{M}$.  Since writing the analog of \eqref{eq:2CHconstraints} requires more notation, we refer an interested reader to \cite{BealsSzm2}.  

The original Lax pair equation used in \cite{BealsSzm1}  to generate \eqref{eq:2CH} and \eqref{eq:2CHconstraints} corresponds to a Clifford algebra on two generators with the Minkowski metric $(+,-)$, and is given in an explicit matrix form by 
\begin{align}
 \Phi_{xx} &= (\bo+ \lambda M)\,\Phi,  \label{eq:xLax}\\
 \Phi_t &= \frac{a-b_x}{2}  \Phi+ b\,\Phi _x,  \label{eq:tLax} 
\end{align}
where
\begin{equation} \label{eq:Mab}
 M = 
 \begin{pmatrix}
   0 & n \\
   m & 0
 \end{pmatrix}, \quad 
 a = 
 \begin{pmatrix}
   v& 0 \\
   0   & -v 
 \end{pmatrix},  \quad
 b = 
 \begin{pmatrix}
   u & \frac1\lambda \\
   \frac1\lambda & u
 \end{pmatrix}, \qquad \bo=\begin{pmatrix}1&0\\0&1\end{pmatrix}, 
 \end{equation} 
 with  $ \lambda \in \mathbb{C}$ and scalars $m, n, u, v $.  
 
 For bounded $u$ and compactly supported $m,n$ one can express $u,v$ in terms of $m,n$ as 
\begin{subequations} 
\begin{align} 
u(x,t)=-\frac14 \int_{\R} e^{-2\abs{x-y}} (m(y,t)+n(y,t)) dy, \label{eq:u-integral} \\
v(x,t)=\frac12 \int_{\R} \sgn(x-y) (n(y,t)-m(y,t))\, dy, \label{eq:v-integral} 
\end{align} 
\end{subequations} 
thus eliminating the constraints given by \eqref{eq:2CHconstraints}.  
The asymptotic values of $u$ and $v$ play a significant role in the whole theory.  They can be easily extracted from the integral formulas above, giving 
\begin{equation} \label{eq:uvass}
\begin{matrix}\lim_{x\rightarrow \pm \infty} u(x,t)=0=\lim_{x\rightarrow \pm \infty}u_x(x,t), \\ 
\lim_{x\rightarrow \pm \infty}v(x,t)=\pm \frac12  \int_{-\infty}^\infty[n(y,t)-m(y,t)]\, dy\stackrel{\text{def}}{=}v_\pm(t). 
\end{matrix}\end{equation}
Since $M$ is compactly supported, we have 
\begin{align}  
&u=u_-(t)e^{2x}, \qquad &x<<0, \label{eq:asminus}\\
&u=u_+(t) e^{-2x}, \qquad &0<< x, \label{eq:asplus}
\end{align} 
where 
$$ 
u_\pm(t)=-\frac14 \int_{-\infty}^\infty e^{\pm 2y}[m(y,t)+n(y,t)] \, dy. 
$$
We note that 
\begin{align} 
&a=\frac{v_-}2 \sigma_3, &b&=u_-e^{2x} \bo +\frac1\l \sigma_1,\qquad  &x<< 0\label{eq:abminus} \\
&a=\frac{v_+}{2 } \sigma_3,  &b&=u_+e^{-2x} \bo +\frac1\l \sigma_1, \qquad & x>> 0, \label{eq:abplus} 
\end{align}
where $\sigma _1, \sigma_3$ are Pauli matrices $\begin{psmallmatrix} 0&1\\1&0 \end{psmallmatrix}, \begin{psmallmatrix}1&0\\0&-1 \end{psmallmatrix}$, respectively.
In this paper, we work with measure-valued data $(m,n)$.  In the discrete/\emph{peakon} case,  the measures are assumed to take the form: 
\begin{equation} \label{eq:mnpeakons}
m(t)=\sum_{j=1}^N m_j(t) \delta_{x_j(t)}, \qquad n(t)=\sum_{j=1}^N n_j(t) \delta_{x_j(t)}, 
\end{equation} 
i.e., both measures live on the same moving support $\{x_j(t)\}$.  With this ansatz the fields $u,v$ read: 
\begin{align} 
&u(x,t)=-\frac14 \sum_{j=1}^N (m_j(t)+n_j(t))  e^{-2\abs{x-x_j(t)}}, \\
&v(x,t)=\frac12 \big(\sum_{x_j(t)<x} (n_j(t)-m_j(t))-\sum_{x_j(t)>x} (n_j(t)-m_j(t))\big), \\
& v_+(t)=\frac12 \sum_{j=1}^N  (n_j(t)-m_j(t)).  \label{eq:v+}
\end{align} 

Evaluated in the sense of distributions, the PDEs \eqref{eq:2CH} and \eqref{eq:2CHconstraints}  reduce to a closed ODE system for $\{x_j, m_j, n_j\}$.  As in classical CH, the sites 
$x_j$ move with the flow: 
\begin{equation} \label{eq:dotxj}
\dot x_j=-u(x_j(t),t), 
\end{equation} 
while the masses $m_j, n_j$ evolve by first-order linear laws driven by local gradients of $u$ and the coupling $v$: 
\begin{equation} \label{eq:dotmjnj}
\dot m_j(t)=(u_x-v)\big |_{x=x_j(t)} m_j(t), \qquad \dot n_j(t)=(u_x+v)\big|_{x=x_j(t)} n_j(t), 
\end{equation} 
where the values of $u_x$ and $v$ at $x_j$ mean the arithmetic averages of the left and right limits.

\section{ Forward spectral problem; finite discrete measure} \label{sec:FP}
In this section, we study the forward problem for the equation 
$D^2 \Phi=(\bo +\l M) \Phi$ assuming that 
\begin{equation} \label{eq:M}
M=\sum_{j=1}^N \begin{pmatrix} 0&n_j\\m_j&0 \end{pmatrix} \delta_{x_j}:= \sum_{j=1}^N M_j \delta_{x_j}, 
\end{equation} 
and $\Phi(x)=e^x\bo, \quad x<<0$.  We have temporarily omitted the $t$ dependence in the formulas in this section.  
In each mass free segment $x_{j-1}<x<x_j, \,1\leq j\leq N+1$, 
$$ 
\Phi(x,\l)= e^x A_j(\l)+ e^{-x} B_j(\l), \qquad 
$$ 
where $A_1=\bo, B_1=0$ and, in the region $x>>0$ we write $A_{N+1}=A, \, B_{N+1}=B$.  
Since $\Phi$ satisfies $D^2 \Phi=(\bo +\l M) \Phi$, the continuity and jump conditions at $x_j$ imply
$$ 
\begin{bmatrix} A_{j+1}\\ B_{j+1} \end{bmatrix}=T_j \begin{bmatrix} A_j\\ B_j \end{bmatrix} , 
$$ 
where 
\begin{equation*}
T_j=\begin{bmatrix} \mathbf{1}+\frac \lambda 2 M_j&\frac \lambda 2 e^{{-}2x_j} M_j\\
-\frac\lambda 2 e^{2x_j}M_j & \mathbf{1}-\frac\lambda2  M_j   \end{bmatrix}. 
\end{equation*}
We note that 
$$ 
\begin{bmatrix} A\\ B \end{bmatrix}=T_N\cdots T_1 \begin{bmatrix} \bo \\ 0  \end{bmatrix}.  
$$

The first elementary observation can be formulated as follows.  
\begin{lemma} \label{lem: basic1}
Let $X_j=\begin{bmatrix} 1&e^{-2x_j}\\ -e^{2x_j}& -1 \end{bmatrix}$.  

Then 
\begin{enumerate} 
 
\item 
\begin{equation*}
X_j=\begin{bmatrix} 1\\-e^{2x_j} \end{bmatrix} \begin{bmatrix} 
1& e^{-2x_j} \end{bmatrix} 
\end{equation*} 
\item 
\begin{equation*}
T_j =\mathbf{1}\otimes \mathbf{1}+ \frac\lambda2 X_j\otimes M_j
\end{equation*}
\end{enumerate} 
where $\otimes$ is the tensor product (Kronecker product) of matrices.  
\end{lemma}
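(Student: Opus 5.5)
Both claims are direct matrix identities, so the plan is to verify them entrywise, with no earlier result needed beyond the definitions of $T_j$ and $X_j$ given just above.

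For part 1, I will compute the outer product of the column vector $\begin{bmatrix} 1\\-e^{2x_j}\end{bmatrix}$ with the row vector $\begin{bmatrix}1& e^{-2x_j}\end{bmatrix}$. Its entries are $1\cdot 1=1$, $1\cdot e^{-2x_j}=e^{-2x_j}$, $-e^{2x_j}\cdot 1=-e^{2x_j}$ and $-e^{2x_j}e^{-2x_j}=-1$. These are exactly the entries of $X_j$. As a side remark, this shows $X_j$ has rank one, and since $\tr X_j=0$ it is nilpotent, $X_j^2=0$; that observation is presumably what the factorization is later used for.

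For part 2, I will use the convention that for a $2\times2$ matrix $X=(x_{ab})$ the Kronecker product $X\otimes M_j$ is the $2\times 2$ block matrix whose $(a,b)$ block is $x_{ab}M_j$. In particular $\bo\otimes\bo$ is the $4\times4$ identity, with blocks $\bo,0;0,\bo$. The blocks of $\frac\lambda2 X_j\otimes M_j$ are then $\frac\lambda2 M_j$, $\frac\lambda2 e^{-2x_j}M_j$, $-\frac\lambda2 e^{2x_j}M_j$ and $-\frac\lambda2 M_j$. Adding the identity gives precisely the block form of $T_j$ displayed above.

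The only point needing care is the ordering convention of the Kronecker product. The outer factor $X_j$ must index the $(A,B)$ block structure and the inner factor $M_j$ acts within each block; that is the ordering I use, and it matches how $T_j$ acts on $\begin{bmatrix}A_j\\B_j\end{bmatrix}$. With that fixed, the argument has no real obstacle.
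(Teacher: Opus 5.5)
Your proof is correct and matches what the paper intends. The paper states the lemma without proof as an ``elementary observation,'' and both parts follow from exactly the entrywise and blockwise checks you carry out, using the standard Kronecker convention in which the $(a,b)$ block of $X_j\otimes M_j$ is $(X_j)_{ab}M_j$. The factorization is used in the next lemma to obtain $X_jX_i=\bigl(1-e^{-2(x_j-x_i)}\bigr)\begin{bmatrix}1\\-e^{2x_j}\end{bmatrix}\begin{bmatrix}1&e^{-2x_i}\end{bmatrix}$ for $i<j$; your observation $X_j^2=0$ is the degenerate case $i=j$ of that identity.
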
 
 
 \begin{definition}
  The binomial coefficient $\binom{S}{j}$
  denotes the collection of $j$-element subsets
  of the set~$S$,
  and $[N]$ denotes the integer interval $\{ 1,2,3, \dots, N\}$.
  We always label the elements of a set
  $I \in \binom{[N]}{j}$ in increasing order:
  $I = \{ i_1 < i_2 < \dots < i_j \}$.
  Moreover, given a collection of matrices $\{a_1, a_2, \cdots, a_N\} $, we denote the ordered product of matrices labelled by the multi-index 
  $I$ as $a_I\stackrel{def}{=} a_{i_j}a_{i_{j-1}}\cdots a_{i_1}$.  
 
\end{definition}
\begin{lemma} 
Given a multi-index $I \in \binom{[N]}{j}$ and the set of 
matrices $\{X_1, X_2, \cdots, X_N\}$ as in \autoref{lem: basic1} then 
\begin{equation} 
X_I=\prod_{k=1}^{j-1} \big( 1-e^{-2(x_{i_{k+1}}-x_{i_k})}\big) 
\begin{bmatrix} 1 & e^{-2x_{i_1}}\\-e^{2x_{i_j}}& -e^{2(x_{i_j}-x_{i_1})}\end{bmatrix} 
\end{equation} 
with the proviso that the empty product is taken to be $1$ when $j=1$.  
\end{lemma}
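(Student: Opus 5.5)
The plan is to exploit the rank-one factorization from \autoref{lem: basic1}: writing $X_j = c_j r_j$ with column $c_j=\begin{bmatrix} 1\\-e^{2x_j}\end{bmatrix}$ and row $r_j=\begin{bmatrix} 1& e^{-2x_j}\end{bmatrix}$, the ordered product $X_I = X_{i_j}X_{i_{j-1}}\cdots X_{i_1}$ telescopes into
\[
X_I = c_{i_j}\,(r_{i_j}c_{i_{j-1}})(r_{i_{j-1}}c_{i_{j-2}})\cdots(r_{i_2}c_{i_1})\,r_{i_1},
\]
since each interior pairing $r_{a}c_{b}$ is a scalar. This reduces the matrix product to a scalar product times a single rank-one matrix $c_{i_j}r_{i_1}$.

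Next I compute the scalars. For indices $a=i_{k+1}$, $b=i_k$,
\[
r_a c_b = 1\cdot 1 + e^{-2x_a}\cdot(-e^{2x_b}) = 1-e^{-2(x_{i_{k+1}}-x_{i_k})},
\]
which gives exactly the factor in the claimed product for $k=1,\dots,j-1$. Then I expand the outer product
\[
c_{i_j}r_{i_1}=\begin{bmatrix} 1\\-e^{2x_{i_j}}\end{bmatrix}\begin{bmatrix} 1& e^{-2x_{i_1}}\end{bmatrix}
=\begin{bmatrix} 1 & e^{-2x_{i_1}}\\-e^{2x_{i_j}}& -e^{2(x_{i_j}-x_{i_1})}\end{bmatrix},
\]
matching the stated matrix. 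For $j=1$ there are no interior pairings, the product is empty (equal to $1$), and the formula reduces to $X_{i_1}=c_{i_1}r_{i_1}$, which is part 1 of \autoref{lem: basic1}.

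Formally I would present this as an induction on $j$: assuming the formula for $I'=\{i_1<\dots<i_{j-1}\}$, left-multiply by $X_{i_j}=c_{i_j}r_{i_j}$ and use $r_{i_j}c_{i_{j-1}}$ to produce the new factor while the remaining rank-one part becomes $c_{i_j}r_{i_1}$. There is no real obstacle; the only point requiring care is the ordering convention $a_I=a_{i_j}\cdots a_{i_1}$, which ensures the scalar factors involve $x_{i_{k+1}}-x_{i_k}$ (with the larger index on the left) and hence have the sign appearing in the statement.
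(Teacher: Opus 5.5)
Your proposal is correct and follows essentially the same route as the paper. The paper also uses the rank-one factorization from \autoref{lem: basic1} to get $X_jX_i=\bigl(1-e^{-2(x_j-x_i)}\bigr)\,c_j r_i$ for $i<j$, and then inducts on the number of factors, which is exactly your inductive step.
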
 
\begin{proof} 
It suffices to observe that 
\begin{equation*}
\begin{split} X_jX_i=
\begin{bmatrix} 1\\-e^{2x_j} \end{bmatrix} \begin{bmatrix} 
1& e^{-2x_j} \end{bmatrix} \begin{bmatrix} 1\\-e^{2x_i} \end{bmatrix} \begin{bmatrix} 
1& e^{-2x_i} \end{bmatrix}=\\(1-e^{-2(x_j-x_i)})\begin{bmatrix} 1\\-e^{2x_j} \end{bmatrix} \begin{bmatrix} 
1& e^{-2x_i} \end{bmatrix}, \qquad i<j, 
\end{split} 
\end{equation*} 
and then proceed by induction on the number of terms.  
\end{proof}

Let us define 
\begin{equation} \label{eq:fi}
\prod_{k=1}^{j-1} \big( 1-e^{-2(x_{i_{k+1}}-x_{i_k})}\big)\stackrel{def}{=}f_I. 
\end{equation}   
Then 
\begin{theorem} \label{thm:T} 
\begin{equation} 
T=\mathbf{I}\otimes \mathbf{I}+ 
\sum_{j=1}^N\big(\frac \lambda2\big) ^{j} \sum_{
I \in \binom{[N]}{j}}  f_I \begin{bmatrix} 1 & e^{-2x_{i_1}}\\-e^{2x_{i_j}}& -e^{2(x_{i_j}-x_{i_1})}\end{bmatrix} \otimes M_I
\end{equation}
In particular
\begin{align} 
T_{11}=A&=\mathbf{I}+\sum_{j=1}^N\big(\frac \lambda2 \big)^{j} \sum_{
I \in \binom{[N]}{j}}  f_I M_I, \\
T_{11}=A&=\big(\frac \lambda2\big)^{N} f_{[N]}M_{[N]} +O(\lambda^{N-1}), \qquad &\lambda \rightarrow \infty,  \\
\det(T_{11})=\det{A} &=\big(\frac \lambda 2\big)^{2N} f_{[N]}^2 \det M_{[N]}+O(\lambda^{2N-2}), \qquad &\lambda \rightarrow \infty, \label{eq:T11as} \\
T_{21}=B&=-\sum_{j=1}^N\big(\frac \lambda 2\big) ^{j} \sum_{
I \in \binom{[N]}{j}} e^{2x_{i_j}}f_I  M_I.  &
\end{align}
\end{theorem}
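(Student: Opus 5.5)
The plan is to expand the ordered product $T=T_N T_{N-1}\cdots T_1$ using the tensor form $T_j=\mathbf{1}\otimes\mathbf{1}+\frac\lambda2 X_j\otimes M_j$ from \autoref{lem: basic1}. The mixed-product property $(P\otimes Q)(R\otimes S)=PR\otimes QS$ lets the product distribute over the sum. Each term picks either $\mathbf{1}\otimes\mathbf{1}$ or $\frac\lambda2 X_j\otimes M_j$ from each factor. A choice of the second option on an index set $I=\{i_1<\dots<i_j\}$ contributes $(\frac\lambda2)^j X_{i_j}\cdots X_{i_1}\otimes M_{i_j}\cdots M_{i_1}=(\frac\lambda2)^j X_I\otimes M_I$. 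Because the factors in $T$ are ordered with larger indices to the left, the ordered-product convention $a_I=a_{i_j}\cdots a_{i_1}$ is exactly what appears. Substituting the closed form of $X_I$ from the preceding lemma, with prefactor $f_I$ as in \eqref{eq:fi}, gives the displayed formula for $T$.

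Next I read off the blocks. With the Kronecker convention $P\otimes Q=[P_{ab}Q]$, the $(1,1)$ block of $T$ is $\mathbf{I}+\sum_j(\frac\lambda2)^j\sum_I f_I\,(X_I)_{11}M_I$. Since $(X_I)_{11}/f_I=1$, this gives $A$. The $(2,1)$ block uses the entry $-e^{2x_{i_j}}$ and the identity block contributes nothing there, which gives $B$. These are the blocks acting on $\begin{bmatrix}\mathbf{1}\\0\end{bmatrix}$, so they agree with $\begin{bmatrix}A\\B\end{bmatrix}=T_N\cdots T_1\begin{bmatrix}\mathbf{1}\\0\end{bmatrix}$.

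For the large-$\lambda$ asymptotics, only $I=[N]$ has $j=N$, so $A=(\frac\lambda2)^N f_{[N]}M_{[N]}+O(\lambda^{N-1})$. For the determinant, I would write $A=\lambda^N(C_0+\lambda^{-1}C_1+\dots)$ with $C_0=2^{-N}f_{[N]}M_{[N]}$. Then $\det A=\lambda^{2N}(\det C_0+\lambda^{-1}\tr(\operatorname{adj}(C_0)C_1)+O(\lambda^{-2}))$.

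The main obstacle is showing that the $\lambda^{2N-1}$ coefficient vanishes, as the claimed error $O(\lambda^{2N-2})$ requires. I would use a parity argument. Each $M_j$ is off-diagonal, so $M_I$ is off-diagonal for $|I|$ odd and diagonal for $|I|$ even. Hence $A=A_{\mathrm{even}}(\lambda)+A_{\mathrm{odd}}(\lambda)$, where the diagonal part collects the even powers of $\lambda$ and the off-diagonal part collects the odd powers. For a $2\times 2$ matrix, $\det A=A_{11}A_{22}-A_{12}A_{21}$. The product of diagonal entries is even in $\lambda$, and the product of off-diagonal entries, being a product of two odd functions, is also even. Therefore $\det A$ is an even polynomial in $\lambda$ of degree at most $2N$, so its $\lambda^{2N-1}$ coefficient is zero. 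This yields \eqref{eq:T11as}, with leading coefficient $(\frac12)^{2N}f_{[N]}^2\det M_{[N]}$.
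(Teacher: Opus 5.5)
Your proposal is correct and follows the paper's route: expand $T_N\cdots T_1$ with $T_j=\mathbf{1}\otimes\mathbf{1}+\frac\lambda2 X_j\otimes M_j$ using the mixed-product property, substitute the closed form of $X_I$, and read off the blocks acting on $\begin{bmatrix}\mathbf{1}\\0\end{bmatrix}$. Your parity argument for the missing $\lambda^{2N-1}$ term in $\det A$ is the same diagonal/off-diagonal splitting the paper carries out in \autoref{cor:det/trace A}; the theorem itself leaves that step implicit.
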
 
%One obvious corollary is this. 
%\begin{corollary} 
%If none of the masses $\{m_1(0), \cdots, m_d(0); n_1(0), \cdots, n_d(0)\}$ 
%is zero, i.e. $m$ and $n$ have identical supports, then there are no collisions of masses, meaning, $x_i(t)\neq x_j(t), i\neq j$ for 
%all times.  
%\end{corollary} 
%\begin{proof} 
%It suffices to prove that the neighbours cannot collide.  We write $f_{[d]}=\prod_{k=1}^{d-1}\big( 1-e^{-2(x_{k+1}-x_{k})}\big)$, use 
%\autoref{thm:T}, \autoref{eq:T11as}, and observe that $\det M_{[d]}$ is bounded 
%for all times (this follows from equations of motion), and thus 
%$f_{[d]}$ cannot be zero.  Hence there are no collisions and thus $m$ and $n$ have identical supports for all time.  
%\end{proof} 
\begin{example} \label{example:N=2}
Set $N=2$.  Since $f_{1}=f_2=1$ we have only one non-trivial factor$f_{I}$, 
namely $ f_{[1,2]}$,   
Thus 
\begin{equation*}
A=\mathbf{I}+\frac \lambda 2 (M_1+M_2)+ \big(\frac \lambda 2\big)^2 \big( 1-e^{-2(x_2-x_{1})}\big)M_2M_1,
\end{equation*}
which can be written explicitly as 
\begin{center}
\begin{equation} \label{eq:A}
\boxed{A=\begin{bmatrix} 1+\big(\frac \lambda 2 \big)^2\big( 1-e^{-2(x_2-x_{1})}\big)n_2m_1& \frac \lambda 2(n_1+n_2)\\
\frac \lambda2 (m_1+m_2)& 1+\big(\frac \lambda 2\big)^2\big( 1-e^{-2(x_2-x_{1})}\big)m_2n_1\end{bmatrix}. 
}
\end{equation} 
\end{center} 

%
%
%We can also compute $B$: 
%\begin{center}
%\begin{equation} \label{eq:B} 
%\boxed{B=\begin{bmatrix} -\big(\frac \lambda 2\big)^2\big( 1-e^{-2(x_2-x_{1})}\big)e^{2x_2}n_2m_1&- \frac \lambda2 (e^{2x_1}n_1+e^{2x_2} n_2)\\
%-\frac \lambda 2 (e^{2x_1}m_1+e^{2x_2} m_2)& -\big(\frac \lambda 2\big) ^2 \big( 1-e^{-2(x_2-x_{1})}\big)e^{2x_2} m_2n_1\end{bmatrix}. 
%}
%\end{equation} 
%\end{center} 

\end{example}

\subsection{Time evolution} \label{sec:FP-time}
The most natural normalization near $x=-\infty$, outside of the support of $M$,  is to take $\Phi=e^{x}\bo$, as we did in the previous section.   However, this is incompatible with \eqref{eq:tLax} (see \cite{BealsSzm2} for further details).  Consequently, we are seeking a 
$t$ dependent gauge transformation $\Omega(t, \l)$ such that $\Psi=\Phi \Omega$ can satisfy 
both \eqref{eq:xLax} and \eqref{eq:tLax}.  Then, \eqref{eq:tLax} implies
\begin{equation} \label{eq:t-Omega} 
\Phi_t+\Phi \Omega_t \Omega^{-1} =\frac{a-b_x}{2}\Phi+b \Phi_x, 
\end{equation} 
and in the asymptotic region $x<< 0$ where $\Phi=e^x\bo $ we 
get 
\begin{equation} 
 \Omega_t \Omega^{-1} =\frac{a}{2}+b\stackrel{\eqref{eq:abminus}}{=} \frac{v_-}{2} \sigma_3 +\frac1\lambda \sigma_1.  \label{eq:Omegam}
\end{equation} 
The computation in the asymptotic region $x>> 0$ is only slightly more involved.  We recall that in that region 
$\Phi(x,t)=A(t,\l) e^x+B(t,\l) e^{-x}$.  
\begin{proposition} \label{prop:AB}
\begin{align} 
&A_t=J_+A+AJ_-, \\
&B_t=J_-B+BJ_- +2 u_+A, \\
&A_t \sigma_1=[J_+, A\sigma_1], \label{eq:ALax}
\end{align} 
where $J_\pm=\frac{v_+}{2}\sigma_3\pm \frac1\l \sigma_1$.    
\end{proposition}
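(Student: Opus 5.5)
The plan is to substitute the explicit form of $\Phi$ in the region $x\gg 0$ into the gauge-transformed time equation \eqref{eq:t-Omega}, and then match coefficients of the independent exponentials $e^{x}$, $e^{-x}$ and $e^{-3x}$.

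First I would make $\Omega_t\Omega^{-1}$ explicit in terms of $v_+$. By \eqref{eq:uvass}, $v_-=-v_+$, so \eqref{eq:Omegam} becomes
\[
\Omega_t\Omega^{-1}=-\tfrac{v_+}{2}\sigma_3+\tfrac1\lambda\sigma_1=-J_- .
\]
Since $\Omega$ depends only on $t$ and $\lambda$, this identity holds for every $x$. Hence \eqref{eq:t-Omega} reads
\[
\Phi_t=\tfrac{a-b_x}{2}\Phi+b\Phi_x+\Phi J_- .
\]

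Next I would insert the asymptotics for $x\gg 0$. There $\Phi=Ae^{x}+Be^{-x}$ and $\Phi_x=Ae^x-Be^{-x}$. The coefficients are $a=\mathrm{diag}(v_+,-v_+)$ and $b=u_+e^{-2x}\bo+\frac1\lambda\sigma_1$, so that $\frac a2=\frac{v_+}{2}\sigma_3$ and $-\frac{b_x}{2}=u_+e^{-2x}$. I would read the normalization of $a$ directly from \eqref{eq:Mab} so that the factor $\tfrac12$ comes out consistently with $J_\pm$. The products then split as follows.

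The $e^{x}$ terms collect to $(\frac{v_+}{2}\sigma_3+\frac1\lambda\sigma_1)A+AJ_-=J_+A+AJ_-$, which gives the first identity.

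The $e^{-x}$ terms come from three sources:
\begin{itemize}
\item the $\sigma_3$ and $\sigma_1$ parts acting on $B$, which together give $(\frac{v_+}{2}\sigma_3-\frac1\lambda\sigma_1)B=J_-B$;
\item $BJ_-$;
\item two copies of $u_+A$, one from $-\tfrac{b_x}{2}\Phi$ and one from $b\Phi_x$.
\end{itemize}
Their sum is $J_-B+BJ_-+2u_+A$, which gives the second identity.

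The $e^{-3x}$ terms are $u_+B$ and $-u_+B$, which cancel. The main point to check here is that no leftover term spoils the identities. Since $e^{x},e^{-x},e^{-3x}$ are linearly independent on any half-line, comparing coefficients is legitimate.

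Finally, for \eqref{eq:ALax} I would multiply the first identity on the right by $\sigma_1$. Using $\sigma_1\sigma_3\sigma_1=-\sigma_3$, one gets $\sigma_1J_-\sigma_1=-J_+$, that is, $J_-\sigma_1=-\sigma_1J_+$. Therefore
\[
A_t\sigma_1=J_+A\sigma_1+AJ_-\sigma_1=J_+(A\sigma_1)-(A\sigma_1)J_+=[J_+,A\sigma_1].
\]

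The only real obstacle is bookkeeping: fixing the signs, namely $v_-=-v_+$ and the sign of $\Omega_t\Omega^{-1}$, and the factor-of-two normalization of $a$ so that it matches $J_\pm$. The remaining steps are direct computation.
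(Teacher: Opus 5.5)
Your proposal is correct and follows the paper's route. You substitute $\Phi=Ae^{x}+Be^{-x}$ into \eqref{eq:t-Omega} with $\Omega_t\Omega^{-1}=-J_-$ (using $v_-=-v_+$), match the coefficients of $e^{x}$, $e^{-x}$ and $e^{-3x}$, and then obtain \eqref{eq:ALax} from $J_-\sigma_1=-\sigma_1J_+$. Taking $a=v_+\sigma_3$ from \eqref{eq:Mab}, rather than the $\frac{v_+}{2}\sigma_3$ printed in \eqref{eq:abplus}, is the normalization consistent with \eqref{eq:Omegam} and $J_\pm$, and with it every coefficient identity checks out.
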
 
\begin{proof} 
Following an elementary computation based on the equation \eqref{eq:t-Omega}, one obtains the first two statements.  \eqref{eq:ALax} follows from the identity $J_-\sigma_1=-\sigma_1 J_+$.  
\end{proof} 

\begin{corollary} \label{cor:det/trace A}
The trace $\textrm{tr} A\sigma_1$ and determinant $\det{A}$ are constants of motion.  
Together, their coefficients provide 
$$ 
\# \textrm{constants}=\begin{cases} \frac{3N}{2} , &\textrm{ if N is even  } , \\  \frac{3N+1}{2}, &\textrm{ if N is odd. } \end{cases} 
$$.  
\end{corollary}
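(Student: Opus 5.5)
The plan is to read off both conservation laws from the Lax-type equation \eqref{eq:ALax} of \autoref{prop:AB}, and then to count the nontrivial coefficients by a parity argument in $\lambda$.

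\emph{Conservation.} Put $L(\lambda,t)=A(t,\lambda)\sigma_1$. By \eqref{eq:ALax}, $L_t=[J_+,L]$ for each fixed $\lambda\neq 0$.

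1. Since $\tr[J_+,L]=0$, we get $\frac{d}{dt}\tr L=0$.
2. For $\det L$, write $\frac{d}{dt}\det L=\tr(\operatorname{adj}(L)\,L_t)$ (Jacobi's formula). Then
$$\tr(\operatorname{adj}(L)[J_+,L])=\tr(\operatorname{adj}(L)J_+L-\operatorname{adj}(L)LJ_+)=\tr(L\operatorname{adj}(L)J_+)-\det L\,\tr J_+=0,$$
using cyclicity and $\operatorname{adj}(L)L=L\operatorname{adj}(L)=\det L\,\bo$. This avoids assuming $L$ is invertible.
3. Since $\det\sigma_1=-1$, we have $\det A=-\det L$, so $\det A$ is conserved.

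Both quantities are polynomials in $\lambda$ by \autoref{thm:T}. They are constant for every $\lambda\neq 0$, even though $J_+$ contains $1/\lambda$. Hence each coefficient is separately a constant of motion.

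\emph{Structure of $A$.} By \autoref{thm:T},
$$A=\bo+\sum_{j}(\lambda/2)^j\sum_{I\in\binom{[N]}{j}}f_IM_I .$$
Each $M_i$ is off-diagonal, so a product of $j$ of them is diagonal for $j$ even and off-diagonal for $j$ odd. Consequently:

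1. $A_{11}$ and $A_{22}$ are even polynomials in $\lambda$ of degree at most $N$, each with constant term $1$.
2. $A_{12}$ and $A_{21}$ are odd polynomials in $\lambda$ of degree at most $N$.

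\emph{Counting.} Multiplying by $\sigma_1$ swaps the columns of $A$, so $\tr(A\sigma_1)=A_{12}+A_{21}$. This is an odd polynomial of degree at most $N$, with coefficients of $\lambda^1,\lambda^3,\dots$, which gives $\lceil N/2\rceil$ coefficients.

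Next, $\det A=A_{11}A_{22}-A_{12}A_{21}$ is an even polynomial of degree $2N$. Its constant term is $1$, so it is trivial. The coefficients of $\lambda^2,\lambda^4,\dots,\lambda^{2N}$ remain, which gives $N$ coefficients.

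The total is $N+\lceil N/2\rceil$. This equals $3N/2$ for $N$ even and $(3N+1)/2$ for $N$ odd, as claimed.

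The only delicate point is the meaning of the count. It counts the generically nonzero coefficients, for instance the top coefficient $(\lambda/2)^{2N}f_{[N]}^2\det M_{[N]}$ from \eqref{eq:T11as}. It does not assert their functional independence, which I would not attempt to prove here.
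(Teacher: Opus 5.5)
Your proposal is correct and follows the paper's argument. Like the paper, you split $A$ into its diagonal (even-$j$) and off-diagonal (odd-$j$) parts, read off $\det A=1+P_2\lambda^2+\dots+P_{2N}\lambda^{2N}$ and $\tr(A\sigma_1)$ as an odd polynomial of degree at most $N$, and count $N+\lceil N/2\rceil$ coefficients. The one difference is that you spell out the conservation step (trace of a commutator and Jacobi's formula applied to $L_t=[J_+,L]$), which the paper leaves implicit as a consequence of the Lax equation in Proposition~\ref{prop:AB}.
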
 
\begin{proof} 

Recall the formula for $A$:
$$
A=\mathbf{I}+\sum_{j=1}^N\big(\frac \lambda2 \big)^{j} \sum_{
I \in \binom{[N]}{j}}  f_I M_I
$$
Let us separate the even (diagonal) terms from the odd (off diagonal) terms. We emphasize that the products below alternate between $m$'s and $n$'s, for each index in $I$. For example, for the set $I=\lbrace 1,3,7,10\rbrace$, $m_{i_j}n_{i_{j-1}}...n_{i_1}=m_{10}n_{7}m_{3}n_1$
\begin{align*}
    M_I=\begin{cases}
    \begin{bmatrix}
    n_{i_j}m_{i_{j-1}}...m_{i_1} & 0\\
    0 & m_{i_j}n_{i_{j-1}}...n_{i_1}
    \end{bmatrix}
    \hspace{0.9cm} \text{if} \hspace{0.15cm} |I| \hspace{0.15 cm}\text{is}\hspace{0.15cm}\text{even}, \\\\
    \begin{bmatrix}
    0 & n_{i_j}m_{i_{j-1}}...n_{i_1}\\
    m_{i_j}n_{i_{j-1}}...m_{i_1} & 0
    \end{bmatrix}
    \hspace{0.9cm} \text{if} \hspace{0.15cm} |I| \hspace{0.15cm}\text{is}\hspace{0.15cm}\text{odd}. 
    \end{cases}
\end{align*}
With this, the determinant becomes
\begin{align*}
    \det(A)&=\det(\mathbf{1}+\sum\limits_{\substack{ 2\leq j\leq N \\ j \hspace{0.1cm} \text{even}}} (\frac{\lambda}{2})^j\sum_{I\in \binom{[N]}{j}}f_I M_I + \sum\limits_{\substack{1\leq j\leq N \\ j \hspace{0.1cm} \text{odd}}} (\frac{\lambda}{2})^j\sum_{I\in \binom{[N]}{j}}f_I M_I)\\
    &=\det (\mathbf{1}+\sum\limits_{\substack{ 2\leq j\leq N \\ j \hspace{0.1cm} \text{even}}} (\frac{\lambda}{2})^j\sum_{I\in \binom{[N]}{j}}f_I M_I) +\det ( \sum\limits_{\substack{1\leq j\leq N \\ j \hspace{0.1cm} \text{odd}}} (\frac{\lambda}{2})^j\sum_{I\in \binom{[N]}{j}}f_I M_I)\\
    &=(1+\sum\limits_{\substack{2\leq j\leq N \\ j \hspace{0.1cm} \text{even}}} (\frac{\lambda}{2})^j\sum_{I\in \binom{[N]}{j}}f_I [n_{i_j}m_{i_{j-1}}...m_{i_1}])(1+\sum\limits_{\substack{2\leq k\leq N \\ k \hspace{0.1cm} \text{even}}} (\frac{\lambda}{2})^k\sum_{I\in \binom{[N]}{k}}f_I [m_{i_k}n_{i_{k-1}}...n_{i_1}])\\
    & -(\sum\limits_{\substack{ 1\leq j\leq N \\ j \hspace{0.1cm} \text{odd}}} (\frac{\lambda}{2})^j\sum_{I\in \binom{[N]}{j}}f_I [m_{i_j}n_{i_{j-1}}...m_{i_1}])(\sum\limits_{\substack{1\leq k\leq N \\ k \hspace{0.1cm} \text{odd}}} (\frac{\lambda}{2})^k\sum_{I\in \binom{[N]}{k}}f_I [n_{i_k}m_{i_{k-1}}...n_{i_1}]) \\
    \\
    &= 1+P_2\lambda^2+P_4\lambda^4+...+P_{2N}\lambda^{2N}
\end {align*}
Discarding the zero order term, the polynomial consists of the even powers of $\lambda$ from $\lambda^2$ up to $\lambda^{2N}$. This yields $N$ conserved coefficients for both even and odd $N$.
The trace $\mathrm{tr}(A\sigma_1)$, on the other hand, consists of odd powers of $\lambda$ less than or equal to $N$:
\begin{align*}
    \mathrm{tr}(A\sigma_1)&=(\sum\limits_{\substack{1\leq j\leq N \\ j \hspace{0.1cm} \text{odd}}} (\frac{\lambda}{2})^j\sum_{I\in \binom{[N]}{j}}f_I [n_{i_j}m_{i_{j-1}}...n_{i_1}])+(\sum\limits_{\substack{1\leq k\leq N \\ k \hspace{0.1cm} \text{odd}}} (\frac{\lambda}{2})^k\sum_{I\in \binom{[N]}{k}}f_I [m_{i_k}n_{i_{k-1}}...m_{i_1}])\\
    &=\begin{cases}
     Q_1\lambda+Q_3\lambda^3+...+Q_{N}\lambda^{N}
     \hspace{0.5cm} \text{if} \hspace{0.15cm} \text{N} \hspace{0.15cm}\text{is}\hspace{0.15cm}\text{odd}\\
     Q_1\lambda+Q_3\lambda^3+...+Q_{N-1}\lambda^{N-1}
     \hspace{0.5cm} \text{if} \hspace{0.15cm} \text{N} \hspace{0.15cm}\text{is}\hspace{0.15cm}\text{even}
    \end{cases}
\end{align*}
This yields $\frac{N}{2}$ conserved coefficients in the even case, and $\frac{N+1}{2}$ conserved coefficients in the odd case, which completes the proof.
\end{proof} 
\begin{example} 
Let us consider $N=2$.  The explicit form of $A$ was computed in \autoref{example:N=2}.  
Since $\tr{(A\sigma_1)}$ is a constant of motion, 
\begin{equation} \label{eq:I1} 
I_1=n_1+n_2+m_1+m_2, 
\end{equation}  
is a constant of motion as observed.

The determinant $\det(A)$ is another source of constants of motion.  
The outcome of a simple computation is
\begin{equation*} 
\det(A)=1-\big(n_1m_1+n_2m_2 +(m_1n_2+n_1m_2)e^{-2(x_2-x_1)} \big) \big(\frac \lambda2\big)^2+ \big(1-e^{-2(x_2-x_1)}\big)^2 m_2m_1n_2n_1 \big(\frac\lambda2\big)^4,  
\end{equation*} 
giving two additional constants of motion: 
\begin{equation} \label{eq:I2I3}
I_2=n_1m_1+n_2m_2 +(m_1n_2+n_1m_2)e^{-2(x_2-x_1)}  , \qquad I_3=\big(1-e^{-2(x_2-x_1)}\big)^2 m_2m_1n_2n_1.  
\end{equation}

%Let now review the equations of motion and the notation.  
%We have 
%\begin{subequations}\label{eq:CH2}
%\begin{align} 
%m_t=(um)_x+(u_x -v)m, & \qquad n_t=(un)_x+(u_x+v)n, \\
%u_{xx}-4u=m+n, & \qquad v_x=n-m. 
%\end{align} 
%\end{subequations}
%We use the following inversion formulas for $u$ and $v$ in terms of $m$ and $n$: 
%\begin{subequations} \label{eq:uv}
%\begin{align} 
%u(x,t)&=-\frac14 \int_\R e^{-2\abs{x-y}} [m(x,t)+n(x,t)] dy, \\
%v(x,t)&=\frac12 \int_\R \sgn(x-y) [n(y, t)-m(y, t)] dy. 
%\end{align} 
%\end{subequations} 
%In the case of discrete measures $m=\sum_{j=1}^N m_j(t)\delta_{x_j(t)}, \, n=\sum_{j=1}^N n_j(t)\delta_{x_j(t)}$, 
We conclude this section with a few general statements valid for measures $m,n$ supported on $N$ sites $x_j(t)$.  The explict form of the equation of motion follows from 
\eqref{eq:dotxj}, \eqref{eq:dotmjnj}, and reads:
\begin{subequations}
\begin{align} 
\dot x_j&=-u(x_j)=\frac14 \sum_{k=1}^N (m_k+n_k) e^{-2\abs{x_j-x_k}},\label{eq:dotx} \\
\dot m_j&=m_j (u_x(x_j)-v(x_j))=\frac12 m_j  \sum_{k=1}^N \sgn(x_j-x_k) \big[ (m_k+n_k)e^{-2\abs{x_j-x_k}}-(n_k-m_k)\big] , \label{eq:dotm}\\
\dot n_j&=n_j (u_x(x_j)+v(x_j))=\frac12 n_j  \sum_{k=1}^N \sgn(x_j-x_k) \big[ (m_k+n_k)e^{-2\abs{x_j-x_k}}+(n_k-m_k)\big].  \label{eq:dotn} 
\end{align} 
\end{subequations} 
The following two constants of motion are essential to the preliminary analysis; their status as constants of motion can be verified directly from the equations of motion and also follows from \autoref{thm:T}.  We denote by $I_{Top} $ the highest coefficient in 
$A$, computed in  \eqref{eq:T11as}.  
\begin{proposition} 

\begin{equation} \label{eq:I1IT} 
I_1=\sum_{j=1}^N (m_j+n_j), \qquad I_{Top} =\prod_{j=1}^Nm_jn_j \prod_{k=1}^{N-1} (1-e^{-2(x_{k+1}-x_k)})^2  
\end{equation} 
are constant.  
\end{proposition}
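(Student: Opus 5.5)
We prove the two conservation laws separately: $I_1$ by differentiating directly, and $I_{Top}$ either through the isospectrality of $A$ or by a direct computation.

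\emph{Conservation of $I_1$.} Write $s_k=m_k+n_k$ and $d_k=n_k-m_k$. Adding \eqref{eq:dotm} and \eqref{eq:dotn} gives
\[
\dot m_j+\dot n_j=\tfrac12\sum_k \sgn(x_j-x_k)\bigl[s_js_ke^{-2\abs{x_j-x_k}}+d_jd_k\bigr].
\]
The bracket is symmetric under $j\leftrightarrow k$, while $\sgn(x_j-x_k)$ is antisymmetric. Summing over $j$ therefore gives $\dot I_1=0$. The diagonal terms $k=j$ cause no trouble, since $\sgn(0)=0$ under the averaging convention stated after \eqref{eq:dotmjnj}.

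\emph{Conservation of $I_{Top}$ via isospectrality.} By \autoref{cor:det/trace A}, every coefficient of $\det A$ is conserved. By \eqref{eq:T11as}, the top coefficient of $\det A$ equals $2^{-2N}f_{[N]}^2\det M_{[N]}$. Each factor $M_j$ has determinant $-m_jn_j$, so $\det M_{[N]}=(-1)^N\prod_j m_jn_j$. This product is exactly $\pm 2^{2N}I_{Top}$, which proves the claim.

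\emph{Conservation of $I_{Top}$ directly.} As a self-contained check, we would order the sites $x_1<\dots<x_N$ and compute $\frac{d}{dt}\log I_{Top}$. Summing \eqref{eq:dotm} and \eqref{eq:dotn} gives
\[
\frac{d}{dt}\log(m_jn_j)=\sum_k \sgn(x_j-x_k)\,s_ke^{-2\abs{x_j-x_k}},
\]
because the $v$-terms cancel between the $m_j$ and $n_j$ equations. Summing over $j$ yields
\[
\sum_{j<k}(s_j-s_k)e^{-2(x_k-x_j)}.
\]
For the gap factors we use \eqref{eq:dotx}:
\[
\frac{d}{dt}\log\bigl(1-e^{-2(x_{k+1}-x_k)}\bigr)^2=\frac{4e^{-2\Delta_k}}{1-e^{-2\Delta_k}}(\dot x_{k+1}-\dot x_k),
\qquad \Delta_k=x_{k+1}-x_k .
\]
The velocity difference $\dot x_{k+1}-\dot x_k$ is a combination of the terms $s_ie^{-2\abs{\cdot}}$. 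The remaining task is to show that these contributions cancel the sum above. This telescoping is the main obstacle: for general $N$ it requires organizing the contributions by the interval each pair $(j,k)$ straddles. I would first verify the cases $N=2,3$ by hand, then attempt an induction. The spectral argument above avoids this bookkeeping entirely, so the isospectral route is the one I would present, with the direct computation as a consistency check only.
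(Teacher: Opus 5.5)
Your argument is correct and follows the route the paper itself indicates: the paper gives no separate proof, remarking only that both quantities can be checked directly from the equations of motion or read off from \autoref{thm:T}, with $I_{Top}$ being (up to a constant factor) the top coefficient of $\det A$ in \eqref{eq:T11as}, which is conserved by \autoref{cor:det/trace A}. Two minor points. First, that coefficient is $(-1)^N2^{-2N}I_{Top}$, not $\pm 2^{2N}I_{Top}$; this is harmless, since any fixed nonzero multiple of a conserved quantity is conserved. Second, the direct check you left open closes immediately: $\dot x_{k+1}-\dot x_k$ carries the factor $1-e^{-2(x_{k+1}-x_k)}$, after which the gap terms sum to $\sum_{j<k}(s_k-s_j)e^{-2(x_k-x_j)}$ and cancel your mass term exactly.
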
 
\begin{lemma} \label{lem:Lemma1} 
\mbox{}
Assume all masses $m_j(0), n_j(0), 1\leq j  \leq N,$ are positive and all positions are distinct $x_1(0)<x_2(0)<\cdots< x_N(0)$.  Then 
\begin{enumerate} 
\item Masses (momenta) stay positive. 
\item Masses $m_j, n_j$ are bounded above by $I_1$ and all are bounded away from zero.  
\item Peakons move to the right. 
\item Peakons cannot collide (i.e., for no time $t$, $x_k(t)= x_{k+1}(t)$).  In particular, they remain ordered $x_1(t)<x_2(t)<\cdots < x_N(t)$.   
\end{enumerate} 
\end{lemma}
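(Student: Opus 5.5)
We argue on the maximal interval of existence $[0,T^*)$ of the ODE system \eqref{eq:dotx}--\eqref{eq:dotn}, along which the positions stay ordered. Suppose the ordering $x_1<\dots<x_N$ and positivity of all masses hold on $[0,T^*)$; everything else follows from these two facts together with the conserved quantities $I_1$ and $I_{Top}$ of \eqref{eq:I1IT}. We therefore first establish positivity and non-collision, and then derive the bounds and the direction of motion.

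\emph{Positivity.} Equations \eqref{eq:dotm} and \eqref{eq:dotn} are linear in $m_j$ (resp.\ $n_j$), namely $\dot m_j = m_j\,g_j(t)$ with $g_j$ continuous. Hence $m_j(t)=m_j(0)\exp\int_0^t g_j$, and $m_j$ stays positive as long as the solution exists. The same holds for $n_j$. This step needs no ordering assumption.

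\emph{Upper bounds and rightward motion.} Since all masses are positive and $I_1=\sum_j(m_j+n_j)$ is constant, each $m_j,n_j\le I_1$. By \eqref{eq:dotx}, $\dot x_j=\frac14\sum_k(m_k+n_k)e^{-2|x_j-x_k|}\ge \frac14(m_j+n_j)>0$, so every peakon moves to the right. The boundedness of the right-hand sides then implies that the solution cannot blow up in finite time. The only possible obstruction to global existence is therefore a collision, where the $\sgn$ terms become ambiguous.

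\emph{Non-collision and lower bounds.} This is the key step, and it uses $I_{Top}$. Suppose $t^*$ is the first time at which $x_{k+1}(t)-x_k(t)\to0$ for some $k$. Then the factor $(1-e^{-2(x_{k+1}-x_k)})^2$ tends to $0$. All the other factors in $I_{Top}$ are bounded on $[0,t^*)$: the masses are bounded by $I_1$, and the other gap factors are at most $1$. Hence $I_{Top}(t)\to0$, contradicting $I_{Top}(t)=I_{Top}(0)>0$, which is positive because the initial masses are positive and the initial positions distinct. So no collision occurs, the ordering persists, and the solution is global.

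The same identity gives the uniform lower bound. Write
\[
m_j=\frac{I_{Top}}{\prod_{i\neq j}m_i\prod_i n_i\prod_k(1-e^{-2(x_{k+1}-x_k)})^2}.
\]
In the denominator, every mass is at most $I_1$ and every gap factor is at most $1$. Therefore $m_j\ge I_{Top}/I_1^{2N-1}>0$, and the same bound holds for each $n_j$. This also makes the non-collision argument quantitative: each gap factor satisfies $(1-e^{-2(x_{k+1}-x_k)})^2\ge I_{Top}/I_1^{2N}$, so the gaps are bounded below uniformly in time.

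The main subtlety is the bootstrap structure. The bounds use positivity and ordering, and ordering uses the bounds. We resolve this by working on the maximal interval on which ordering holds and showing that the uniform gap lower bound $\epsilon$ prevents that interval from ending. At its endpoint the gaps would still be at least $\epsilon$, the vector field is smooth there, and so the solution extends.
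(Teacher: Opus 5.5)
Your proof is correct and follows essentially the same route as the paper. Positivity comes from the linear form of the mass equations, the upper bounds from $I_1$, and both the lower bounds on the masses and non-collision from the constancy of $I_{Top}$; you make these bounds explicit ($m_j,n_j\ge I_{Top}/I_1^{2N-1}$ and a uniform gap bound) and close the bootstrap on the maximal interval of ordered existence, both of which the paper leaves implicit.
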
 
\begin{proof} 
By \eqref{eq:dotm} and \eqref{eq:dotn}, the masses cannot change sign.  Moreover, since the sum of masses equals $I_1$, each $m_j$ and $ n_j$ is bounded from above by $I_1$.  Likewise, 
the masses are bounded away from $0$.  Indeed, suppose there exists a sequence $t_n$ for which, say, $m_j(t_n)$ 
converges to $0$ (zero is an accumulation point).  Then, at least one $m_k(t_n)$ or $n_k(t_n)$ would have to converge to $\infty$ to respect the constancy of $I_{Top}$.  Since the masses are bounded from above, this cannot happen.  
Since the masses are bounded both from above and below, there exist $0<A$ and $0<B$ such that $0<A<\dot x_j(t)<B$, which proves that peakons move to the right.  
Moreover, by using the same argument as the one used for the boundedness of masses, but this time applied to positions, if at some time $t$, 
$x_{k}(t)=x_{k+1}(t)$, then, to preserve $I_{Top}$, one or more 
masses would have to become unbounded, which is prohibited.  
\end{proof}

\end{example} 
 \section{Two peakons; preliminaries} \label{sec:two peakons I}
 The remainder of the paper addresses the dynamics of two peakons in the Clifford-based extension of the CH equation.  In the Conclusions, we will make a qualitative comparison of the 
 dynamics of two peakons in the CH equation with the current work.  
 
 \subsection{ Peakon equations for \texorpdfstring{$N=2$}{N=2}} 
 
The dynamical system we are considering consists of six equations
\begin{subequations} \label{eq:DS}
\begin{align} 
\dot x_1&=\frac14\big[(m_1+n_1)+(m_2+n_2) e^{-2\abs{x_1-x_2}}\big], \label{eq:x1}\\
\dot x_2&=\frac14\big[(m_1+n_1)e^{-2\abs{x_1-x_2}}+(m_2+n_2) \big], \label{eq:x2}\\
\dot n_1&=\frac12 n_1\big[ -(m_2+n_2)e^{-2\abs{x_1-x_2}}-(n_2-m_2) \big], \label{eq:n1}\\
\dot m_1&=\frac12 m_1\big[ -(m_2+n_2)e^{-2\abs{x_1-x_2}}+(n_2-m_2) \big], \label{eq:m1}\\
\dot n_2&=\frac12 n_2\big[ (m_1+n_1)e^{-2\abs{x_1-x_2}}+(n_1-m_1)\big], \label{eq:n2}\\
\dot m_2&=\frac12 m_2\big[ (m_1+n_1)e^{-2\abs{x_1-x_2}}-(n_1-m_1) \label{eq:m2} \big],   
\end{align} 
\end{subequations} 
for variables $x_1,x_2, m_1, m_2, n_1, n_2$ subject to the conditions $m_j(0)>0, n_j(0)>0$ and the ordering condition $x_1(0)<x_2(0)$.  By \autoref{lem:Lemma1}, these conditions are preserved under the evolution governed by 
\eqref{eq:DS}.  

 We observe that the evolution of positions is formally the same as in the CH case for amplitudes $m_j+n_j$. This observation remains valid for all $N$. However, the evolution of the masses (or momenta in the CH theory) depends both on $m_j+n_j$ and on $m_j-n_j$, making it significantly different from the CH theory.  

We already identified in the previous section three constants of motion, $I_1, I_2, I_3$.  We recall them for convenience: 
\begin{subequations} 
\begin{align}
I_1&=m_1+n_1+m_2+n_2,\label{eq:I1bis} \\
 I_2&=n_1m_1+n_2m_2 +(m_1n_2+n_1m_2)e^{-2(x_2-x_1)}, \label{eq:I2} \\
 I_3&=\big(1-e^{-2(x_2-x_1)}\big)^2 m_2m_1n_2n_1.   \label{eq:I3}
\end{align}
\end{subequations}

\subsection{ Lax evolution; explicit computation  of \texorpdfstring{$v_+(t)$ for $N=2$}{v+ for N=2} }\label{sec:pendulum} 
Recall that for $N=2$
\begin{equation} 
A=\begin{bmatrix} 1+(\frac \lambda2)^2{\big( 1-e^{-2(x_2-x_{1})}\big)}n_2m_1& \frac \lambda2 (n_1+n_2)\\
\frac \lambda2 (m_1+m_2)& 1+(\frac \lambda2)^2{ \big( 1-e^{-2(x_2-x_{1})}\big)}m_2n_1\end{bmatrix}
\end{equation} 
We  introduce a particular parameterization for the $\lambda^2$ terms 
in $A$, taking advantage of the fact that the product of these terms equals $I_3$ (see \eqref{eq:I3}), by setting 
\begin{equation}\label{eq:def y}
{\big( 1-e^{-2(x_2-x_{1})}\big)}n_2m_1=\sqrt{I_3} e^y, \qquad 
{\big( 1-e^{-2(x_2-x_{1})}\big)}m_2n_1=\sqrt{I_3} e^{-y}
\end{equation}

Moreover, we normalize $A\sigma_1$ by defining $S=A\sigma_1-\frac12 \tr{A\sigma_1} \mathbf{1}$, 
which in the above parametrization reads: 
\begin{equation} \label{eq:S2} 
S=\begin{bmatrix}\frac \lambda2  v_+& 1+(\frac \lambda2)^2\sqrt{I_3} e^y\\
1+(\frac \lambda2)^2\sqrt{I_3} e^{-y}& -\frac \lambda2 v_+\end{bmatrix}, 
\end{equation}
where $v_+$ was defined in \eqref{eq:v+}.  
Note that $S$ satisfies the same Lax equation as $A \sigma_1$ (see \autoref{prop:AB}), 
namely, $\dot S=[J_+,S]$, and thus we obtain the following proposition. 

\begin{proposition} \label{prop:evolv+} 
The pair $(y,v_+)$ satisfies
\begin{equation} \label{eq:dotyv}
\dot y=v_+,\qquad \dot v_+=-\sqrt{I_3}\,\sinh y,
\end{equation} 
with Hamiltonian
\begin{equation} \label{eq:H}
H(y,v_+)=\tfrac12 v_+^2+\sqrt{I_3}\,\cosh y.  
\end{equation}
\end{proposition}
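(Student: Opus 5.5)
The plan is to read off both equations directly from the Lax equation $\dot S=[J_+,S]$ by comparing coefficients of powers of $\lambda$ entry by entry. The Hamiltonian statement is then a one-line check.

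First, justify the Lax equation for $S$. By \eqref{eq:ALax} in \autoref{prop:AB}, $\frac{d}{dt}(A\sigma_1)=[J_+,A\sigma_1]$. By \autoref{cor:det/trace A}, $\tr(A\sigma_1)$ is constant. Hence subtracting $\frac12\tr(A\sigma_1)\mathbf 1$ changes neither side, so $\dot S=[J_+,S]$ with $J_+=\frac{v_+}{2}\sigma_3+\frac1\lambda\sigma_1$.

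Next, check that the parametrization \eqref{eq:def y} is legitimate. By \autoref{lem:Lemma1} all masses are positive and $x_1<x_2$, so both products in \eqref{eq:def y} are positive. Their product is the constant $I_3>0$, so $y(t)$ is well defined and smooth. In particular $\frac{d}{dt}\bigl(\sqrt{I_3}e^{\pm y}\bigr)=\pm\sqrt{I_3}e^{\pm y}\dot y$.

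Now write $S=\frac\lambda2 v_+\sigma_3+P$, where $P=\begin{psmallmatrix}0&p\\ q&0\end{psmallmatrix}$, $p=1+(\frac\lambda2)^2\sqrt{I_3}e^{y}$ and $q=1+(\frac\lambda2)^2\sqrt{I_3}e^{-y}$. Using the elementary commutators
\[
[\sigma_3,P]=\begin{psmallmatrix}0&2p\\-2q&0\end{psmallmatrix},\qquad
[\sigma_1,\sigma_3]=\begin{psmallmatrix}0&-2\\2&0\end{psmallmatrix},\qquad
[\sigma_1,P]=(q-p)\sigma_3,
\]
one gets
\[
[J_+,S]=\begin{pmatrix}\frac1\lambda(q-p)& v_+(p-1)\\ -v_+(q-1)& -\frac1\lambda(q-p)\end{pmatrix}.
\]
The constant terms $\mp v_+$ coming from $[\sigma_1,\sigma_3]$ cancel the $1$'s in $p$ and $q$. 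This cancellation is the only delicate bookkeeping point; it is what makes the result consistent in powers of $\lambda$.

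Compare the $(1,2)$ entries. The left side is $(\frac\lambda2)^2\sqrt{I_3}e^y\dot y$ and the right side is $v_+(\frac\lambda2)^2\sqrt{I_3}e^y$, so $\dot y=v_+$. The $(2,1)$ entry gives the same relation, which serves as a consistency check.

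Compare the diagonal entries. The left side is $\frac\lambda2\dot v_+$ and the right side is $\frac1\lambda(q-p)=\frac\lambda4\sqrt{I_3}(e^{-y}-e^{y})$. Hence $\dot v_+=-\sqrt{I_3}\sinh y$.

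Finally, with $H=\frac12v_+^2+\sqrt{I_3}\cosh y$ we have $\partial H/\partial v_+=v_+=\dot y$ and $-\partial H/\partial y=-\sqrt{I_3}\sinh y=\dot v_+$. So \eqref{eq:dotyv} is Hamilton's equations for $H$ in the canonical pair $(y,v_+)$.

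There is no real obstacle here. The main care needed is the sign in $[\sigma_1,P]$ and the cancellation noted above.
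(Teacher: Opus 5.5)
Your proof is correct and takes the same route as the paper. The paper also writes $\dot S=[J_+,S]$ entrywise, obtaining $\dot S_{11}=(S_{21}-S_{12})/\lambda$, $\dot S_{12}=v_+S_{12}-2S_{11}/\lambda$ and $\dot S_{21}=-v_+S_{21}+2S_{11}/\lambda$, which match your commutator, and then substitutes the parametrization \eqref{eq:S2}; your justifications (constancy of $\tr(A\sigma_1)$ to pass from $A\sigma_1$ to $S$, and positivity of the masses so that $y$ is well defined) only fill in details the paper leaves implicit.
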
 
\begin{proof} 
Computing the Lax equation, we get: 
\begin{align} 
\dot S_{11}=\frac{S_{21}-S_{12}}{\l}, \\
\dot S_{12}=v_+ S_{12}-\frac{2 S_{11}}{\l}, \\
\dot S_{21}=-v_+S_{21}+\frac{2 S_{11}}{\l}.  
\end{align} 

Using the parametrization from \eqref{eq:S2} we obtain the claim.  
\end{proof}

We immediately have 
\begin{lemma} \label{lem:ydoubledot} 
 $y$ satisfies 
 
\begin{equation} \label{eq:hpendulum}
\ddot y+\sqrt{I_3} \sinh y=0
\end{equation} 
\end{lemma}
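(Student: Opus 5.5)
This is immediate from \autoref{prop:evolv+}: differentiate the first equation of \eqref{eq:dotyv} and substitute the second. By \autoref{prop:evolv+}, $\dot y = v_+$ and $\dot v_+=-\sqrt{I_3}\sinh y$. We also know $y$ is well defined and differentiable along the flow. Indeed, by \autoref{lem:Lemma1} the masses stay positive and the positions stay strictly ordered. Hence both left-hand sides in \eqref{eq:def y} are strictly positive smooth functions of $t$. Since $I_3>0$ is constant, $y$ is then a smooth function given by
\[
y=\tfrac12\log\frac{n_2m_1}{m_2n_1}.
\]
In particular $\dot y=v_+$ is differentiable. Differentiating once more and using the second equation gives
\[
\ddot y=\dot v_+=-\sqrt{I_3}\,\sinh y,
\]
which is \eqref{eq:hpendulum}.

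If one prefers not to rely on the Lax computation in \autoref{prop:evolv+}, one can check $\dot y=v_+$ directly from \eqref{eq:DS}. The factor $1-e^{-2(x_2-x_1)}$ cancels in the ratio above, so
\[
2\dot y=\frac{\dot n_2}{n_2}+\frac{\dot m_1}{m_1}-\frac{\dot m_2}{m_2}-\frac{\dot n_1}{n_1}.
\]
In this combination the $e^{-2|x_1-x_2|}$ terms cancel pairwise, and what remains is $(n_1-m_1)+(n_2-m_2)=2v_+$.

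For the second equation one computes
\[
\dot v_+=\tfrac12\,(\dot n_1+\dot n_2-\dot m_1-\dot m_2)
\]
from \eqref{eq:DS}. Then one rewrites $\sqrt{I_3}\sinh y=\tfrac12(1-e^{-2(x_2-x_1)})(n_2m_1-m_2n_1)$ using \eqref{eq:def y} and compares the two expressions. This comparison is the only step with any bookkeeping, and I expect it to be the main (mild) obstacle. The terms must match exactly, with the terms not multiplied by $E$ combining as $\tfrac12(n_2m_1-m_2n_1)$ and the $E$-terms as $-\tfrac12 E(n_2m_1-m_2n_1)$; it is routine, but every term has to be tracked.
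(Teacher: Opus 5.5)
Your proposal is correct and matches the paper's argument: the paper states the lemma as an immediate consequence of Proposition~\ref{prop:evolv+}, obtained by differentiating $\dot y=v_+$ and substituting $\dot v_+=-\sqrt{I_3}\sinh y$. Your independent check from \eqref{eq:DS} also goes through; the only cosmetic point is that in $\dot v_+$ itself the non-$E$ part is $-\tfrac12(n_2m_1-m_2n_1)$ and the $E$ part is $+\tfrac12E(n_2m_1-m_2n_1)$, whereas the signs you quote are those of $\sqrt{I_3}\sinh y=-\dot v_+$.
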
 
%\todo{ I'm not sure we should include the remark below. We'll keep it for now, but may remove it later.  } 
\begin{remark} 
We recall that the pendulum of length $l$ 
satisfies 
\begin{equation} 
\ddot \theta +\frac g l \sin \theta =0, 
\end{equation} 
so for this reason we refer to  \eqref{eq:hpendulum}  as a `` hyperbolic pendulum''.  
Since the solution of the pendulum can be expressed in terms of 
the Jacobi sine function $\sn$, we expect that the solution of the hyperbolic case will also be given in terms of the elliptic functions.  \end{remark}

\subsection{Hamiltonian system and energy normalization} \label{sec:hyperpendulum}
%\todo{ Perhaps, this should be an Appendix? } 
We now give a self-contained solution to the system \eqref{eq:dotyv}.  
Recall that we consider the Hamiltonian system
\begin{equation}\label{eq:sys}
\dot y=v_+(t),\qquad \dot v_+(t)=-\asqrt\,\sinh y(t),
\end{equation}
with Hamiltonian
\begin{equation*}
H(y,v_+)=\frac12 v_+^2+\asqrt\,\cosh y.
\end{equation*}
Hence the ``energy''
\begin{equation}\label{eq:Edef}
\mathcal E=\frac12 (\dot y)^2+\asqrt\,\cosh y
\end{equation}
remains constant along solutions.
Since $\cosh y\ge 1$, we have the bound $\mathcal E\ge \asqrt$, and we choose to work with the following parametrization: 
\begin{equation}\label{eq:Enorm}
\frac12 (\dot y)^2+\asqrt\,\cosh y=\asqrt\,(1+2k^2),\qquad k\ge 0.
\end{equation}

\subsubsection*{Step 1: Reduction to elliptic integrals}

Let $a:=\asqrt>0$. From \eqref{eq:Enorm} we get
\begin{equation}\label{eq:ydot2}
(\dot y)^2=2a\bigl(1+2k^2-\cosh y\bigr).
\end{equation}
Let us use the identity
\begin{equation}\label{eq:coshid}
\cosh y=1+2\sinh^2\Bigl(\frac{y}{2}\Bigr),
\end{equation}
and introduce the half--variable
\begin{equation}\label{eq:u}
 u:=\frac{y}{2}.
\end{equation}
Then $\dot y=2\dot u$, and \eqref{eq:ydot2} becomes
\begin{align}
4\dot u^2
&=2a\Bigl(1+2k^2-\bigl(1+2\sinh^2 u\bigr)\Bigr)
=4a\bigl(k^2-\sinh^2 u\bigr),
\end{align}
that is,
\begin{equation}\label{eq:udot2}
\dot u^2=a\bigl(k^2-\sinh^2 u\bigr).
\end{equation}
On any interval where $|\sinh u|\le k$ (which is automatic along a trajectory at energy level \eqref{eq:Enorm} (see \eqref{eq:udot2})), we may set
\begin{equation}\label{eq:subs}
\sinh u = k\sin\phi.
\end{equation}
Then
\begin{equation}\label{eq:cosh}
\cosh u=\sqrt{1+\sinh^2 u}=\sqrt{1+k^2\sin^2\phi}.
\end{equation}
Differentiating \eqref{eq:subs} gives
\begin{equation}\label{eq:udot}
\dot u=\frac{k\cos\phi}{\sqrt{1+k^2\sin^2\phi}}\,\dot\phi.
\end{equation}
On the other hand, \eqref{eq:udot2} yields
\begin{equation}\label{eq:udotE}
\dot u=\pm\sqrt{a}\,\sqrt{k^2-k^2\sin^2\phi}=\pm\sqrt{a}\,k\cos\phi.
\end{equation}
Combining \eqref{eq:udot} and \eqref{eq:udotE} (and cancelling the factor $k\cos\phi$ when it is nonzero), we obtain
\begin{equation}\label{eq:phidot}
\dot\phi=\pm\sqrt{a}\,\sqrt{1+k^2\sin^2\phi}.
\end{equation}
Thus
\begin{equation}\label{eq:ellint}
\int^{\phi(t)}\frac{d\theta}{\sqrt{1+k^2\sin^2\theta}}
=\pm\sqrt{a}\,(t-t_0).
\end{equation}
Recognizing this as an elliptic integral of the first kind with 
\emph{negative parameter} $m=-k^2$ \cite{AS64}[Section 16.10], namely
\begin{equation}\label{eq:Fdef}
F(\phi\mid m)=\int_0^{\phi}\frac{d\theta}{\sqrt{1-m\sin^2\theta}},
\qquad m=-k^2,
\end{equation}
we may write
\begin{equation}\label{eq:F}
F\bigl(\phi(t)\mid -k^2\bigr)=\pm\sqrt{a}\,(t-t_0).
\end{equation}

\subsubsection*{Step 2: Closed form solution via Jacobi elliptic functions}

Let $\mathrm{sn}(\cdot\mid m)$, $\mathrm{cn}(\cdot\mid m)$ denote the Jacobi elliptic functions with parameter $m$.
Taking the inverse of \eqref{eq:F} via the Jacobi amplitude, we have
\begin{equation}\label{eq:phi}
\sin\phi(t)=\mathrm{sn}\bigl(\pm\sqrt{a}\,(t-t_0)\mid -k^2\bigr).
\end{equation}
Substituting into \eqref{eq:subs} and recalling $u=y/2$, we obtain the explicit solution
\begin{equation}\label{eq:ysol}
\boxed{\;
 y(t)=2\,\arcsinh\Bigl(k\,\mathrm{sn}\bigl(\omega (t-t_0)\mid -k^2\bigr)\Bigr),
\qquad \omega=\Ithree^{1/4}.
\;}
\end{equation}

Differentiating \eqref{eq:ysol} and using 
\(
\frac{d}{dz}\mathrm{sn}(z\mid m)=\mathrm{cn}(z\mid m)\mathrm{dn}(z\mid m)
\)
along with 
\(
\mathrm{dn}^2(z\mid m)=1-m\,\mathrm{sn}^2(z\mid m)
\)
(and $m=-k^2$), one obtains a cancellation which yields the compact expression
\begin{equation}\label{eq:vsol}
\boxed{\;
 v_+(t)=\dot y(t)=2k\,\omega\,\mathrm{cn}\bigl(\omega (t-t_0)\mid -k^2\bigr).
\;}
\end{equation}
One checks directly that \eqref{eq:ysol}--\eqref{eq:vsol} satisfy the system \eqref{eq:sys} and conserve the energy level \eqref{eq:Enorm}.

\subsubsection*{Equilibrium case}
If $k=0$, then \eqref{eq:Enorm} forces $\mathcal E=\asqrt$, which occurs only at
\(y\equiv 0\) and \(v_+\equiv 0\).
Thus the equilibrium is the 
\emph{only} non-periodic trajectory.

\subsubsection*{Non-equlibrium case: period of motion}
For $k>0$, both $\mathrm{sn}(\cdot\mid -k^2)$ and $\mathrm{cn}(\cdot\mid -k^2)$ are periodic with real period $4K(-k^2)$, where
\begin{equation}
K(m)=F\Bigl(\frac{\pi}{2}\mid m\Bigr)
\end{equation}
is the complete elliptic integral of the first kind.
Therefore the solution \eqref{eq:ysol}--\eqref{eq:vsol} has period
\begin{equation}\label{eq:period}
T=\frac{4}{\omega}\,K(-k^2)=\frac{4}{\Ithree^{1/4}}\,K(-k^2),\qquad (k>0).
\end{equation}

\subsubsection*{Representative plots}

The plots below illustrate typical periodic solutions for $k>0$, and the equilibrium solution for $k=0$.
The figures were generated from \eqref{eq:ysol}--\eqref{eq:vsol} with $\Ithree=1$ (so $\omega=1$) and $t_0=0$.

\begin{figure}[h!]
\centering
\begin{subfigure}[t]{0.48\textwidth}
  \centering
  \includegraphics[width=\textwidth]{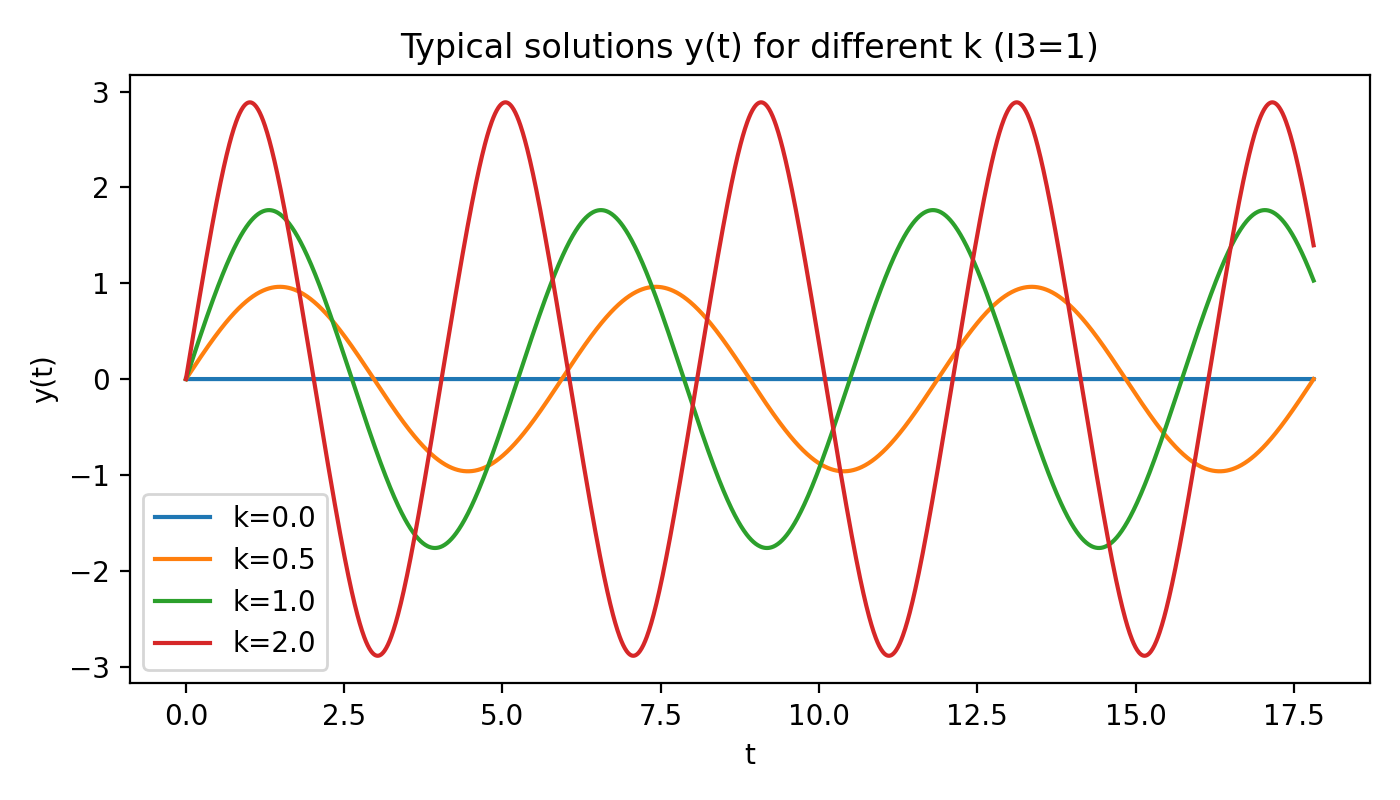}
  \caption{$y(t)=2\,\operatorname{arsinh}\bigl(k\,\mathrm{sn}(\omega t\mid -k^2)\bigr)$}
\end{subfigure}
\hfill
\begin{subfigure}[t]{0.48\textwidth}
  \centering
  \includegraphics[width=\textwidth]{KS_im/v_plot.png}
  \caption{$v_+(t)=2k\,\omega\,\mathrm{cn}(\omega t\mid -k^2)$}
\end{subfigure}
\caption{Typical solutions for several values of $k>0$ are periodic; for $k=0$ the solution is the equilibrium $(y,v_+)\equiv(0,0)$.}
\end{figure}
\begin{remark} Since $v_+=\frac12 (n_1-m_1+n_2-m_2)$ is now known, with the help of the conserved quantity $I_1$ (see \eqref{eq:I1}), 
 we can compute $n_1+n_2$, as well as $m_1+m_2$.  Similarly, because we know $y$, we can compute 
$(1-e^{-2(x_2-x_1)} )m_1n_2$ and $(1-e^{-2(x_2-x_1)})n_1m_2$, using $I_3$ (see  \eqref{eq:I3} and \eqref{eq:def y}).  
\end{remark} 

\section{Two peakons; separation and the limiting system} \label{sec:Dynamics} 
\subsection{Asymptotic separation in space}
{\bf Notation}: 
With two peakons labelled $1$ (left) and $2$ (right) we write
$$
x_1 < x_2,\qquad D := x_2-x_1>0,\qquad
E := e^{-2D}\in(0,1).
$$
Momenta and ``masses'' are denoted by 
$$
m_k(t)>0,\qquad n_k(t)>0,\qquad
s_k := m_k+n_k>0,\qquad
R := s_2-s_1, \quad k=1,2.  
$$
Differences between the two species are denoted by
$$
d_k := m_k-n_k,\qquad k=1,2.  
$$
%\subsection*{Field values at the peak positions and equations of motion}
%
%We record the field ($u$ and $v$) values at the peaks $ x_1$  and $ x_2$.  
%\begin{align*}
%u(x_1) &= -\tfrac14\bigl[s_1 + s_2\,E\bigr],
%&
%u(x_2) &= -\tfrac14\bigl[s_2 + s_1\,E\bigr],\\[4pt]
%u_x(x_1) &= \phantom-\,\tfrac12\,s_2\,E,
%&
%u_x(x_2) &= -\,\tfrac12\,s_1\,E,\\[4pt]
%v(x_1) &= -\,\tfrac12\,(n_2-m_2) = \tfrac12d_2,
%&
%v(x_2) &= \phantom-\,\tfrac12\,(n_1-m_1)=-\phantom-\,\tfrac12 d_1.
%\end{align*}
%
%\begin{subequations} 
%\begin{align} 
%&\dot x_1=\frac14(s_1+s_2 E), 
%&&\dot x_2=\frac14(s_1 E+s_2), \label{eq:x1x2}\\
%&\dot m_1=\frac{m_1}{2} (-s_2 E-d_2), 
%&&\dot n_1=\frac{n_1}{2}(-s_2E+d_2), \label{eq:m1n1}\\
%&\dot m_2=\frac{m_2}{2}(s_1 E+d_1), 
%&&\dot n_2=\frac{n_2}{2} (s_1 E-d_1).  \label{eq:m2n2}
%\end{align} 
%\end{subequations} 
%Equivalently,

We also note that by equations \eqref{eq:DS} we have 
\begin{equation} \label{eq:dotD}
\dot D = \tfrac14\,R\,(1-E), 
\end{equation} 
as well as
\begin{align}
\dot s_1 &= -\tfrac12 s_2E\,s_1 - \tfrac12 d_1 d_2, &
\dot d_1 &= -\tfrac12 s_2E\,d_1 - \tfrac12 d_2 s_1, \label{eq:SD1}\\
\dot s_2 &= +\tfrac12 s_1E\,s_2 + \tfrac12 d_1 d_2, &
\dot d_2 &= +\tfrac12 s_1E\,d_2 + \tfrac12 d_1 s_2. \label{eq:SD2}
\end{align}
We are now ready to prove that peakons \emph{scatter}, i.e., the distance $D(t)\rightarrow \infty$ as $t\rightarrow \infty$.  
\begin{theorem} \label{thm:gap}
Assume all masses $m_1(0),n_1(0), m_2(0), n_2(0)$ are positive and  $x_1(0)<x_2(0)$.  Then $E(t)\in L^1(0,\infty)$ and the distance between peakons $\lim_{t\rightarrow \infty} D(t)=\infty$.  
\end{theorem}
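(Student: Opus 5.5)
The plan is to find a quantity that is monotone along the flow and whose time derivative is exactly proportional to $E$. Integrating it will then control $\int_0^\infty E\,dt$. The natural candidate is the product $m_1 n_1$ of the left peakon's masses. Adding \eqref{eq:n1} and \eqref{eq:m1}, the coupling terms $\pm(n_2-m_2)$ cancel, and we obtain
\begin{equation*}
\frac{d}{dt}\ln(m_1 n_1)=\frac{\dot m_1}{m_1}+\frac{\dot n_1}{n_1}=-s_2 E .
\end{equation*}
Symmetrically, from \eqref{eq:n2} and \eqref{eq:m2} we get $\frac{d}{dt}\ln(m_2n_2)=s_1E$. Thus $m_1n_1$ is nonincreasing, $m_2n_2$ is nondecreasing, and the rates of change are weighted by $E$.

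Next we bound the masses. By \autoref{lem:Lemma1}, all masses are positive and bounded above by $I_1$. For a lower bound I would not rely on the qualitative statement in \autoref{lem:Lemma1} alone but make it quantitative using $I_3$ from \eqref{eq:I3}. Since $0<1-E<1$, we have
\begin{equation*}
m_1n_1\ge \frac{I_3}{m_2n_2(1-E)^2}\ge \frac{I_3}{I_1^2}.
\end{equation*}
Hence $\ln(m_1n_1)$ stays in a fixed bounded interval. The same argument gives $s_2\ge 2\sqrt{m_2n_2}\ge 2\sqrt{I_3}/I_1=:c>0$. Integrating the identity above from $0$ to $t$ then yields
\begin{equation*}
c\int_0^t E\,d\tau\le\int_0^t s_2E\,d\tau=\ln\big(m_1n_1(0)\big)-\ln\big(m_1n_1(t)\big)\le \ln\big(m_1n_1(0)\,I_1^2/I_3\big),
\end{equation*}
uniformly in $t$. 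Therefore $E\in L^1(0,\infty)$.

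To deduce $D\to\infty$, we use that $\dot E=-2E\dot D=-\tfrac12 E R(1-E)$ by \eqref{eq:dotD}. This derivative is bounded, since $|R|\le I_1$ and $0<E<1$. So $E$ is Lipschitz, hence uniformly continuous, and a nonnegative uniformly continuous $L^1$ function on $(0,\infty)$ tends to $0$ (Barbalat's lemma). Consequently $E(t)\to0$, which is exactly $D(t)\to\infty$.

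The main obstacle is spotting the cancellation in $\frac{d}{dt}\ln(m_1n_1)$; once that identity is in hand, the rest is routine. The only other point needing care is the uniform lower bound on $m_1n_1$ and on $s_2$, which the $I_3$ estimate above supplies.
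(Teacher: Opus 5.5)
Your proof is correct and is essentially the paper's argument. The paper integrates the mirror identity $\frac{d}{dt}\ln(m_2n_2)=s_1E$, which needs only the trivial upper bound on $m_2n_2$ together with the lower bound on $s_1$ from \autoref{lem:Lemma1}, and then gets $E\to0$ from the bound on $\dot E$ by a direct uniform-continuity argument that it also phrases as Barbalat's lemma; your version uses the decreasing quantity $m_1n_1$ with an explicit $I_3$-based lower bound, which is an equally valid and slightly more quantitative variant.
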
 
\begin{proof} 
We integrate \eqref{eq:m2} and \eqref{eq:n2} to get 
\begin{equation*} 
m_2(t)n_2(t)=m_2(0)n_2(0) e^{\int_0^t s_1(t')E(t') dt'}.
\end{equation*} 
Since masses are bounded and $s_1(t')E(t')>0$ the integral $\int_0^\infty s_1(t)E(t) dt<\infty$.  By  \autoref{lem:Lemma1}, $s_1$ is bounded away from $0$.  
Hence, $\int_0^\infty E(t) dt<\infty$.  We claim that $\lim_{t\rightarrow \infty } E(t)$ exists and must be $0$.  To prove that, we first observe 
that by \eqref{eq:dotD} 
\begin{equation*} 
\abs{\dot E}\leq \frac{I_1}{2}.  
\end{equation*} 
Suppose $\lim_{t\rightarrow} E(t)$ is not $0$.  Then,  there must exist a number $L>0$, and a sequence $\{t_j\}$ converging to $\infty$ such that 
$\lim_{j\rightarrow \infty} E(t_j)=L$.  Let us take $\epsilon>0$, which is sufficiently small so that 
$L-\epsilon>0$.  We can always assume that the sequence $t_j$ is increasing and $t_{j+1}-t_{j}> \frac{2(L-\epsilon)}{I_1}$.  For sufficiently large $n$, $E(t_n)>L-\epsilon$. 
Consider  now $t\in [t_n, t_{n}+\frac{(L-\epsilon)}{I_1}]$.  Then, by the Mean Value Theorem, 
\begin{equation} 
E(t)>L-\epsilon-\abs{\dot E(\xi)}\frac{(L-\epsilon)}{I_1}>L-\epsilon-\frac{I_1}{2} \frac{(L-\epsilon)}{I_1}=\frac{L-\epsilon}{2} >0.  
\end{equation} 
Thus 
\begin{equation*} 
\int_0^\infty E(t) dt\geq \sum_{j} \frac{(L-\epsilon)^2}{2I_1}=\infty, 
\end{equation*}
hence a contradiction.  
\end{proof} 

\begin{remark}[Alternative one-line argument via Barbalat's lemma]
From \eqref{eq:m2} and \eqref{eq:n2} we already have $\int_0^\infty s_1(t)E(t)\,dt<\infty$.
By ~\autoref{lem:Lemma1}, $s_1(t)\ge c_*>0$, hence $\int_0^\infty E(t)\,dt<\infty$.
Moreover, from \eqref{eq:dotD} we have $|\dot E(t)|=\tfrac12 E(t)(1-E(t))\,|R(t)|\le I_1/8$, so $E$ is uniformly continuous.
By \emph{Barbalat's lemma} (if $f\in L^1([0,\infty))$ and $f$ is uniformly continuous, then $f(t)\to 0$ as $t\to\infty$), we conclude
$E(t)\to 0$. Hence $D(t)=-\tfrac12\log E(t)\to \infty$. \footnote{See, e.g., H.\ K.\ Khalil, \emph{Nonlinear Systems}, 3rd ed., Prentice Hall, 2002, Lemma~4.3; or J.-J.\ E.\ Slotine and W.\ Li, \emph{Applied Nonlinear Control}, Prentice Hall, 1991, App.~A.}
\end{remark}
\begin{corollary} 
The limits $\lim _{t\rightarrow \infty} m_1(t)n_1(t)$ and \, $\lim_{t\rightarrow \infty} m_2(t)n_2(t)$ exist and are two positive numbers in the interval $(0, I_1^2)$.  
\end{corollary}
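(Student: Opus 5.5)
From \eqref{eq:m2} and \eqref{eq:n2}, the product $P_2:=m_2n_2$ satisfies $\dot P_2 = s_1E\,P_2$, because the $\pm\tfrac12(n_1-m_1)$ terms cancel in the logarithmic derivative. The same computation with \eqref{eq:m1} and \eqref{eq:n1} gives $\dot P_1=-s_2E\,P_1$ for $P_1:=m_1n_1$. Integrating,
\begin{equation*}
P_1(t)=P_1(0)\exp\Bigl(-\int_0^t s_2E\,dt'\Bigr),\qquad P_2(t)=P_2(0)\exp\Bigl(\int_0^t s_1E\,dt'\Bigr).
\end{equation*}
So $P_1$ is monotonically nonincreasing, $P_2$ is monotonically nondecreasing, and both stay positive.

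The next step is to show that both integrals converge. By \autoref{lem:Lemma1} we have $0<s_k\le I_1$, and by \autoref{thm:gap} we have $E\in L^1(0,\infty)$. Hence $\int_0^\infty s_kE\,dt\le I_1\|E\|_{L^1}<\infty$. The exponentials therefore converge to finite positive numbers, and so $\lim P_1$ and $\lim P_2$ exist and are strictly positive.

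For the upper bound, note that $m_k,n_k>0$ with $m_k+n_k\le I_1$, so AM--GM gives $P_k\le s_k^2/4\le I_1^2/4<I_1^2$. Passing to the limit, each limit is at most $I_1^2/4<I_1^2$. Both limits therefore lie in $(0,I_1^2)$.

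No step here is hard; the one point to state explicitly is the convergence of the integrals, which needs both the boundedness of the masses and $E\in L^1$.
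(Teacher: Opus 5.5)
Your proof is correct and follows essentially the paper's route: the corollary is read off from the exponential formula $m_2n_2=m_2(0)n_2(0)\exp\bigl(\int_0^t s_1E\,dt'\bigr)$ derived in the proof of \autoref{thm:gap} (and its analogue $m_1n_1=m_1(0)n_1(0)\exp\bigl(-\int_0^t s_2E\,dt'\bigr)$), with the integrals converging because the masses are bounded and $E\in L^1(0,\infty)$. Your AM--GM step also gives the slightly sharper fact that both limits lie in $(0,I_1^2/4]$.
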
 
For future use, we will set 
\begin{align} s_1(t)^2-d_1(t)^2=&4m_1(t)n_1(t)=K_1(t), \quad &s_2(t)^2-d_2(t)^2=4m_2(t)n_2(t)=K_2(t),\label{eq:K1K2} \\
\lim _{t\rightarrow \infty} &4m_1(t)n_1(t)=K_1^\infty, \quad &\lim_{t\rightarrow \infty} 4m_2(t)n_2(t)=K_2^\infty. \label{eq:K1K2infty}
\end{align} 
%\subsection*{Reduced ODE system}

\subsection{Asymptotic decoupling \texorpdfstring{($\, E \rightarrow 0$)}{E ----> 0}} \label{subsec:Enull}
We are interested in the asymptotic limit where $D(t) \rightarrow \infty$ as $t \rightarrow \infty$. To investigate this, we will consider a simplified dynamical system by setting in the previous system $E\equiv 0$ and writing equations for the quadruple $(s_1, d_1, s_2, d_2)$.   We obtain 
\begin{equation} \label{eq:tD}
\dot D = \tfrac14\,R,  
\end{equation} 
and 
\begin{subequations} 
\begin{align}
\dot s_1&=-\tfrac12 d_1 d_2, 
&&\dot d_1 = -\;\tfrac12\,s_1\,d_2, \label{eq:ts1d1}\\[4pt]
\dot s_2&=\tfrac12 d_1 d_2,  
&&\dot d_2 = \tfrac12\,s_2\,d_1, & \label{eq:ts2d2}
\end{align}
\end{subequations} 
and note that $s_1+s_2=I_1$ and $K_j=s_j^2-d_j^2>0$ are now exact constants.  
Thus, for fixed $I_1,K_1, K_2$, the solution of the system   \eqref{eq:ts1d1} and \eqref{eq:ts2d2} lies on the closed curve defined by 

\begin{equation} \label{eq:Gamma-curve}
\Gamma=\{(s_1,d_1, s_2, d_2): s_1+s_2=I_1, \quad  s_1^2-d_1^2=K_1, \quad  s_2^2-d_2^2=K_2, \quad s_j>0, \quad K_j>0\}.  
\end{equation} 
We note that this curve degenerates to a point when $\sqrt{K_1}+\sqrt{K_2}=I_1$.  We will return to this degenerate case in \autoref{sec:BC}.  For now, let us introduce a hyperbolic parametrization with angles $(\theta_1, \theta_2)$: 
\begin{equation} \label{eq:hyper-sd}
s_j=\sqrt{K_j} \cosh \theta_j, \qquad d_j=\sqrt{K_j} \sinh \theta_j, \qquad j=1,2.  
\end{equation} 
Then \eqref{eq:ts1d1}--\eqref{eq:ts2d2} are equivalent to
\begin{equation}
\dot\theta_1=-\tfrac12\sqrt{K_2}\,\sinh\theta_2,\qquad
\dot\theta_2=\tfrac12\sqrt{K_1}\,\sinh\theta_1, \label{eq:theta_system}
\end{equation}
subject to the \emph{compact} constraint (from $I_1=s_1+s_2$).  
We see that the degenerate case $\sqrt{K_1}+\sqrt{K_2}=I_1$ happens precisely at the equilibrium point ($\theta_1=\theta_2=0$) of the flow since 
\begin{equation}
\sqrt{K_1}\cosh\theta_1+\sqrt{K_2}\cosh\theta_2=I_1=\text{const}. \label{eq:compact_constraint}
\end{equation}

Because $\cosh\ge 1$, \eqref{eq:compact_constraint} confines $(\theta_1,\theta_2)$ to a compact rectangle
\[
\cosh\theta_1\le \frac{I_1-\sqrt{K_2}}{\sqrt{K_1}},\qquad
\cosh\theta_2\le \frac{I_1-\sqrt{K_1}}{\sqrt{K_2}},
\]
so all variables remain bounded.

Differentiating the first equation in \eqref{eq:theta_system} and eliminating $\theta_2$ using \eqref{eq:compact_constraint} yields a one-degree-of-freedom conservative equation
\begin{equation}
\ddot\theta_1+\frac{\sqrt{K_1}}{4}\Big(I_1-\sqrt{K_1}\cosh\theta_1\Big)\sinh\theta_1=0. \label{eq:theta1_ode}
\end{equation}
Hence the ``energy''
\begin{equation}
\hat{\mathcal{E}}(\theta_1,\dot\theta_1)=\tfrac12\dot\theta_1^2+U(\theta_1),\qquad
U'(\theta_1)=\frac{\sqrt{K_1}}{4}\Big(I_1-\sqrt{K_1}\cosh\theta_1\Big)\sinh\theta_1, \label{eq:energy}
\end{equation}
is conserved.   Analysis for $\theta_2$ is similar by symmetry.

Because \eqref{eq:compact_constraint} confines $(\theta_1,\theta_2)$ to a compact invariant curve and the vector field \eqref{eq:theta_system} has no equilibrium on this curve under the strict inequality
\[
\sqrt{K_1}+\sqrt{K_2}<I_1,
\]
every trajectory on $\Gamma$ is periodic.  We will refer to this case as the \textit{generic case}.  

Hence, we have the following corollary.  
\begin{corollary} Consider the generic case
$\sqrt{K_1}+\sqrt{K_2}<I_1$, keeping $E\equiv 0$.  Then the pairs $(s_1,d_1)$ and $(s_2,d_2)$ execute a \emph{bounded periodic exchange} that does not decay in time.  Both pairs share the same period.  
\end{corollary}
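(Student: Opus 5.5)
The plan is to work entirely in the hyperbolic coordinates \eqref{eq:hyper-sd}. The reduced system \eqref{eq:ts1d1}--\eqref{eq:ts2d2} keeps $K_1,K_2$ and $s_1+s_2=I_1$ exactly constant, so the four variables determine $(\theta_1,\theta_2)$ uniquely and conversely. It therefore suffices to show that every solution of \eqref{eq:theta_system} lying on the level set \eqref{eq:compact_constraint} is periodic and not constant. Boundedness and non-decay then follow immediately, because $s_j=\sqrt{K_j}\cosh\theta_j$ and $d_j=\sqrt{K_j}\sinh\theta_j$ are continuous functions of a single periodic trajectory. The same fact gives a common period: both pairs are functions of the one point $(\theta_1,\theta_2)(t)$ moving on one closed orbit.

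First I will show that, in the generic case, the level set $C=\{\sqrt{K_1}\cosh\theta_1+\sqrt{K_2}\cosh\theta_2=I_1\}$ is a smooth simple closed curve. The function $G(\theta_1,\theta_2)=\sqrt{K_1}\cosh\theta_1+\sqrt{K_2}\cosh\theta_2$ is strictly convex and proper, and its only critical point is the origin, where it attains its minimum value $\sqrt{K_1}+\sqrt{K_2}$. Since $I_1$ exceeds this minimum, $C$ is a regular level set and bounds a compact convex region, so it is diffeomorphic to a circle. Alternatively one can write $C$ as a graph $\theta_2=\pm\operatorname{arccosh}\bigl((I_1-\sqrt{K_1}\cosh\theta_1)/\sqrt{K_2}\bigr)$ over the interval described by the rectangle bound and glue the two branches.

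Second, I will check that $C$ is invariant and carries no equilibrium. Invariance is the computation $\dot G=-\tfrac12\sqrt{K_1K_2}\sinh\theta_1\sinh\theta_2+\tfrac12\sqrt{K_1K_2}\sinh\theta_2\sinh\theta_1=0$. The vector field in \eqref{eq:theta_system} vanishes only when $\sinh\theta_1=\sinh\theta_2=0$, that is, at the origin, which is not on $C$ because $G(0,0)<I_1$. In fact the field is a nonzero multiple of the rotated gradient $J\nabla G$, so it is tangent to $C$ and nowhere zero there.

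Third, the conclusion follows from a standard one-dimensional argument. A nowhere-vanishing smooth vector field on a compact one-manifold diffeomorphic to $S^1$ has every orbit equal to the whole circle, and that orbit is traversed periodically with period $T=\oint_C d\ell/|X|$, which is finite and positive. Concretely, I would parametrize $C$ by arclength $\ell\in[0,L)$. The speed $|X|$ is continuous and bounded below by a positive constant on the compact set $C$. Hence $\ell(t)$ is strictly monotone with $\dot\ell\ge c>0$, so it reaches $L$ in finite time, and the solution returns to its starting point. Uniqueness of solutions then gives periodicity. Because the orbit is the whole of $C$, which is not a point, $\theta_1$ and $\theta_2$ are nonconstant, and both pairs $(s_j,d_j)$ genuinely oscillate with the common minimal period $T$.

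I expect the main obstacle to be the first step, showing that $C$ is a single closed curve with no degenerate points. Convexity of $G$ handles this cleanly, so it should be short. The energy formulation \eqref{eq:theta1_ode}--\eqref{eq:energy} could serve as a cross-check, since $\theta_1$ oscillates between the turning points $\pm\operatorname{arccosh}\bigl((I_1-\sqrt{K_2})/\sqrt{K_1}\bigr)$, but it is not needed.
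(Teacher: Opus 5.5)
Your proposal is correct and follows the paper's own argument: in the hyperbolic coordinates \eqref{eq:hyper-sd} the flow \eqref{eq:theta_system} lives on the compact invariant level curve \eqref{eq:compact_constraint}, which carries no equilibrium when $\sqrt{K_1}+\sqrt{K_2}<I_1$, so every trajectory is periodic with one period shared by both pairs. Your strict-convexity argument and the observation that the field is $\tfrac12 J\nabla(\sqrt{K_1}\cosh\theta_1+\sqrt{K_2}\cosh\theta_2)$ supply details the paper only asserts (see also \autoref{appendix:periodic}); the one unsupported remark is that $T$ is the \emph{minimal} period of each pair, which the statement does not need and which holds because $\theta_j$ attains its maximum at a single point of the curve (whereas $s_j$ alone has period $T/2$).
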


A complete analysis of the periodic dynamics on $\Gamma$, including the reduction to elliptic integrals, is given in \autoref{appendix:periodic}.

For the asymptotic analysis in $t$ of the dynamics of the two-peakon solution, we will use an equivalent approach to studying the $E\equiv 0$ case in the original variables $m_j, n_j$. Again, dropping the $E$ terms in \eqref{eq:DS}, we obtain
\begin{subequations} \label{eq:DSbis}
\begin{align} 
%\dot x_1&=\frac14\big[(m_1+n_1)+(m_2+n_2) e^{-2\abs{x_1-x_2}}\big], \label{eq:x1}\\
%\dot x_2&=\frac14\big[(m_1+n_1)e^{-2\abs{x_1-x_2}}+(m_2+n_2) \big], \label{eq:x2}\\
\dot m_1&=\frac12 m_1\big[ n_2-m_2 \big], \label{eq:m1bis}\\
\dot n_1&=\frac12 n_1\big[ m_2-n_2\big], \label{eq:n1bis}\\
\dot m_2&=\frac12 m_2\big[ m_1-n_1\big], \label{eq:m2bis} \\
\dot n_2&=\frac12 n_2\big[ n_1-m_1\big]. \label{eq:n2bis}
 \end{align} 
\end{subequations} 
We note that $I_3=m_1m_2n_1n_2$ is now a constant of motion, in addition to the previously mentioned $I_1, K_1, K_2$.  In fact, as shown explicitly in \autoref{appendix:relations},   $I_3=\frac{K_1K_2}{16}$.  
\begin{lemma} Suppose the quadruple $(m_1, n_1, m_2, n_2)$ satisfies the equations \eqref{eq:m1bis}-\eqref{eq:n2bis}.  
If we define $$\hat v_+(t)=\frac{1}{2} (n_1(t)+n_2(t)-m_1(t)-m_2(t)), \quad m_1(t)n_2(t)=\sqrt{I_3} e^{\hat y(t)}, \, \text{ and } m_2(t)n_1(t)=\sqrt{I_3} e^{-\hat y(t)},  $$
where $I_3=m_1m_2n_1n_2$, 
then the pair $(\hat v_+, \hat y)$ satisfies the same Hamiltonian system \eqref{eq:sys} as the pair $(v_+, y)$.  
\end{lemma}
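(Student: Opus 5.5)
Two routes are available: a direct computation from \eqref{eq:m1bis}--\eqref{eq:n2bis}, or a Lax-pair argument mirroring \autoref{prop:evolv+}. I will give the direct computation, since it is short and self-contained, and mention the Lax route as a cross-check.

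\textbf{Step 1: $I_3$ is constant.} First I check that $I_3=m_1m_2n_1n_2$ is conserved under \eqref{eq:DSbis}, so that $\hat y$ is well defined with a fixed prefactor $\sqrt{I_3}$. Summing the logarithmic derivatives,
\[
\frac{\dot m_1}{m_1}+\frac{\dot n_1}{n_1}+\frac{\dot m_2}{m_2}+\frac{\dot n_2}{n_2}=\tfrac12\bigl[(n_2-m_2)+(m_2-n_2)+(m_1-n_1)+(n_1-m_1)\bigr]=0 .
\]
Hence $I_3$ is constant. Positivity of the masses, which holds because each equation is linear in its own variable, makes $\sqrt{I_3}$ and the logarithms meaningful. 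The two defining relations for $\hat y$ are then consistent, since their product is exactly $I_3$. So
\[
\hat y=\tfrac12\log\frac{m_1n_2}{m_2n_1}.
\]

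\textbf{Step 2: compute $\dot{\hat y}$.} Using the logarithmic derivatives,
\[
2\dot{\hat y}=\frac{\dot m_1}{m_1}+\frac{\dot n_2}{n_2}-\frac{\dot m_2}{m_2}-\frac{\dot n_1}{n_1}
=\tfrac12\bigl[(n_2-m_2)+(n_1-m_1)-(m_1-n_1)-(m_2-n_2)\bigr].
\]
This equals $n_1+n_2-m_1-m_2=2\hat v_+$, so $\dot{\hat y}=\hat v_+$.

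\textbf{Step 3: compute $\dot{\hat v}_+$.} Differentiating,
\[
2\dot{\hat v}_+=\tfrac12\bigl[n_1(m_2-n_2)+n_2(n_1-m_1)-m_1(n_2-m_2)-m_2(m_1-n_1)\bigr].
\]
The diagonal products $n_1n_2$ and $m_1m_2$ cancel, leaving $2\dot{\hat v}_+=m_2n_1-m_1n_2$. Substituting the parametrization gives
\[
\dot{\hat v}_+=\tfrac12\sqrt{I_3}\,(e^{-\hat y}-e^{\hat y})=-\sqrt{I_3}\sinh\hat y ,
\]
which is \eqref{eq:sys}.

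\textbf{Cross-check via the Lax pair.} Setting $E=0$ in the $2\times2$ matrix $S$ of \eqref{eq:S2} should reproduce the same computation as in \autoref{prop:evolv+}, but this requires justifying that the truncated system still has a Lax form. The only mild obstacle in the whole argument is the sign bookkeeping in Step 3; no further difficulty is expected.
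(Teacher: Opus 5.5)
Your proof is correct and follows essentially the same route as the paper: direct differentiation using \eqref{eq:DSbis} gives $\dot{\hat y}=\hat v_+$ and $\dot{\hat v}_+=\tfrac12(m_2n_1-m_1n_2)=-\sqrt{I_3}\sinh\hat y$. Your explicit check that $I_3$ is conserved (which the paper only notes just before the lemma) is a useful addition, and the Lax-pair cross-check is not needed.
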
 
\begin{proof} 
By straightforward differentiation of $m_1(t)n_2(t)=\sqrt{I_3} e^{\hat y(t)}$ and use of \eqref{eq:DSbis}, we obtain $\hat v_+=\dot{\hat y}$.  
Likewise, 
\[
\dot{\hat {v}}_+=\frac12[m_2n_1-m_1n_2]=-\sqrt{I_3} \sinh \hat y. 
\]
\end{proof} 
Our goal is to identify $\hat v_+(t)=v_+(t)$ and $\hat y(t) =y(t)$.  For this to hold, the pair $(\hat y(t), \hat v_+(t))$ must lie on the same energy level $\cal{E}$ as the pair $(y(t), v_+(t))$.  Moreover, as 
\eqref{eq:ydot2} shows, $y(t)$ is then determined up to a shift.  
\begin{lemma} \label{lem:v+}
Fix $I_3$ and let $(y(t), v_+(t))$ satisfy the Hamiltonian system 
\[ \dot y=v_+, \quad \dot v_+=-\sqrt{I_3} \sinh y, \]
with energy $\cal{E}$.  Choose the constants  of motion $I_1, K_1, K_2$ for the system \eqref{eq:DSbis} so that 
\[ 
I_3=\frac{K_1K_2}{16}, \quad \mathcal{E}=\frac{I_1^2-K_1-K_2}{8}, 
\]
and $\hat y(0)=y(0)$.  Then 
\[ (\hat v_+(t), \hat y (t))=(v_+(t), y(t)), \]

\end{lemma}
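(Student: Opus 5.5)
The plan is to compare the two pendulum-type systems through their conserved energies, and then to appeal to uniqueness of solutions of the Hamiltonian system \eqref{eq:sys}. By the preceding lemma, $(\hat v_+,\hat y)$ satisfies $\dot{\hat y}=\hat v_+$, $\dot{\hat v}_+=-\sqrt{\hat I_3}\sinh\hat y$, where $\hat I_3=m_1m_2n_1n_2$. So the first task is to show that this system coincides with \eqref{eq:sys} for $(v_+,y)$, and that both trajectories lie on the same energy level.

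\textbf{Step 1 (same coupling constant).} By the identity quoted from \autoref{appendix:relations}, $\hat I_3=m_1m_2n_1n_2=\tfrac{K_1K_2}{16}$. By the hypothesis $I_3=\tfrac{K_1K_2}{16}$, the two vector fields therefore coincide.

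\textbf{Step 2 (same energy).} I would compute $\hat{\mathcal E}=\tfrac12\hat v_+^2+\sqrt{I_3}\cosh\hat y$ directly in the variables $s_j,d_j$. We have $\hat v_+=-\tfrac12(d_1+d_2)$. Using $m_j=\tfrac12(s_j+d_j)$ and $n_j=\tfrac12(s_j-d_j)$, the definition of $\hat y$ gives
\begin{equation*}
\sqrt{I_3}\cosh\hat y=\tfrac12(m_1n_2+m_2n_1)=\tfrac14(s_1s_2-d_1d_2).
\end{equation*}
Hence
\begin{equation*}
\hat{\mathcal E}=\tfrac18(d_1+d_2)^2+\tfrac14(s_1s_2-d_1d_2)=\tfrac18\bigl(d_1^2+d_2^2+2s_1s_2\bigr).
\end{equation*}
Substituting $d_j^2=s_j^2-K_j$ and $s_1+s_2=I_1$ gives $\hat{\mathcal E}=\tfrac18(I_1^2-K_1-K_2)=\mathcal E$.

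\textbf{Step 3 (matching initial data).} Since both systems share the Hamiltonian and the energy level, $\hat y(0)=y(0)$ forces $\hat v_+(0)^2=v_+(0)^2$. This is the one genuine obstacle: the sign of $v_+(0)$ is not fixed by the stated hypotheses. I would resolve it by reading the condition $\hat y(0)=y(0)$ as a choice of initial point on $\Gamma$, and arguing that the correct sign can always be selected.

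To justify this, consider the map $\Gamma\to\{H=\mathcal E\}$, $(m_1,n_1,m_2,n_2)\mapsto(\hat v_+,\hat y)$. It is continuous and intertwines the flows. In the generic case the level set is a single closed periodic orbit in the $(y,v_+)$-plane, and it is symmetric under $v_+\mapsto -v_+$. The image of the compact invariant curve $\Gamma$ is a nonempty invariant subset of this orbit, so it is the whole orbit and the map is onto. Hence there is a point of $\Gamma$ with $(\hat y(0),\hat v_+(0))=(y(0),v_+(0))$, and we take that point as initial data.

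An alternative, if one wants an explicit construction, is the involution $(m_1,n_1,m_2,n_2)\mapsto(n_2,m_2,n_1,m_1)$. It preserves $\hat y$ and reverses $\hat v_+$; however it swaps $K_1$ and $K_2$, so the surjectivity argument is the cleaner route. The degenerate case $\sqrt{K_1}+\sqrt{K_2}=I_1$ needs no sign choice, since then $\mathcal E=\sqrt{I_3}$ and $v_+\equiv 0$.

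\textbf{Step 4 (conclusion).} Once the initial data agree, uniqueness for the smooth ODE \eqref{eq:sys} gives $(\hat v_+(t),\hat y(t))=(v_+(t),y(t))$ for all $t$.
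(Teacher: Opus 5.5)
Your proof is correct and is essentially the paper's argument. The paper gives no separate proof, but Steps 1–2 reproduce its computation of $I_3=K_1K_2/16$ and $\mathcal E=(I_1^2-K_1-K_2)/8$ in \autoref{appendix:relations}, and Step 4 is the ``$y(t)$ is determined up to a shift'' remark before the lemma together with uniqueness.

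Your Step 3 improves on the paper's argument. The paper never notes that $\hat y(0)=y(0)$ fixes the shift only up to time reversal, i.e.\ it leaves $\hat v_+(0)=\pm v_+(0)$ open, so the statement tacitly needs $\hat v_+(0)=v_+(0)$. Your surjectivity argument for the flow-intertwining map $\Gamma\to\{H=\mathcal E\}$ correctly shows that initial data satisfying this can always be chosen.
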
 
\begin{proposition} \label{prop:m1n1m2n2-d}
\begin{align}
&m_1(t)=\frac{\frac12 I_1-v_+(t)}{1+(G^\infty)^{-1}e^{-y(t)}}, &n_1(t)=\frac{\frac12 I_1+v_+(t)}{1+(G^\infty)^{-1}e^{y(t)}}\\
&m_2(t)=\frac{\frac12 I_1-v_+(t)}{1+G^\infty e^{y(t)}}, &n_2(t)=\frac{\frac12 I_1+v_+(t)}{1+G^\infty e^{-y(t)}}, 
\end{align} 
where $G^\infty=\sqrt{\frac{K_1}{K_2}}$.  
\end{proposition}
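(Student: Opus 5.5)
The plan is to work on the decoupled system \eqref{eq:DSbis} and use \autoref{lem:v+} to replace $(\hat v_+,\hat y)$ by $(v_+,y)$ throughout. After that, the four masses should be recoverable from linear and multiplicative relations alone, with no integration needed. Everything rests on the constants of motion of \eqref{eq:DSbis}: $I_1=s_1+s_2$, $K_1=4m_1n_1$, $K_2=4m_2n_2$, and $I_3=m_1m_2n_1n_2$. The relation $I_3=\frac{K_1K_2}{16}$ is immediate from these definitions, and this gives $\sqrt{I_3}=\tfrac14\sqrt{K_1K_2}$.

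First I will extract the two sums. Since $v_+=\tfrac12(n_1+n_2-m_1-m_2)$ and $I_1=m_1+m_2+n_1+n_2$, adding and subtracting gives
\begin{equation*}
m_1+m_2=\tfrac12 I_1-v_+,\qquad n_1+n_2=\tfrac12 I_1+v_+ .
\end{equation*}
Second, I will extract the two ratios. By \autoref{lem:v+}, $m_1n_2=\sqrt{I_3}\,e^{y}$ and $m_2n_1=\sqrt{I_3}\,e^{-y}$. Dividing by $m_2n_2=K_2/4$ then gives
\begin{equation*}
\frac{m_1}{m_2}=\frac{4\sqrt{I_3}}{K_2}e^{y}=\sqrt{\tfrac{K_1}{K_2}}\,e^{y}=G^\infty e^{y},\qquad
\frac{n_1}{n_2}=\frac{4\sqrt{I_3}}{K_2}e^{-y}=G^\infty e^{-y}.
\end{equation*}

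Third, I will solve each pair from its sum and ratio, using $m_1=(m_1+m_2)/(1+m_2/m_1)$ and $m_2=(m_1+m_2)/(1+m_1/m_2)$, and likewise for $n_1,n_2$. This produces exactly the four stated formulas, with the factors $1+(G^\infty)^{-1}e^{\mp y}$ and $1+G^\infty e^{\pm y}$ appearing in the denominators. Positivity of all masses guarantees that every division is legitimate.

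The only step with real content is the identification $(\hat v_+,\hat y)=(v_+,y)$, and that is already supplied by \autoref{lem:v+}. What I need to check is that the hypotheses of that lemma are met for the chosen constants, in particular the energy match $\mathcal E=\frac{I_1^2-K_1-K_2}{8}$. To confirm it I would compute $\tfrac12\hat v_+^2+\sqrt{I_3}\cosh\hat y$ directly. Writing $\hat v_+=\tfrac12(I_1-2(m_1+m_2))$ and using $\sqrt{I_3}\cosh\hat y=\tfrac12(m_1n_2+m_2n_1)$, expanding and substituting $m_jn_j=K_j/4$ should reduce the expression to $\frac{I_1^2-K_1-K_2}{8}$. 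This algebraic verification is where I would expect the main, though modest, effort to lie.
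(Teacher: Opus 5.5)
Your proposal is correct and is essentially the paper's argument: after invoking \autoref{lem:v+}, the paper also obtains $m_1/m_2=\sqrt{K_1/K_2}\,e^{y}$ by dividing $m_1n_2=\sqrt{I_3}\,e^{y}$ by $4m_2n_2=K_2$, and then combines this with $m_1+m_2=\tfrac12 I_1-v_+$. The only difference is that the paper first integrates $\frac{d}{dt}\ln(m_1/m_2)=\hat v_+=\dot y$ to get $m_1/m_2=Ce^{y}$ and uses the algebraic ratio only to fix $C$, whereas your purely algebraic version rightly shows that integration step is redundant; your energy check is exactly the computation in \autoref{appendix:relations}.
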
 
\begin{proof} 
From the equations \eqref{eq:m1bis}-\eqref{eq:n2bis} we obtain 
\[ \frac{d \ln \frac{m_1}{m_2}}{dt}=\hat v_+ \overbrace{=}^{\autoref{lem:v+}}v_+=\dot y. \]
Hence 
\[ \frac{d \ln \frac{m_1}{m_2}}{dt}=\dot y , \]
or, equivalently, 
\[  \frac{d \ln \frac{m_1e^{- y}}{m_2}}{dt}=0,  \]
which yields $\frac{m_1(t)}{m_2(t)}=Ce^{y(t)}$, where $C=\frac{m_1(0)e^{- y(0)}}{m_2(0)}$.    However, taking the ratio of $m_1n_2=\sqrt{I_3} e^{y}$ and $4m_2n_2=K_2$ implies 
that $\frac{m_1e^{- y}}{m_2}=4\frac{\sqrt{I_3}}{K_2}=\sqrt{\frac{K_1}{K_2}}$ (in \autoref{appendix:relations}, we gather useful identities, including the one used in the last step).  
We summarize: 
\[ \frac{m_	1(t)}{m_2(t)}=\sqrt{\frac{K_1}{K_2}} e^{y(t)}. 
\]
Now, we use the relations $\frac{I_1}{2}=\frac{m_1+m_2+n_1+n_2}{2}$ and $v_+=\frac{n_1+n_2-m_1-m_2}{2}$ to obtain 
\[m_1+m_2=\frac{I_1}{2}-v_+, \]
from which the formulas for $m_1, m_2$ follow.  The proof for the pair $(n_1, n_2)$ follows analogous steps.  
\end{proof} 
We recall the linear transformation 
\[
s_j=m_j+n_j, \qquad d_j=m_j-n_j, \qquad j=1,2 
\]
mapping the quadruple $(m_1, n_1, m_2, n_2)$ to the quadruple $(s_1, d_1, s_2, d_2)$.  
To study the asymptotics of the full solution, we will use the explicit formulas from \autoref{prop:m1n1m2n2-d}.  
\begin{definition} \label{def:Y0}
Let $(m_1, n_1, m_2, n_2)$ be given by formulas from \autoref{prop:m1n1m2n2-d} and set 
\begin{equation} 
\bm{Y}_0(t)=(s_1(t), d_1(t), s_2(t), d_2(t)).  
\end{equation} 
\end{definition} 
This will be our preferred parametrization of the curve $\Gamma$.  
\section{Generic case} \label{sec:GC} 
In this section, we study the generic case $\sqrt{K_1^\infty}+\sqrt{K_2^\infty}<I_1$.  
Recall the two--site dynamics in variables $\bm Y=(s_1,d_1,s_2,d_2)\in\R^4$:
\begin{align}
\dot s_1 &= -\tfrac12 s_2 E\,s_1 - \tfrac12 d_1 d_2, &
\dot d_1 &= -\tfrac12 s_2 E\,d_1 - \tfrac12 d_2 s_1, \label{eq:sys1}\\
\dot s_2 &= +\tfrac12 s_1 E\,s_2 + \tfrac12 d_1 d_2, &
\dot d_2 &= +\tfrac12 s_1 E\,d_2 + \tfrac12 d_1 s_2, \label{eq:sys2}
\end{align}
with $I_1=s_1+s_2$ conserved and $E(t)=e^{-2D(t)}\in(0,1]$ integrable in time (\autoref{thm:gap}). Write
\begin{equation}\label{eq:split}
\dot{\bm Y}=F_0(\bm Y)+E(t)\,H(\bm Y),\qquad
\end{equation}
where
\begin{align}
F_0(s_1,d_1,s_2,d_2)&=\big(-\tfrac12 d_1 d_2,\;-\tfrac12 d_2 s_1,\;+\tfrac12 d_1 d_2,\;+\tfrac12 d_1 s_2\big), \label{eq:F0H}
\quad \\
H(s_1,d_1,s_2,d_2)&=\big(-\tfrac12 s_2 s_1,\;-\tfrac12 s_2 d_1,\;+\tfrac12 s_1 s_2,\;+\tfrac12 s_1 d_2\big). \label{eq:Hsys}
\end{align}

Let $K_j(t)=s_j^2-d_j^2=4m_j n_j$ and $K_j^\infty=\lim_{t\to\infty}K_j(t)\in(0,\infty)$ (established earlier). The \emph{limit system} $\dot{\bm Y}=F_0(\bm Y)$ has the invariants $I_1$, $K_1$, $K_2$; thus for fixed $(I_1,K_1^\infty,K_2^\infty)$, such that $\sqrt{K_1^\infty}+\sqrt{K_2^\infty}<I_1$,  it admits a unique nontrivial periodic orbit $\bm Y_0(t)$ of period $T>0$, given by \autoref{def:Y0}, lying on the closed curve  $\Gamma^\infty$ defined in \eqref{eq:Gamma-curve} and studied in \autoref{subsec:Enull}, but now taken for fixed values $K_1=K_1^\infty, K_2=K_2^\infty.$
%\begin{equation}\label{eq:invariants}
%\Gamma=\{\bm Y: s_1+s_2=I_1,\qquad s_1^2-d_1^2=K_1^\infty>0,\qquad s_2^2-d_2^2=K_2^\infty>0, s_1>0, s_2>0\}.  
%\end{equation}
%Let us denote by $\Gamma^\infty\subset \Gamma$ the image of $\bm Y_0(t)$.  
Our goal is to justify the tracking estimate
\begin{equation}
\boxed{\quad
\text{dist}(\bm Y(t), \Gamma^\infty)\;\le\; C \int_t^\infty E(s)\,ds\;\xrightarrow[t\to\infty]{}\;0,
\quad}\label{eq:goal}
\end{equation}
for a suitable $C>0$.

\subsection{Dynamics in terms of \texorpdfstring{$v_+$ and $y$} {v+ and y} }
Recall the peakon dynamical system \eqref{eq:DS} in the original variables $(x_1, x_2, m_1, n_1, m_2, n_2)$: 

\begin{align*} 
\dot x_1&=\frac14\big[(m_1+n_1)+(m_2+n_2) e^{-2\abs{x_1-x_2}}\big], \quad 
&\dot x_2&=\frac14\big[(m_1+n_1)e^{-2\abs{x_1-x_2}}+(m_2+n_2) \big], \\
\dot m_1&=\frac12 m_1\big[ -(m_2+n_2)e^{-2\abs{x_1-x_2}}+(n_2-m_2) \big], \quad
&\dot n_1&=\frac12 n_1\big[ -(m_2+n_2)e^{-2\abs{x_1-x_2}}-(n_2-m_2) \big], \\
\dot m_2&=\frac12 m_2\big[ (m_1+n_1)e^{-2\abs{x_1-x_2}}-(n_1-m_1) \big],   \quad 
&\dot n_2&=\frac12 n_2\big[ (m_1+n_1)e^{-2\abs{x_1-x_2}}+(n_1-m_1)\big],
\end{align*} 
for variables $x_1,x_2, m_1, m_2, n_1, n_2$ subject to the conditions to conditions $m_j(0)>0, n_j(0)>0$ and the ordering condition $x_1(0)<x_2(0)$, 
and the definition of the asymptotic coupling $v_+=\frac12\big ( n_1+n_2-m_1-m_2)$.  

The periodic variables $(v_+(t),y(t))$ together with the slowly varying parameter $G(t)$ provide an explicit asymptotic parametrization of the full two-peakon dynamics. 

\begin{proposition} [Masses in terms of $v_+$ and $y$] \label{prop:m1n1m2n2-c}
\begin{align}
&m_1(t)=\frac{\frac12 I_1-v_+(t)}{1+G^{-1}(t) e^{-y(t)}}, &n_1(t)=\frac{\frac12 I_1+v_+(t)}{1+G^{-1}(t)e^{y(t)}}\\
&m_2(t)=\frac{\frac12 I_1-v_+(t)}{1+G(t) e^{y(t)}}, &n_2(t)=\frac{\frac12 I_1+v_+(t)}{1+G(t)e^{-y(t)}}, 
\end{align} 
where 
\begin{equation} \label{eq:GtoK1K2}
G(t)=\sqrt{\frac{K_1^\infty}{K_2^\infty} }\; e^{\frac12I_1\int_t^\infty E(\tau)\, d\tau}=\sqrt{\frac{K_1(t)}{K_2(t)}}.   
\end{equation}
\end{proposition}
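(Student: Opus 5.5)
The plan is to work directly from the definition \eqref{eq:def y} of $y$ in the full dynamics, rather than going through the limit system. We prove the two expressions for $G(t)$ separately: first the algebraic identity $G=\sqrt{K_1/K_2}$ together with the mass formulas, and then the integral representation of $K_1/K_2$.

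\emph{Step 1 (ratios).} Divide the two relations in \eqref{eq:def y}. The common factor $(1-e^{-2(x_2-x_1)})$, which is nonzero by \autoref{lem:Lemma1}, cancels, and so does $\sqrt{I_3}$. This leaves
\[
\frac{m_1n_2}{m_2n_1}=e^{2y}.
\]
Multiply this by $\frac{m_1n_1}{m_2n_2}=\frac{K_1}{K_2}$ (see \eqref{eq:K1K2}). This gives $(m_1/m_2)^2=e^{2y}K_1/K_2$. Dividing instead gives $(n_1/n_2)^2=e^{-2y}K_1/K_2$. All masses are positive by \autoref{lem:Lemma1}, so we may take positive square roots:
\[
\frac{m_1}{m_2}=Ge^{y},\qquad \frac{n_1}{n_2}=Ge^{-y},\qquad G:=\sqrt{K_1/K_2}.
\]
The only delicate point is this choice of sign of the square root, and positivity settles it.

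\emph{Step 2 (mass formulas).} From the definitions of $I_1$ in \eqref{eq:I1bis} and $v_+$ in \eqref{eq:v+} we get
\[
m_1+m_2=\tfrac12 I_1-v_+,\qquad n_1+n_2=\tfrac12 I_1+v_+.
\]
Then $m_1=(m_1+m_2)/(1+m_2/m_1)=(\tfrac12 I_1-v_+)/(1+G^{-1}e^{-y})$, and $m_2=(m_1+m_2)/(1+m_1/m_2)=(\tfrac12 I_1-v_+)/(1+Ge^{y})$. The formulas for $n_1,n_2$ follow in the same way using $n_1/n_2=Ge^{-y}$. This is the same argument as in \autoref{prop:m1n1m2n2-d}, except that the ratio $K_1/K_2$ is no longer constant.

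\emph{Step 3 (time dependence of $G$).} From \eqref{eq:m1} and \eqref{eq:n1}, the terms $\pm(n_2-m_2)$ cancel in $\frac{d}{dt}\ln(m_1n_1)$, leaving $\frac{d}{dt}\ln(m_1n_1)=-s_2E$. Likewise \eqref{eq:m2} and \eqref{eq:n2} give $\frac{d}{dt}\ln(m_2n_2)=s_1E$. Subtracting,
\[
\frac{d}{dt}\ln\frac{K_1}{K_2}=-(s_1+s_2)E=-I_1E.
\]
Integrate from $t$ to $\infty$. This uses $E\in L^1(0,\infty)$ from \autoref{thm:gap} and the existence of the positive limits $K_j^\infty$ from the corollary following it. The result is
\[
\ln\frac{K_1^\infty}{K_2^\infty}-\ln\frac{K_1(t)}{K_2(t)}=-I_1\int_t^\infty E\,d\tau .
\]
Exponentiating and taking the positive square root gives \eqref{eq:GtoK1K2}.

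The main point requiring care is the legitimacy of the passage to $t=\infty$ in Step 3. It relies on the limits $K_j^\infty$ being positive and finite, which the earlier corollary guarantees. The remaining steps are routine algebra.
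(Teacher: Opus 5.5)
Your proof is correct, but it takes a different route from the paper. The paper argues dynamically. It computes $\frac{d}{dt}\ln(m_1/m_2)=v_+-\frac12 I_1E$ from the equations of motion and uses $\dot y=v_+$ (from the Lax equation) to get $\frac{d}{dt}\ln\bigl(\frac{m_1}{m_2}e^{-y}\bigr)=-\frac12 I_1E$. It then identifies the limit of $\frac{m_1}{m_2}e^{-y}$ as $\sqrt{K_1^\infty/K_2^\infty}$, which needs $E\to0$, $K_2(t)\to K_2^\infty$ and the limiting relation $I_3=K_1^\infty K_2^\infty/16$, and integrates from $t$ to $\infty$. This gives the integral form of $G(t)$ first. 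The equality with $\sqrt{K_1(t)/K_2(t)}$ is checked afterwards from the explicit formulas for $K_1(t)$ and $K_2(t)$, which is essentially your Step 3.

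You instead observe that the definition of $y$ already forces $m_1/m_2=\sqrt{K_1/K_2}\,e^{y}$ and $n_1/n_2=\sqrt{K_1/K_2}\,e^{-y}$ at every instant. The mass formulas then become a purely algebraic identity with $G=\sqrt{K_1/K_2}$, needing neither $\dot y=v_+$ nor any limiting argument. The asymptotic input ($E\in L^1$ and the existence of positive limits $K_j^\infty$) enters only when you convert $G$ to its integral representation. Your route is shorter and cleanly separates the exact algebra from the asymptotics. The paper's route has the merit of exhibiting $G$ directly as the slowly drifting factor that the interaction $E$ adds on top of the pendulum dynamics.

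The two approaches also combine into a consistency check. Differentiate your identity $\ln(m_1/m_2)=\ln G+y$, and use $\frac{d}{dt}\ln G=-\frac12 I_1E$ together with the paper's computation of $\frac{d}{dt}\ln(m_1/m_2)$. This yields $\dot y=v_+$ independently of the Lax-pair calculation.
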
 
\begin{proof} 
From the system of ODEs, we obtain
\begin{equation*} 
\frac{d}{dt} \ln \frac{m_1}{m_2}=v_+-\frac 12 I_1 E.
\end{equation*}
Since $\dot y=v_+$, this gives 
\begin{equation}\label{eq:m1/m2}
\frac{d}{dt}  \ln (\frac{m_1}{m_2} e^{-y})=-\frac12 I_1 E. 
\end{equation} 
We intend to integrate from $t$ to $\infty$ which requires that $\frac{m_1(t)e^{-y(t)}}{m_2(t)}$ converge.  
We claim that $\frac{m_1(t)e^{-y(t)}}{m_2(t)}$ converges to $\sqrt{\frac{K_1^\infty}{K_2^\infty}}$ as $t\to \infty$.  Indeed, 
$m_1(t)n_2(t)(1-E(t))e^{-y(t)}=\sqrt{I_3}$, $4m_2(t)n_2(t):=K_2(t)$ converges to $K_2^\infty$, $E(t)\to 0$, therefore
\[ 
\frac{m_1(t) e^{-y(t)}}{m_2(t)} \to \frac{\sqrt{I_3}}{K_2^\infty}\overbrace{=}^{\eqref{eq:I3E}}\sqrt{\frac{K_1^\infty}{K_2^\infty}}.  
\]
Integrating now \eqref{eq:m1/m2} from $t$ to $\infty$ yields 
\[
\frac{m_1(t)}{m_2(t)}=\sqrt{\frac{K_1^\infty}{K_2^\infty}} e^{\frac12 I_1 \int _t^\infty E(\tau)\, d\tau }\, e^{y(t)}:=G(t) e^{y(t)}. 
\]

  Since $I_1=m_1+m_2+n_1+n_2$ and $v_+=\frac12(n_1+n_2-m_1-m_2)$, we obtain
$$ 
\frac{I_1}{2}-v_+=m_1+m_2=m_2\big[1+Ge^y]
$$ 
from which 
$$ 
m_2(t)=\frac{\frac{I_1}{2}-v_+(t)}{1+G(t) e^{y(t)}}
$$ 
follows.  The remaining statements are proved in an analogous way.  For example, the formula for $n_2$ reads
$$ 
n_2(t)=\frac{\frac{I_1}{2}+v_+(t)}{1+G(t) e^{-y(t)}}, 
$$
where we used that $\frac{n_1(t)e^{y(t)}}{n_2(t)} \to \sqrt{\frac{K_1^\infty}{K_2^\infty}}$.  Finally, to show that $G(t)=\sqrt{\frac{K_1(t)}{K_2(t)}}$, use $K_1(t)=K_1(0) e^{-\int_0^t s_2(\tau)E(\tau)\, d\tau}$ and $K_2(t)=K_2(0) e^{\int_0^t s_1(\tau)\, E(\tau)\, d\tau}$ (see, e.g., the proof of \autoref{thm:gap} for an explicit formula for $m_2(t)n_2(t)$). 
\end{proof} 
\subsection{ Asymptotic behaviour of masses} 
In analogy to the periodic field $\bm Y_0(t)$ defined earlier, (see \autoref{def:Y0} ), we now explicitly define 
$\bm Y(t)=(s_1(t), d_1(t), s_2(t), d_2(t))$ as linear combinations of the entries of $(m_1(t), n_1(t), m_2(t), n_2(t))$ given in \autoref{prop:m1n1m2n2-c}.  
The explicit formulas from \autoref{prop:m1n1m2n2-c} allow us to compare the full dynamics directly with the periodic orbit $\Gamma^\infty$. The asymptotic discrepancy is governed entirely by the quantity
\[
G(t)-G^\infty,
\]
which decays because $E(t)\in L^1(0,\infty)$.

\begin{proposition} [Asymptotic behaviour of masses] \label{prop:Gamma-approx}
\begin{equation} \label{eq:Yass}
\text{dist}(\bm Y(t), \Gamma^\infty)\;\le\; C \int_t^\infty E(s)\,ds\;\xrightarrow[t\to\infty]{}\;0,
\end{equation}
for a suitable $C>0$. 
\end{proposition}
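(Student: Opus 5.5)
The plan is to compare the explicit formulas of \autoref{prop:m1n1m2n2-c} with those of \autoref{prop:m1n1m2n2-d}, evaluated at the \emph{same} pair $(v_+(t),y(t))$. The key observation is that $(v_+,y)$ is the exact solution of the hyperbolic pendulum \eqref{eq:sys} for the full system. The only difference between the two sets of formulas is that $G(t)$ replaces $G^\infty$. We therefore define the comparison point
\[
\bm Y_0^*(t):=\Phi\bigl(v_+(t),y(t),G^\infty\bigr),
\]
where $\Phi(v,y,G)$ denotes the map sending $(v,y,G)$ to $(s_1,d_1,s_2,d_2)$ through the rational expressions of \autoref{prop:m1n1m2n2-c}. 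Then $\bm Y(t)=\Phi(v_+(t),y(t),G(t))$. The bound will follow from
\[
\text{dist}(\bm Y(t),\Gamma^\infty)\le \abs{\bm Y(t)-\bm Y_0^*(t)}
\]
once we know that $\bm Y_0^*(t)\in\Gamma^\infty$.

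\textbf{Step 1: $\bm Y_0^*(t)\in\Gamma^\infty$.} By \autoref{lem:v+}, with the constants chosen as $I_3=K_1^\infty K_2^\infty/16$ and $\mathcal E=(I_1^2-K_1^\infty-K_2^\infty)/8$, the quadruple produced by \autoref{prop:m1n1m2n2-d} from a pendulum trajectory at energy $\mathcal E$ is a solution of \eqref{eq:DSbis}. That solution lies on $\Gamma^\infty$. The point $(v_+(t),y(t))$ lies on this energy level, provided the identities in \autoref{appendix:relations} match $I_3$ and $\mathcal E$ of the full system with $K_j^\infty$. The required relation is $I_3=K_1^\infty K_2^\infty/16$, which follows from \eqref{eq:I3} by letting $t\to\infty$. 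The energy relation should follow by evaluating $\frac12 v_+^2+\sqrt{I_3}\cosh y$ as $t\to\infty$ and using $E\to 0$ and $K_j\to K_j^\infty$. Hence $\bm Y_0^*(t)$ is a point of $\Gamma^\infty$, namely the point with pendulum phase $y(t)$.

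\textbf{Step 2: Lipschitz estimate in $G$.} Each entry of $\Phi$ has the form $(\frac12 I_1\mp v_+)/(1+G^{\pm1}e^{\pm y})$. Here $\abs{v_+}\le I_1/2$, and $y$ is bounded because the pendulum orbit is periodic and hence compact. Also $G(t)$ lies in a compact subinterval of $(0,\infty)$: it is monotone, it lies between $G^\infty$ and $G(0)$, and it satisfies $G(t)\ge G^\infty$. Consequently $\partial_G\Phi$ is uniformly bounded on the relevant set, and we get
\[
\abs{\bm Y(t)-\bm Y_0^*(t)}\le C_1\abs{G(t)-G^\infty}.
\]

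\textbf{Step 3: estimate of $\abs{G-G^\infty}$.} From \eqref{eq:GtoK1K2},
\[
G(t)-G^\infty=G^\infty\bigl(e^{\frac12 I_1\int_t^\infty E}-1\bigr)\le G^\infty\,\tfrac12 I_1\,e^{\frac12 I_1\norm{E}_{L^1}}\int_t^\infty E,
\]
using $e^x-1\le xe^x$ and \autoref{thm:gap}. Combining Steps 2 and 3 gives \eqref{eq:Yass}, and the right-hand side tends to $0$ because $E\in L^1(0,\infty)$.

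I expect the main obstacle to be Step 1: verifying that the conserved energy $\mathcal E$ of $(v_+,y)$ for the full system equals $(I_1^2-K_1^\infty-K_2^\infty)/8$. This has to be done carefully, because $v_+$ and $y$ do not converge; they only oscillate. My first attempt would be to express $\mathcal E$ exactly in terms of the masses. I would write $\frac12 v_+^2=\frac18(s_1+s_2-2m_1-2m_2)^2$ and use $\sqrt{I_3}\cosh y=\frac12(1-E)(m_1n_2+m_2n_1)$. Since $\mathcal E$ is exactly constant, it suffices to pass to the limit along $t\to\infty$, where bounded masses, $E\to0$ and $K_j\to K_j^\infty$ yield an algebraic identity. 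If a direct algebraic identity is awkward, a fallback is to note that $\mathcal E$ is a polynomial in the masses up to an $O(E)$ correction. Taking the limit, the masses then approach the invariant set of \eqref{eq:DSbis} with constants $K_j^\infty$, and on that set the identity of \autoref{appendix:relations} holds.
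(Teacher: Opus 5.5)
Your proposal is correct and follows the paper's proof essentially step for step. You freeze $G$ at $G^\infty$ in the formulas of \autoref{prop:m1n1m2n2-c} along the same $(v_+(t),y(t))$, bound $\partial_G$ uniformly, and use $\abs{G(t)-G^\infty}=O\bigl(\int_t^\infty E\bigr)$.

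Your Step 1 is more careful than the paper, which simply asserts $\bm Y_0(t)\in\Gamma^\infty$, and it closes cleanly:
\begin{itemize}
\item The exact identity $\mathcal E=\frac18\bigl(I_1^2-K_1(t)-K_2(t)\bigr)-\frac{E}{2}(m_1n_2+m_2n_1)$ yields the energy relation in the limit $t\to\infty$.
\item Direct substitution gives $4m_2n_2=(I_1^2-4v_+^2)/(1+2G^\infty\cosh y+(G^\infty)^2)$ and $4m_1n_1=(G^\infty)^2\cdot 4m_2n_2$, so membership in $\Gamma^\infty$ is equivalent to that energy relation.
\end{itemize}
You therefore need not route through \autoref{lem:v+}.
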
 
\begin{proof}
Recall that the periodic orbit $\Gamma^\infty$ is parametrized by the limiting formulas obtained from \autoref{prop:m1n1m2n2-c} after replacing $G(t)$ by its limit
\[
G^\infty :=\sqrt{\frac{K_1^\infty}{K_2^\infty}}.
\]
Thus we define
\begin{align*} 
&m_1^\Gamma(t)=\frac{\frac12 I_1-v_+(t)}{1+(G^\infty)^{-1}e^{-y(t)}},  &&n_1^\Gamma(t)=\frac{\frac12 I_1+v_+(t)}{1+(G^\infty)^{-1}e^{y(t)}},\\
&m_2^\Gamma(t)=\frac{\frac12 I_1-v_+(t)}{1+G^\infty e^{y(t)}},  &&n_2^\Gamma(t)=\frac{\frac12 I_1+v_+(t)}{1+G^\infty e^{-y(t)}}.
\end{align*} 
The corresponding quadruple
\[
\bm Y_0(t)=(s_1^\Gamma(t),d_1^\Gamma(t),s_2^\Gamma(t),d_2^\Gamma(t))
\]
lies on the periodic orbit $\Gamma^\infty$.

Next, by \autoref{prop:m1n1m2n2-c},
\[
G(t)=G^\infty \exp\left(\frac12 I_1\int_t^\infty E(\tau)\,d\tau\right).
\]
Hence
\[
G(t)-G^\infty=G^\infty\left[\exp\left(\frac12 I_1\int_t^\infty E(\tau)\,d\tau\right)-1\right].
\]
Since
$\int_t^\infty E(\tau)\,d\tau \to 0$, Taylor expansion gives
\[
G(t)-G^\infty=O\left(\int_t^\infty E(\tau)\,d\tau\right).
\]

We now compare the full masses with their limiting periodic counterparts. For example,
\[
m_1(t)=\frac{\frac12 I_1-v_+(t)}{1+G(t)^{-1}e^{-y(t)}}.
\]
Since $v_+(t)$ and $y(t)$ remain bounded uniformly in time, the function
\[
F(G,y,v)=\frac{\frac12 I_1-v_+}{1+G^{-1}e^{-y}}
\]
is smooth in a neighbourhood of $(G^\infty,y(t),v_+(t))$. 
Therefore,
\[
m_1(t)-m_1^\Gamma(t)=\partial_G F(\widetilde G(t),y(t),v_+(t))\bigl(G(t)-G^\infty\bigr)
\]
for some intermediate value $\widetilde G(t)$ between $G(t)$ and $G^\infty$.

Since $G(t)\to G^\infty>0$ and $(y(t),v_+(t))$ remain bounded, the derivative $\partial_G F$ is uniformly bounded. Consequently,
\[
m_1(t)-m_1^\Gamma(t)=O\left(\int_t^\infty E(\tau)\,d\tau\right).
\]

The same argument applies to $m_2,n_1,n_2$. Since the transformation
\[
(s_j,d_j)=(m_j+n_j,m_j-n_j)
\]
is linear, we obtain
\[
\| \bm Y(t)-\bm Y_0(t)\|
=
O\left(
\int_t^\infty E(\tau)\,d\tau
\right).
\]

Finally, because $\bm Y_0(t)\in\Gamma^\infty$ for all $t$,
\[
\operatorname{dist}(\bm Y(t),\Gamma^\infty)
\le\| \bm Y(t)-\bm Y_0(t)\|,
\]
which proves
\[
\operatorname{dist}(\bm Y(t),\Gamma^\infty)\le C\int_t^\infty E(\tau)\,d\tau \to 0.
\]

\end{proof}

\section{Quantitative separation and trajectory analysis} \label{sec:Dynamics of separation}
\subsection{Relation Between the Distance \texorpdfstring{$D$}{D} and the Accumulated Imbalance of Amplitudes} 
We are interested in relating the distance $D(t)$ to other dynamical quantities to refine our understanding of the asymptotics of peakons.  
The following identity plays a central role in that task.  

\begin{theorem}[Structural identity]
\label{thm:SI}
Define
\[
R(t):=s_2(t)-s_1(t).
\]
Then
\begin{equation}
\label{eq:SI}
e^{2D(t)}-1=\left(e^{2D(0)}-1\right)\exp\left(\frac12\int_0^t R(\tau)\,d\tau\right).
\end{equation}
More generally, for any $t\ge0$ and $h\in\mathbb R$ such that $t+h\ge0$,
\begin{equation}
\label{eq:SI-h}
e^{2D(t+h)}-1=\left(e^{2D(t)}-1\right)\exp\left(\frac12\int_t^{t+h}R(\tau)\,d\tau\right).
\end{equation}
\end{theorem}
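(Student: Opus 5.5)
The plan is to turn \eqref{eq:dotD} into a linear ODE for the quantity $W(t):=e^{2D(t)}-1$ and integrate it exactly. From \eqref{eq:DS} we have $\dot x_2-\dot x_1=\tfrac14\big[s_2+s_1E-s_1-s_2E\big]=\tfrac14 R(1-E)$, which is \eqref{eq:dotD}. Since $x_1<x_2$ for all times by \autoref{lem:Lemma1}, we have $D>0$, $E=e^{-2D}\in(0,1)$, and hence $W>0$ throughout.

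Next I would differentiate $W$:
\[
\dot W=2\dot D\,e^{2D}=\tfrac12 R\,(1-e^{-2D})\,e^{2D}=\tfrac12 R\,(e^{2D}-1)=\tfrac12 R\,W .
\]
The key observation is that the factor $(1-E)$ in $\dot D$ combines with $e^{2D}$ to give exactly $W$. So $W$ satisfies the scalar linear ODE $\dot W=\tfrac12 R(t)W$, where $R=s_2-s_1$ is a continuous, bounded function along the solution (it is bounded by $I_1$ by \autoref{lem:Lemma1}).

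The next step is to integrate. Since $W>0$, I would take $\frac{d}{dt}\ln W=\tfrac12 R$ and integrate from $t$ to $t+h$, which gives \eqref{eq:SI-h}. This works for either sign of $h$ provided $t+h\ge 0$, since the solution exists globally on $[0,\infty)$ because all variables stay bounded. Alternatively one can avoid the logarithm by noting that $\frac{d}{d\tau}\big(W(\tau)e^{-\frac12\int_t^\tau R}\big)=0$. Setting $t=0$ and replacing $h$ by $t$ then yields \eqref{eq:SI}.

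There is no real obstacle here. The only step needing care is the algebraic identity $(1-e^{-2D})e^{2D}=e^{2D}-1$. That identity explains why $e^{2D}-1$, rather than $D$ or $e^{2D}$ itself, is the natural variable: it linearizes the evolution exactly. Positivity of $W$, which justifies the logarithm, comes from the non-collision statement in \autoref{lem:Lemma1}.
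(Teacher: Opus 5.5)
Your proposal is correct and follows essentially the same route as the paper. The paper differentiates $\log\frac{1-E}{E}=\log(e^{2D}-1)$ using $\dot E=-\frac12R(1-E)E$, which is exactly your computation $\dot W=\frac12 RW$ written in the variable $E$ instead of $D$, and your remarks on positivity of $W$ (via non-collision) and global existence simply make explicit what the paper uses implicitly.
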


\begin{proof}
Since $E=e^{-2D}$,
\[
\frac{1-E}{E}=e^{2D}-1.
\]
Using
\[
\dot E=-\frac12R(1-E)E,
\]
we obtain
\[
\frac{d}{dt}\log\bigl(e^{2D(t)}-1\bigr)=\frac{d}{dt}\log\frac{1-E(t)}{E(t)}=\frac12R(t).
\]
Integrating over $[0,t]$ gives \eqref{eq:SI}, while integration
over the oriented interval from $t$ to $t+h$ gives
\eqref{eq:SI-h}.
\end{proof}

\begin{remark} 
The above identity gives an exact relation between the growth of the distance $D(t)$ and the accumulated imbalance of the amplitudes $s_1(t)$ and $s_2(t)$ represented by $\int_0^t (s_2(\tau)-s_1(\tau))\, d\tau$.  
\end{remark} 

The next approximation lemma is an example of a comparison technique that ``freezes'' the moving coupling $G(t)$ in the formulas for the masses $m_1, m_2, n_1, n_2$ given by the equations in \autoref{prop:m1n1m2n2-c}.  The resulting expressions are then used as periodic comparison functions.  To do so, all expressions must be viewed as functions of $(v_+, y, G)$.  For example, we will write $s_j(v_+, y, G)$ for the frozen value $G$, with the exception of $s_j(v_+(t), y(t), G(t))$ and $R\big(v_+(t),y(t), G(t)\big)$, which we will 
write simply as $s_j(t), R(t)$ to avoid cluttered notation.

We begin with a simple comparison lemma.

\begin{lemma} \label{lem:comparison1} 
Let $\,  0\le a<b<\infty$. Then
\begin{equation}
\label{eq:FrozenR}
\int_a^bR\bigl(v_+(\tau),y(\tau),G(a)\bigr)\,d\tau \le\int_a^b R(\tau)\,d\tau\le\int_a^bR\bigl(v_+(\tau),y(\tau),G(b)\bigr)\,d\tau.
\end{equation}
More generally, if
\[
0\le a'\le a<b\le b'\le\infty,
\]
then
\begin{equation}
\label{eq:FrozenRE}
\int_a^bR\bigl(v_+(\tau),y(\tau),G(a')\bigr)\,d\tau \le\int_a^b R(\tau)\,d\tau\le\int_a^bR\bigl(v_+(\tau),y(\tau),G(b')\bigr)\,d\tau,
\end{equation}
where $G(\infty):=G^\infty$.
\end{lemma}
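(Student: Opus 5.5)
The plan is to reduce \eqref{eq:FrozenR} and \eqref{eq:FrozenRE} to a pointwise inequality between the integrands and then integrate over $[a,b]$. Two monotonicity facts are needed: (i) for fixed $(v_+,y)$ the frozen imbalance $G\mapsto R(v_+,y,G)$ is nonincreasing on $(0,\infty)$; (ii) the coupling $t\mapsto G(t)$ is nonincreasing, with limit $G^\infty$. Both are read off from \autoref{prop:m1n1m2n2-c}.

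For (ii), I would use \eqref{eq:GtoK1K2}:
\[
G(t)=G^\infty\exp\Bigl(\tfrac12 I_1\int_t^\infty E(\tau)\,d\tau\Bigr).
\]
Since $E>0$ and $E\in L^1(0,\infty)$ by \autoref{thm:gap}, the exponent is finite, nonincreasing in $t$, and tends to $0$. Hence $G(t)\ge G^\infty$, $G$ is nonincreasing, and $G(t)\to G^\infty=G(\infty)$. Consequently, for $0\le a'\le a\le \tau\le b\le b'\le\infty$,
\[
G(b')\le G(b)\le G(\tau)\le G(a)\le G(a').
\]

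For (i), I would write $R=(m_2+n_2)-(m_1+n_1)$ using the formulas of \autoref{prop:m1n1m2n2-c} with $G$ treated as a free parameter:
\[
R(v_+,y,G)=\frac{\tfrac12 I_1-v_+}{1+Ge^{y}}+\frac{\tfrac12 I_1+v_+}{1+Ge^{-y}}-\frac{\tfrac12 I_1-v_+}{1+G^{-1}e^{-y}}-\frac{\tfrac12 I_1+v_+}{1+G^{-1}e^{y}}.
\]
The key point is the sign of the numerators. Along the trajectory,
\[
\tfrac12 I_1-v_+(\tau)=m_1+m_2>0,\qquad \tfrac12 I_1+v_+(\tau)=n_1+n_2>0,
\]
by \autoref{lem:Lemma1}. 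With positive numerators, the first two terms are decreasing in $G$, because $G$ appears in the denominators with positive coefficients. The last two terms are increasing in $G$, because their denominators contain $G^{-1}$, and they enter with a minus sign. Therefore $R(v_+(\tau),y(\tau),\cdot)$ is nonincreasing in $G$ for every $\tau$.

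Combining (i) and (ii) gives, for every $\tau\in[a,b]$,
\[
R\bigl(v_+(\tau),y(\tau),G(a')\bigr)\le R(\tau)\le R\bigl(v_+(\tau),y(\tau),G(b')\bigr),
\]
where $R(\tau)=R(v_+(\tau),y(\tau),G(\tau))$ by definition. All three functions are continuous and bounded on the finite interval $[a,b]$, since $v_+$ and $y$ are bounded and $G$ stays in $[G^\infty,G(0)]$. Integrating over $[a,b]$ yields \eqref{eq:FrozenRE}. Taking $a'=a$ and $b'=b$ gives \eqref{eq:FrozenR}.

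The only step needing care is the positivity of $\tfrac12 I_1\pm v_+$. This positivity is what makes each frozen term monotone in $G$, and it comes for free from positivity of the masses. There is no real obstacle beyond checking the direction of the inequalities, namely that the larger $G(a')$ gives the lower bound.
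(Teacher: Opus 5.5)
Your proposal is correct and follows essentially the same route as the paper: $G\mapsto R(v_+,y,G)$ is decreasing because $s_1$ increases and $s_2$ decreases in $G$, and $G(t)$ is decreasing, so the integrands satisfy a pointwise sandwich that you then integrate over $[a,b]$. You also make explicit two points the paper leaves implicit: the positivity of the numerators, $\tfrac12 I_1-v_+=m_1+m_2>0$ and $\tfrac12 I_1+v_+=n_1+n_2>0$, and the decrease of $G$, which follows from the factor $\exp\bigl(\tfrac12 I_1\int_t^\infty E(\tau)\,d\tau\bigr)$ with $E>0$.
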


\begin{proof}
For fixed $v_+$ and $y$, the function
$G\longmapsto s_1(v_+,y,G)$ is increasing, whereas $G\longmapsto s_2(v_+,y,G)$ is decreasing. Consequently,
\[
G\longmapsto
R(v_+,y,G)=s_2(v_+,y,G)-s_1(v_+,y,G)
\]
is decreasing.

Since $G(t)$ is decreasing, for every $\tau\in[a,b]$ we have
\[
G(a)\ge G(\tau)\ge G(b).
\]
It follows that
\[
R\bigl(v_+(\tau),y(\tau),G(a)\bigr) \le R\bigl(v_+(\tau),y(\tau),G(\tau)\bigr) \le R\bigl(v_+(\tau),y(\tau),G(b)\bigr).
\]
Since
\[
R\bigl(v_+(\tau),y(\tau),G(\tau)\bigr)=R(\tau),
\]
integration over $[a,b]$ proves \eqref{eq:FrozenR}.

More generally, if
\[
0\le a'\le a<b\le b'\le\infty,
\]
then
\[
G(a')\ge G(a)\ge G(\tau)\ge G(b)\ge G(b').
\]
The monotonicity of $R$ with respect to $G$ therefore gives
\[
R\bigl(v_+(\tau),y(\tau),G(a')\bigr) \le R(\tau) \le R\bigl(v_+(\tau),y(\tau),G(b')\bigr).
\]
Integrating over $[a,b]$ proves \eqref{eq:FrozenRE}.
\end{proof}

The following formula can be readily inferred from \eqref{eq:S1S2} in \autoref{appendix:positions}: 
\begin{equation} \label{eq:Rab-exact}
\int_a^bR\big(v_+(\tau), y(\tau), G\big)=c(G) (b-a)+2P_{\mathrm{per}}(b;G)-2P_{\mathrm{per}}(a;G), 
\end{equation} 
where 
\[ c(G):=A(G) \frac{\Pi(n(G)\mid m)}{K(m)}, \quad A(G):=I_1\frac{1-G}{1+G}, \quad  n(G):=-\frac{4k^2G}{(1+G)^2}, \]
$K(m)$ and $\Pi(n\mid m)$ are complete elliptic integrals of the first and third kinds, respectively, and $P_{\mathrm{per}}(t;G)$ is a periodic function with period $T_A=\frac{T}{2}$ such that $P_{\mathrm{per}}(0;G)=0$.  
\eqref{eq:Rab-exact} allows one to strengthen the final statement of the previous lemma by evaluating the right and left hand sides.  

\begin{theorem}
\label{thm:frozen-integral-comparison}
Let
\[
0\le a'\le a<b\le b'\le\infty.
\]
Then
\begin{equation}\label{eq:frozen-integral-comparison}
c(G(a'))(b-a)+2\Delta_{[a,b]}P_{\mathrm{per}}(\,\cdot\,;G(a'))\le\int_a^bR(\tau)\,d\tau \le c(G(b'))(b-a)+2\Delta_{[a,b]}P_{\mathrm{per}}(\,\cdot\,;G(b')),
\end{equation}
where
\[
\Delta_{[a,b]}P_{\mathrm{per}}(\,\cdot\,;G):=P_{\mathrm{per}}(b;G)-P_{\mathrm{per}}(a;G),
\]
and \(G(\infty)=G^\infty\).

In particular, taking \(b=t\) and \(b'=\infty\), we obtain
\begin{equation} c(G(a'))(t-a)+2\Delta_{[a,t]}P_{\mathrm{per}}(\,\cdot\,;G(a'))\le \int_a^tR(\tau)\,d\tau \le c(G^\infty)(t-a)+2\Delta_{[a,t]}P_{\mathrm{per}}(\,\cdot\,;G^\infty).
\label{eq:FrozenRE2}
\end{equation}
\end{theorem}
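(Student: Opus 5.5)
The plan is to obtain \eqref{eq:frozen-integral-comparison} by combining the monotone comparison of \autoref{lem:comparison1} with the exact evaluation \eqref{eq:Rab-exact} of the frozen integrals.

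First, apply \eqref{eq:FrozenRE} from \autoref{lem:comparison1} with the given $0\le a'\le a<b\le b'\le\infty$. This sandwiches $\int_a^b R(\tau)\,d\tau$ between
\[
\int_a^b R\bigl(v_+(\tau),y(\tau),G(a')\bigr)\,d\tau
\quad\text{and}\quad
\int_a^b R\bigl(v_+(\tau),y(\tau),G(b')\bigr)\,d\tau ,
\]
with the convention $G(\infty)=G^\infty$.

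Second, each bounding integral has a fixed (frozen) value of $G$. So \eqref{eq:Rab-exact} applies verbatim with $G=G(a')$, respectively $G=G(b')$. It gives
\[
\int_a^b R\bigl(v_+(\tau),y(\tau),G\bigr)\,d\tau=c(G)(b-a)+2\bigl(P_{\mathrm{per}}(b;G)-P_{\mathrm{per}}(a;G)\bigr)
=c(G)(b-a)+2\Delta_{[a,b]}P_{\mathrm{per}}(\,\cdot\,;G).
\]
Substituting these into the two sides yields \eqref{eq:frozen-integral-comparison} directly.

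Third, the particular case: set $b=t$ and $b'=\infty$. Then $G(b')=G^\infty$ and the upper bound becomes $c(G^\infty)(t-a)+2\Delta_{[a,t]}P_{\mathrm{per}}(\,\cdot\,;G^\infty)$, which is \eqref{eq:FrozenRE2}.

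The only point needing care is that \eqref{eq:Rab-exact} is legitimately available at each frozen value used, including $G^\infty$. It requires that $R(v_+,y,G)$ with fixed $G$ be the explicit periodic expression built from the formulas of \autoref{prop:m1n1m2n2-c}, with the exact periodic pair $(v_+(t),y(t))$ of \eqref{eq:ysol}--\eqref{eq:vsol}. Here $v_+$ and $y$ do not depend on $G$: they are determined by the exact conserved quantities $I_1, I_3$ and the energy $\mathcal E$. Hence the computation in \autoref{appendix:positions} producing $c(G)$, $n(G)$ and $P_{\mathrm{per}}$ holds for any positive frozen $G$. Since $G(t)$ is positive, decreasing and tends to $G^\infty>0$, every frozen value is admissible. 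I would also check that $n(G)=-4k^2G/(1+G)^2$ keeps $\Pi(n(G)\mid m)$ well defined, which holds because $n(G)<0$. Beyond this verification, the proof is a direct substitution.
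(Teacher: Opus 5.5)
Your proposal is correct and follows the paper's proof: apply \autoref{lem:comparison1} in the form \eqref{eq:FrozenRE}, then evaluate each frozen-$G$ bound with \eqref{eq:Rab-exact}, and for \eqref{eq:FrozenRE2} set $b=t$, $b'=\infty$. Your extra check that \eqref{eq:Rab-exact} holds for every fixed $G>0$ (since $v_+$ and $y$ do not depend on $G$) is the content of \autoref{rem:AppC/G}, which the paper's one-line proof relies on implicitly.
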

\begin{proof}
Apply Lemma~\ref{lem:comparison1} and use, for every fixed \(G>0\),
\[
\int_a^b
R\bigl(v_+(\tau),y(\tau),G\bigr)\,d\tau=c(G)(b-a)+2\Delta_{[a,b]}P_{\mathrm{per}}(\,\cdot\,;G).
\]
\end{proof}
\begin{theorem}[Exponential decay of the interaction]
\label{thm:E-estimates}
Assume the generic case $\sqrt{K_1^\infty}+\sqrt{K_2^\infty}<I_1$. Then
\[
G^\infty<1, \qquad\text{equivalently}\qquad K_1^\infty<K_2^\infty.
\]
Moreover, there exist $t_*>0$, $\lambda_*>0$, and constants
$C_E,C_I>0$ such that, for all $t\ge t_*$,
\[
E(t)\le C_Ee^{-\lambda_*t},
\]
and
\[
\int_t^\infty E(\tau)\,d\tau \le C_Ie^{-\lambda_*t}.
\]
In particular,
\[
\int_0^\infty \left(\int_t^\infty E(\tau)\,d\tau \right)dt<\infty.
\]
\end{theorem}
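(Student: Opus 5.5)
Our plan is to first establish $G^\infty<1$ and then bound $E$ through the structural identity of \autoref{thm:SI} combined with the lower bound of \autoref{thm:frozen-integral-comparison}. The mechanism is that $E(t)=e^{-2D(t)}\approx (e^{2D(0)}-1)^{-1}\exp(-\tfrac12\int_0^tR)$, so exponential decay of $E$ is equivalent to $\int_0^tR$ growing at least linearly. By \eqref{eq:Rab-exact}, the averaged growth rate of $\int_0^t R$ along a frozen orbit is $c(G)=I_1\frac{1-G}{1+G}\,\Pi(n(G)\mid m)/K(m)$. For $n<1$ the ratio $\Pi/K$ is positive, and $n(G)=-4k^2G/(1+G)^2\le 0$, so the sign of $c(G)$ is the sign of $1-G$.

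\textbf{Step 1: $G^\infty<1$.} We argue by contradiction. First, since $G(t)=G^\infty\exp(\tfrac12 I_1\int_t^\infty E)$ is strictly decreasing (because $E>0$), $G^\infty\ge1$ would force $G(t)>1$ for all finite $t$. Then \autoref{lem:comparison1} with $b'=\infty$ gives $\int_0^tR\le c(G^\infty)\,t+2\Delta P_{\mathrm{per}}$, and $c(G^\infty)\le 0$. Hence $\int_0^t R$ is bounded above, so by \eqref{eq:SI} the quantity $e^{2D(t)}-1$ stays bounded. This contradicts $D\to\infty$ from \autoref{thm:gap}. The equivalence $G^\infty<1\iff K_1^\infty<K_2^\infty$ is immediate from $G^\infty=\sqrt{K_1^\infty/K_2^\infty}$. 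The generic-case hypothesis guarantees $k>0$ and a nondegenerate orbit, so \eqref{eq:Rab-exact} applies. The equality case $G^\infty=1$ is handled the same way, since then $c=0$.

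\textbf{Step 2: linear growth of $\int R$.} Because $G(t)\downarrow G^\infty<1$, we choose $t_*$ with $G(t_*)<1$, so that $c_*:=c(G(t_*))>0$. We then apply the left inequality of \eqref{eq:FrozenRE2} with $a=a'=t_*$. Since $P_{\mathrm{per}}(\cdot;G(t_*))$ is continuous and periodic, hence bounded by some $M$, this gives
\[\int_{t_*}^t R(\tau)\,d\tau\ \ge\ c_*(t-t_*)-4M .\]
Inserting this into \eqref{eq:SI-h} with base point $t_*$ yields $e^{2D(t)}-1\ge (e^{2D(t_*)}-1)e^{-2M}e^{\frac12 c_*(t-t_*)}$. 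Since $E\le (e^{2D}-1)^{-1}$, it follows that $E(t)\le C_E e^{-\lambda_* t}$ with $\lambda_*=c_*/2$. Integrating gives $\int_t^\infty E\le (C_E/\lambda_*)e^{-\lambda_* t}$ for $t\ge t_*$. The double integral is then finite: on $[0,t_*]$ the integrand is bounded by $\|E\|_{L^1}$, and on $[t_*,\infty)$ it decays exponentially.

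\textbf{Main obstacle.} The delicate point is justifying the sign of $c(G)$, namely the positivity of $\Pi(n(G)\mid m)/K(m)$ for the relevant parameter $m$ (presumably $m=-k^2$, as in \autoref{sec:hyperpendulum}). We will check that $n(G)<1$, which holds because $n(G)\le 0$. With this, the integrand $1/\big((1-n\sin^2\theta)\sqrt{1-m\sin^2\theta}\big)$ defining $\Pi$ is positive, and so is the integrand defining $K$. If the exact form of \eqref{eq:Rab-exact} in the appendix uses a different normalization, we would instead argue directly. The fact we need is that the time average of $R(v_+,y,G)$ over a period is a strictly decreasing function of $G$ that vanishes at $G=1$, and this follows from the symmetry $y\mapsto -y$ of the orbit. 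Indeed, under $y\mapsto-y$ combined with $G\mapsto G^{-1}$, the formulas of \autoref{prop:m1n1m2n2-c} exchange $s_1$ and $s_2$, so the average of $R$ vanishes at $G=1$.
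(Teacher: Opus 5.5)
Your proposal is correct and follows essentially the same route as the paper. You obtain $G^\infty<1$ from the upper frozen-parameter bound with $b'=\infty$, the structural identity, and $D(t)\to\infty$. The exponential estimate comes from the lower frozen bound with $a=a'=t_*$, fed back into the structural identity; your use of base point $t_*$ in \eqref{eq:SI-h}, where the paper uses base point $0$ and absorbs $\int_0^{t_*}R$ into a constant $C_*$, is only a cosmetic difference.
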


\begin{proof}
Recall 
\[
c(G)=I_1\frac{1-G}{1+G}\, \frac{\Pi(n(G)\mid m)}{K(m)},\qquad  n(G)=-\frac{4k^2G}{(1+G)^2}.
\]
Setting $a=0$ in the frozen-parameter estimate  \eqref{eq:FrozenRE2} yields 

\[
\int_0^tR(\tau)\,d\tau \le c(G^\infty)t+2\Delta_{[0,t]}P_{\mathrm{per}}(\,\cdot \, ;G^\infty).  
\]
The periodic term is bounded. On the other hand, the structural identity
\[
e^{2D(t)}-1=\left(e^{2D(0)}-1\right)\exp\left(\frac12\int_0^tR(\tau)\,d\tau \right)
\]
and $D(t)\to\infty$ imply
\[
\int_0^tR(\tau)\,d\tau\to+\infty, 
\]
hence $c(G^\infty)>0$.  
Since
$\Pi(n(G^\infty)\mid m)>0$ and $K(m)>0$, 
\[I_1 \frac{1-G^\infty}{1+G^\infty}>0, \] 
therefore $G^\infty<1$.

Now choose $t_*>0$ so that $G(t_*)<1$, and put
\[
c_*:=c(G(t_*))>0.
\]
For $t\ge t_*$, choose $a=t_*, b=t$ in the frozen-parameter comparison \eqref{eq:FrozenRE2} on the interval
$[t_*,t]$ to get 
\[
\int_{t_*}^tR(\tau)\,d\tau
\ge
c_*(t-t_*)+2\Delta_{[t_*,t]}P_{\mathrm{per}}(\, \cdot \,; G(t_*)).
\]
Since the last term is bounded, there exists $C_*>0$ such that
\[
\int_0^tR(\tau)\,d\tau \ge c_*t-C_*
\]
for all \(t\ge t_*\).

Using the structural identity again,
\[
E(t)=\frac{1}{1+\beta \exp\left(\frac12\int_0^tR(\tau)\,d\tau\right)},\qquad\beta=e^{2D(0)}-1,
\]
and hence
\[
E(t)
\le\frac1\beta \exp\left(-\frac12\int_0^tR(\tau)\,d\tau \right)\le \frac{e^{C_*/2}}{\beta}e^{-c_*t/2}.
\]
Thus the first estimate holds with
\[
\lambda_*:=\frac{c_*}{2},
\qquad
C_E:=\frac{e^{C_*/2}}{\beta}, 
\]
and the second estimate holds with 
\[C_I:=\frac{C_E}{\lambda_*}. 
\]
Finally, to prove the last estimate, we note that the integral over $[t_*,\infty)$ is finite by the exponential estimate, while the integral over $[0,t_*]$ is finite because $\int_t^\infty E(\tau)\, d\tau$ is finite and bounded there.  
Equivalently, one can apply Tonelli's theorem to the positive function $F(t,\tau)=\mathbf{1}_{t\le \tau} E(\tau)$ to obtain the identity $\int_0^\infty\left(\int_t^\infty\, E(\tau)\, d\tau\right)\, dt=\int_0^\infty \tau E(\tau)\, d\tau$ and 
use the estimate for $E(\tau)$.  
\end{proof}

We can now refine the estimate in 
\autoref{prop:Gamma-approx}, namely, 
\[
\text{dist}(\bm Y(t), \Gamma^\infty)\;\le\; C \int_t^\infty E(s)\,ds\;\xrightarrow[t\to\infty]{}\;0.
\]  

\begin{corollary}
\label{cor:exponential-approach-Gamma}
Under the assumptions of \autoref{thm:E-estimates}, there exists $\lambda_*>0$ such that 
\[
\operatorname{dist}\bigl(\bm Y(t),\Gamma^\infty\bigr)
=
O(e^{-\lambda_*t})
\qquad\text{as }t\to\infty.
\]
\end{corollary}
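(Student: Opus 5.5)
The corollary follows by combining the tracking estimate of \autoref{prop:Gamma-approx} with the exponential tail bound of \autoref{thm:E-estimates}. First, \autoref{prop:Gamma-approx} gives, for a constant $C>0$ independent of $t$,
\[
\operatorname{dist}\bigl(\bm Y(t),\Gamma^\infty\bigr)\le \|\bm Y(t)-\bm Y_0(t)\|\le C\int_t^\infty E(\tau)\,d\tau .
\]
Second, under the generic assumption $\sqrt{K_1^\infty}+\sqrt{K_2^\infty}<I_1$, \autoref{thm:E-estimates} provides $t_*$, $\lambda_*>0$ and $C_I$ with $\int_t^\infty E(\tau)\,d\tau\le C_I e^{-\lambda_* t}$ for all $t\ge t_*$. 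Chaining the two inequalities gives $\operatorname{dist}(\bm Y(t),\Gamma^\infty)\le C C_I e^{-\lambda_* t}$ for $t\ge t_*$. This is the claimed $O(e^{-\lambda_* t})$ bound, with the same rate $\lambda_*=c(G(t_*))/2$ as in the theorem.

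The one point to check is that $C$ in \autoref{prop:Gamma-approx} is genuinely uniform in $t$, since only then does the product bound hold as stated. Uniformity comes from the mean-value argument in that proof, which needs the partial derivatives $\partial_G$ of the four mass formulas of \autoref{prop:m1n1m2n2-c} to be bounded uniformly along the trajectory. These derivatives are controlled by three facts:
\begin{itemize}
\item $(v_+(t),y(t))$ lies on a fixed energy level of the hyperbolic pendulum \eqref{eq:sys}, so it ranges over a compact set;
\item $G(t)=G^\infty\exp\bigl(\tfrac12 I_1\int_t^\infty E\bigr)$ lies between $G^\infty>0$ and $G(0)<\infty$, since $G$ is monotone decreasing;
\item the denominators $1+G^{\pm1}e^{\pm y}$ are at least $1$.
\end{itemize}
Hence all derivatives in $G$ are bounded on the relevant compact region, and $C$ is uniform.

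The Taylor step $G(t)-G^\infty=O\bigl(\int_t^\infty E\bigr)$ also holds uniformly, because $e^{x}-1\le x e^{x}$ with $x\le \tfrac12 I_1\|E\|_{L^1}$ bounded. No real obstacle remains. The substantive work, namely establishing $c(G^\infty)>0$ and the exponential decay of $E$ via the structural identity \eqref{eq:SI} and the frozen comparison \autoref{thm:frozen-integral-comparison}, has already been done in \autoref{thm:E-estimates}.
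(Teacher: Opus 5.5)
Your proposal is correct and follows the paper's own route: the corollary is obtained by chaining the tracking bound $\operatorname{dist}(\bm Y(t),\Gamma^\infty)\le C\int_t^\infty E(\tau)\,d\tau$ from \autoref{prop:Gamma-approx} with the tail estimate $\int_t^\infty E(\tau)\,d\tau\le C_I e^{-\lambda_* t}$ from \autoref{thm:E-estimates}, keeping the same rate $\lambda_*=c(G(t_*))/2$. Your check that $C$ is uniform in $t$ (compact energy level for $(v_+,y)$, $G(t)\in[G^\infty,G(0)]$, denominators at least $1$, and $e^x-1\le xe^x$) spells out what the paper's proof of \autoref{prop:Gamma-approx} takes for granted.
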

The rate of change of the separation \(D(t)\) is governed by
\(R(t)\). Indeed, recalling \eqref{eq:dotD},
\[
\dot D(t)=\frac14R(t)(1-E(t)).
\]
We prove below that although $R(t)$ may change sign, its integral over one period $T_A$ has a definite sign once $G$ lies entirely on one
side of $1$. Consequently, the separation is monotone when sampled at time increments of \(T_A\).

\begin{theorem}\label{thm:Tsampling} 
    Let $T$ be the period of $v_+$ and $y$ and set $T_A=\frac{T}{2}$.  
    \begin{enumerate}
        \item The condition $K_2(t_-)<K_1(t_-)$, or equivalently $G(t_-)>1$, at some time $t_-$ implies that, for all prior times $t$, the distance decreases over each period $T_A$, i.e., for $T_A<t<t_-$, $D(t) < D(t-T_A)$. Moreover, there exist $\lambda_->0$ and a constant $B_E>0$ such that, for all $T_A<t<t_-$,
        \[
E(t)\ge \frac{1}{1+B_Ee^{-\lambda_-t}}.
\]
        
        \item The condition $K_2(t_+)>K_1(t_+)$, or equivalently $G(t_+)<1$, at some time $t_+$ implies that, for all subsequent times $t$, the distance increases over each period $T_A$, i.e., for $t>t_+$, $D(t)<D(t+T_A)$. 
        Moreover, there exist $\lambda_+>0$, and a constant 
$C_E>0$ such that, for all $t\ge t_+$,
\[
E(t)\le C_Ee^{-\lambda_+t},
\]
    \end{enumerate}
    \label{D_thm}
\end{theorem}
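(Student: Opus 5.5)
The plan is to combine the structural identity \eqref{eq:SI-h} with the frozen-parameter comparison of \autoref{thm:frozen-integral-comparison}, applied over windows of length exactly $T_A$. Over such windows the periodic correction $2\Delta P_{\mathrm{per}}$ vanishes identically, because $P_{\mathrm{per}}(\cdot;G)$ has period $T_A$ for every fixed $G$. What remains is the sign of the secular coefficient $c(G)=I_1\frac{1-G}{1+G}\frac{\Pi(n(G)\mid m)}{K(m)}$. Since $\Pi(n\mid m)>0$ and $K(m)>0$ for $n<0$, we have $\operatorname{sgn} c(G)=\operatorname{sgn}(1-G)$. We also use repeatedly that $G(t)=G^\infty\exp\bigl(\tfrac12 I_1\int_t^\infty E\bigr)$ is strictly decreasing, so $G(t)=\sqrt{K_1(t)/K_2(t)}$. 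Hence $G(t_+)<1$ gives $G(\tau)<1$ for all $\tau\ge t_+$, and $G(t_-)>1$ gives $G(\tau)>1$ for all $\tau\le t_-$.

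For part 2, fix $t\ge t_+$ and apply \eqref{eq:frozen-integral-comparison} with $a'=a=t$ and $b=b'=t+T_A$. This gives
\[
\int_t^{t+T_A}R\,d\tau \ge c(G(t))\,T_A+2\Delta_{[t,t+T_A]}P_{\mathrm{per}}(\cdot;G(t)) = c(G(t))\,T_A \ge c(G(t_+))\,T_A>0.
\]
The last step requires $c$ to be monotone in $G$, or at least $c(G(t))\ge c_*>0$ on $[t_+,\infty)$. I would get the latter more cheaply by using $a'=t_+$ instead of $a'=t$, which is allowed since $t_+\le t$: the lower bound is then $c(G(t_+))T_A>0$ directly. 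Feeding this into \eqref{eq:SI-h} with $h=T_A$ gives $e^{2D(t+T_A)}-1>e^{2D(t)}-1$, that is, $D(t+T_A)>D(t)$.

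For the exponential bound in part 2, I would repeat the proof of \autoref{thm:E-estimates} with $t_*$ replaced by $t_+$ and $c_*=c(G(t_+))$. Using \eqref{eq:FrozenRE2} on $[t_+,t]$, the periodic part is bounded, so $\int_0^t R\ge c_*t-C_*$. The identity $E=\bigl(1+\beta e^{\frac12\int_0^tR}\bigr)^{-1}$ then yields $E(t)\le C_E e^{-\lambda_+ t}$ with $\lambda_+=c_*/2$.

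Part 1 is the mirror image. For $T_A<t<t_-$, take $a=t-T_A$, $b=t$, $b'=t_-$, and use the upper comparison:
\[
\int_{t-T_A}^{t}R\,d\tau \le c(G(t_-))\,T_A<0.
\]
Then \eqref{eq:SI-h} with $h=-T_A$ gives $D(t)<D(t-T_A)$. For the lower bound on $E$, integrate the upper comparison over $[0,t]$ with $b'=t_-$ to get $\int_0^tR\le -c_-t+C_-$, where $c_-=-c(G(t_-))>0$ and $C_-$ bounds the periodic term. Then
\[
E(t)=\frac{1}{1+\beta e^{\frac12\int_0^t R}}\ge \frac{1}{1+B_E e^{-\lambda_- t}},\qquad B_E=\beta e^{C_-/2},\quad \lambda_-=c_-/2 .
\]

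The main obstacle I expect is justifying the exact cancellation of the periodic term over $T_A$-windows. This means confirming from \eqref{eq:Rab-exact} and \autoref{appendix:positions} that $P_{\mathrm{per}}(\cdot;G)$ is $T_A$-periodic with respect to the absolute time variable, with the phase $t_0$ built in, and not merely up to a $G$-dependent shift. If only $T$-periodicity were available, the monotonicity statement would have to be weakened, or one would need an extra symmetry of $R$ under the half-period shift $y\mapsto -y$, $v_+\mapsto -v_+$. A secondary point is whether $t_-$ can exist at all given $G^\infty<1$ in the generic case; the statement is vacuous unless the initial data have $G(0)>1$, which is consistent since $G$ decreases.
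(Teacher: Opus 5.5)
Your proposal follows the paper's proof essentially step for step. You freeze $G$ at $t_+$ (resp.\ $t_-$) in \eqref{eq:frozen-integral-comparison} on windows of length $T_A$ so the periodic increment drops out, read the sign of $c(G)$ from $1-G$, conclude via \eqref{eq:SI-h}, and get the bounds on $E$ from the $[0,t]$ comparison together with $E=\bigl(1+\beta e^{\frac12\int_0^tR}\bigr)^{-1}$. The obstacle you flag resolves as the paper assumes: $P_{\mathrm{ell}}$ has period $2K(m)$ in $u=\omega(t-t_0)$, and $y(t+T_A)=-y(t)$ leaves $\cosh y$ invariant, so $P_{\mathrm{per}}(\cdot;G)$ is genuinely $T_A$-periodic in absolute time.
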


\begin{proof}
Suppose first that \(G(t_-)>1\). Since \(G\) is decreasing,
\[
G(t)\ge G(t_-)>1 \qquad\text{for }t\le t_-.
\]
In particular,
\[
c(G(t_-))<0.
\]

Let $T_A<t\le t_-$. Applying
\eqref{eq:frozen-integral-comparison} on the interval
$[t-T_A,t]\), with \(b'=t_-$, gives
\[
\int_{t-T_A}^tR(\tau)\,d\tau \le c(G(t_-))T_A+2\Delta_{[t-T_A,t]}P_{\mathrm{per}}(\,\cdot\,;G(t_-)).
\]
The periodic increment vanishes, and therefore
\[
\int_{t-T_A}^tR(\tau)\,d\tau\le c(G(t_-))T_A<0.
\]
By the structural identity \eqref{eq:SI-h},
\[
\begin{aligned}
e^{2D(t)}-1&=\bigl(e^{2D(t-T_A)}-1\bigr)\exp\left(\frac12\int_{t-T_A}^{t}R(\tau)\,d\tau\right)\\
&<e^{2D(t-T_A)}-1.
\end{aligned}
\]
Hence
\[
D(t)<D(t-T_A).
\]

To estimate \(E\), apply the upper comparison on \([0,t]\),
again with \(b'=t_-\):
\[
\int_0^tR(\tau)\,d\tau
\le
c(G(t_-))t
+
2\Delta_{[0,t]}
P_{\mathrm{per}}(\,\cdot\,;G(t_-)).
\]
Let
\[
\lambda_-:=-\frac12c(G(t_-))>0.
\]
Since the periodic increment is bounded, there exists \(M_->0\)
such that
\[
\frac12\int_0^tR(\tau)\,d\tau
\le -\lambda_-t+M_-.
\]
Consequently,
\[
E(t)=\frac{1}{1+\beta\exp\left(\frac12\int_0^tR(\tau)\,d\tau\right)} \ge\frac{1}{1+B_Ee^{-\lambda_-t}},
\]
where
\[
B_E:=\beta e^{M_-}.
\]

Now suppose that $G(t_+)<1$. Then
\[
c(G(t_+))>0.
\]
For $t\ge t_+$, apply the lower comparison on
$[t,t+T_A]$, with $a'=t_+$. This gives
\[
\int_t^{t+T_A}R(\tau)\,d\tau \ge c(G(t_+))T_A+2\Delta_{[t,t+T_A]} P_{\mathrm{per}}(\,\cdot\,;G(t_+)).
\]
Again the periodic increment vanishes, so
\[
\int_t^{t+T_A}R(\tau)\,d\tau
\ge c(G(t_+))T_A>0.
\]
The structural identity \eqref{eq:SI-h} therefore implies
\[
D(t+T_A)>D(t).
\]

Finally, repeating the lower-bound argument in the proof of
\autoref{thm:E-estimates}, with $t_+$ as the freezing time,
gives constants $C_E>0$ and
\[
\lambda_+:=\frac12c(G(t_+))>0
\]
such that
\[
E(t)\le C_Ee^{-\lambda_+t}, \qquad t\ge t_+.
\]
\end{proof}

\begin{remark} \label{rem: Close Encounters}
\autoref{thm:Tsampling} reveals that the sign of $G-1$ determines the long-term trend of the peak separation when sampled over one period $T_A$.
Symbolically, 
\begin{align} 
G>1 \Rightarrow D(t+T_A)<D(t), \\
G<1 \Rightarrow D(t+T_A)>D(t). 
\end{align} 
Thus, $G=1$, or equivalently, $K_1=K_2$, marks the transition between two dynamical regimes: period-by-period contraction and period-by-period expansion of the peak separation.  
From a physical point of view, it is natural to ask whether two separated peakons will have a "close encounter" before their distance grows to infinity. 
For the scalar Camassa-Holm peakons, this question is essentially determined by the ordering of the amplitudes. In the present case of peakons with an internal dynamics, the situation is more subtle, since it is 
governed by a slowly varying quantity $G(t)$. \autoref{thm:Tsampling} provides a means to study this question. If $K_2(0)>K_1(0)$, then the peak separation increases, on average, from the very beginning, making a close encounter impossible. On the other hand, the condition $K_1(0)>K_2(0)$ guarantees an initial time regime of period-by-period contraction. Whether this contraction results in an appreciable reduction of separation depends on how 
long the peakons stay in the region $G>1$. For example, if $K_1(0)>K_2(0)$ yet the difference $|K_2(0)-K_1(0)|$ is sufficiently small, the transition $G=1$ might occur while the peakons are widely separated.  
The peakons do initially approach each other, but the encounter is hardly "close". Thus, $K_1(0)>K_2(0)$ can be viewed as a necessary condition for substantial approach rather than a guarantee of it.  
   
\end{remark}     
\subsection{Asymptotic behaviour of positions}
\label{sec:positions}

\begin{proposition}[Asymptotic behaviour of the peakon trajectories]
\label{prop:x-asymptotics}
Assume
\[
\sqrt{K_1^\infty}+\sqrt{K_2^\infty}<I_1,
\]
and let
\[\bm Y_0(t)=\bigl(s_1^\Gamma(t),d_1^\Gamma(t),s_2^\Gamma t),d_2^\Gamma(t)
\bigr)
\in\Gamma^\infty
\]
denote the periodic solution on the limiting orbit
\(\Gamma^\infty\) corresponding to the periodic variables
$(v_+(t),y(t))$.  The full solution \(\bm Y_0(t)\) has period
$T$, while the amplitudes $s_j^\Gamma(t)$ have period
\[
T_A=\frac{T}{2}.
\]

Then there exist constants $c_1,c_2\in\mathbb R$ and
$T_A$-periodic functions $Q_1,Q_2$  such that
\begin{equation}
x_1(t)=c_1+\frac14\langle s_1^\Gamma\rangle t
+\frac14Q_1(t)+o(1),
\label{eq:x1asym}
\end{equation}
and
\begin{equation}
x_2(t)=c_2+\frac14\langle s_2^\Gamma\rangle t
+\frac14Q_2(t)+o(1),
\label{eq:x2asym}
\end{equation}
as $t\to\infty$, where
\[
\langle s_j^\Gamma\rangle=\frac1{T_A}\int_0^{T_A}s_j^\Gamma(\tau)\,d\tau
\]
is the time average over one period of the amplitudes.

In particular,
\[
x_j(t)=v_j^\infty t+c_j+\text{\rm a bounded periodic function}
+o(1),
\]
where
\begin{align}
v_1^\infty&=\frac14\langle s_1^\Gamma\rangle=
\frac{I_1}{8}
\left(1-\frac{1-G^\infty}{1+G^\infty}\frac{\Pi(n(G^\infty)\mid m)}{K(m)}\right),
\label{eq:v1-infty}
\\
v_2^\infty&=\frac14\langle s_2^\Gamma\rangle=\frac{I_1}{8}\left(1+\frac{1-G^\infty}{1+G^\infty}\frac{\Pi(n(G^\infty)\mid m)}{K(m)}\right).
\label{eq:v2-infty}
\end{align}
Here \(K(m)\) and \(\Pi(n\mid m)\) denote the complete elliptic
integrals of the first and third kinds in the elliptic-parameter
convention, with
\[
m=-k^2,\qquad n(G)=-\frac{4k^2G}{(1+G)^2}.
\]
\end{proposition}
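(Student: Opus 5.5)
We integrate the position equations \eqref{eq:x1} and \eqref{eq:x2} directly, splitting each right-hand side into a periodic part plus summable errors. From \eqref{eq:DS},
\[
\dot x_1=\tfrac14 s_1(t)+\tfrac14 s_2(t)E(t),\qquad \dot x_2=\tfrac14 s_2(t)+\tfrac14 s_1(t)E(t).
\]
We then write $s_j(t)=s_j^\Gamma(t)+\bigl(s_j(t)-s_j^\Gamma(t)\bigr)$, where $s_j^\Gamma$ is built from the frozen formulas with $G^\infty$, exactly as in the proof of \autoref{prop:Gamma-approx}.

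The first error term is $s_j-s_j^\Gamma$. By that proof it is $O\bigl(\int_t^\infty E\bigr)$, and by \autoref{thm:E-estimates} it is therefore $O(e^{-\lambda_* t})$, so it lies in $L^1(0,\infty)$. The coupling terms $\tfrac14 s_k E$ are also integrable, since $s_k\le I_1$ and $E\in L^1$ by \autoref{thm:gap}. Hence
\[
x_j(t)=x_j(0)+\tfrac14\int_0^t s_j^\Gamma(\tau)\,d\tau+\int_0^\infty \epsilon_j\,d\tau-\int_t^\infty\epsilon_j\,d\tau,
\]
where $\epsilon_j$ collects the integrable errors. The last integral is $o(1)$; in fact it is exponentially small.

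Next we treat the periodic integral. We check that $s_j^\Gamma$ has period $T_A=T/2$. From \autoref{prop:m1n1m2n2-d}, $s_1^\Gamma=m_1^\Gamma+n_1^\Gamma$ depends on $(v_+,y)$. The half-period shift $t\mapsto t+2K/\omega$ sends $\sn\mapsto-\sn$ and $\mathrm{cn}\mapsto-\mathrm{cn}$, so $(y,v_+)\mapsto(-y,-v_+)$. Under this map the formula for $m_1^\Gamma$ turns into that of $n_1^\Gamma$ and vice versa, so $s_1^\Gamma$ is invariant; the same holds for $s_2^\Gamma$. We then set
\[
Q_j(t)=\int_0^t\bigl(s_j^\Gamma-\langle s_j^\Gamma\rangle\bigr)\,d\tau.
\]
This function is $T_A$-periodic because its integrand has zero mean over a period. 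Collecting the constants into $c_j$ gives \eqref{eq:x1asym} and \eqref{eq:x2asym}.

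It remains to evaluate the averages. Since $s_1^\Gamma+s_2^\Gamma=I_1$, we have $\langle s_1^\Gamma\rangle+\langle s_2^\Gamma\rangle=I_1$. Also $\langle s_2^\Gamma-s_1^\Gamma\rangle=\langle R(v_+,y,G^\infty)\rangle=c(G^\infty)$, which follows from \eqref{eq:Rab-exact} by taking $b-a=T_A$, so that the $P_{\mathrm{per}}$ increments cancel. Solving these two equations yields $\langle s_{1,2}^\Gamma\rangle=\tfrac12\bigl(I_1\mp c(G^\infty)\bigr)$. Dividing by $4$ and inserting $c(G)=I_1\frac{1-G}{1+G}\,\Pi(n(G)\mid m)/K(m)$ gives \eqref{eq:v1-infty} and \eqref{eq:v2-infty}.

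The main obstacle is the elliptic-integral evaluation behind \eqref{eq:Rab-exact}, i.e.\ computing $\int_0^{T_A} s_j^\Gamma\,d\tau$ explicitly; this is deferred to \autoref{appendix:positions}. If it had to be verified here, I would first write $R^\Gamma$ in terms of $v_+$ and $\cosh y$ using the formulas of \autoref{prop:m1n1m2n2-d}. The substitution $\sinh(y/2)=k\sin\phi$ then turns the average into a rational function of $\sn^2$. That rational function integrates to $\Pi(n\mid m)$ with $n=-4k^2G/(1+G)^2$, while the odd $v_+$ contributions integrate to the periodic part.
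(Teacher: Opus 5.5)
Your proposal is correct and follows essentially the same route as the paper. You split $\dot x_j=\frac14 s_j^\Gamma+r_j$ with $r_j=O(E)+O\bigl(\int_t^\infty E\bigr)\in L^1(0,\infty)$ by \autoref{thm:E-estimates}, integrate, write $\int_0^t s_j^\Gamma=\langle s_j^\Gamma\rangle t+Q_j(t)$ with $Q_j$ periodic, and read off the averages from \autoref{appendix:positions}. Your extra check of $T_A$-periodicity via $\sn(u+2K)=-\sn u$, $\mathrm{cn}(u+2K)=-\mathrm{cn}\,u$ (which swaps $m_j^\Gamma\leftrightarrow n_j^\Gamma$) is valid. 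So is your derivation of $\langle s_{1,2}^\Gamma\rangle=\frac12\bigl(I_1\mp c(G^\infty)\bigr)$ from \eqref{eq:Rab-exact} and $s_1^\Gamma+s_2^\Gamma=I_1$, which agrees with \eqref{eq:s1ave}--\eqref{eq:s2ave}.
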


\begin{proof}
Recall that
\[
\dot x_1=\frac14(s_1+s_2E),
\qquad
\dot x_2=\frac14(s_1E+s_2),
\]
where $E(t)=e^{-2D(t)}$.

By \autoref{prop:Gamma-approx},
\[
s_j(t)
=
s_j^\Gamma(t)
+
O\left(
\int_t^\infty E(\tau)\,d\tau
\right).
\]
Since the amplitudes \(s_j(t)\) are bounded, it follows that
\[
\dot x_1(t)
=
\frac14s_1^\Gamma(t)+r_1(t),
\]
where
\[
r_1(t)=O(E(t))
+
O\left(
\int_t^\infty E(\tau)\,d\tau
\right).
\]
Similarly,
\[
\dot x_2(t)
=
\frac14s_2^\Gamma(t)+r_2(t),
\]
where
\[
r_2(t)
=
O(E(t))
+
O\left(
\int_t^\infty E(\tau)\,d\tau
\right).
\]

By \autoref{thm:E-estimates}, both \(E(t)\) and
\[
t\longmapsto\int_t^\infty E(\tau)\,d\tau
\]
are integrable on \([0,\infty)\). Hence
\[
r_j\in L^1(0,\infty).
\]
Define
\[
C_j=\int_0^\infty r_j(\tau)\,d\tau.
\]
Then
\[
\int_0^t r_j(\tau)\,d\tau
=
C_j-\int_t^\infty r_j(\tau)\,d\tau
=
C_j+o(1).
\]
Consequently,
\[
x_j(t)
=
x_j(0)
+\frac14\int_0^t s_j^\Gamma(\tau)\,d\tau
+C_j+o(1).
\]

Since \(s_j^\Gamma\) is \(T_A\)-periodic,
\[
\int_0^t s_j^\Gamma(\tau)\,d\tau
=
\langle s_j^\Gamma\rangle t+Q_j(t),
\]
where
\[
Q_j(t)
=
\int_0^t
\left(
s_j^\Gamma(\tau)-\langle s_j^\Gamma\rangle
\right)
d\tau.
\]
The vanishing mean of the integrand implies
\[
Q_j(t+T_A)=Q_j(t),
\]
so \(Q_j\) is \(T_A\)-periodic.

Setting
\[
c_j=x_j(0)+C_j
\]
gives \eqref{eq:x1asym}--\eqref{eq:x2asym}. The formulas for
\(\langle s_j^\Gamma\rangle\), and hence for the asymptotic
velocities, are derived in \autoref{appendix:positions}; see
\eqref{eq:s1ave} and \eqref{eq:s2ave}.
\end{proof}
\section{Degenerate/Boundary case}\label{sec:BC} 
\begin{lemma}[Boundary condition and reduction to equal masses] \label{lem:BC}
Assume that all masses are positive. Then the following statements
are equivalent:
\begin{enumerate}
\item
\[
\sqrt{K_1^\infty}+\sqrt{K_2^\infty}=I_1.
\]
\item
\[
m_j(t)=n_j(t),\qquad j=1,2,\quad t\geq0.
\]
\item
\[
m_j(0)=n_j(0),\qquad j=1,2.
\]
\end{enumerate}
\end{lemma}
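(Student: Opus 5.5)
The plan is to prove the cycle $(3)\Rightarrow(2)\Rightarrow(1)\Rightarrow(2)\Rightarrow(3)$. The step $(2)\Rightarrow(3)$ is trivial: take $t=0$.

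\textbf{Step $(3)\Rightarrow(2)$.} I would use the $(s,d)$ form \eqref{eq:SD1}--\eqref{eq:SD2} of the equations. Regard $s_1,s_2,E$ as given continuous coefficients along the actual solution. Then the pair $(d_1,d_2)$ satisfies a linear homogeneous system:
\[
\dot d_1=-\tfrac12 s_2E\,d_1-\tfrac12 s_1 d_2,\qquad \dot d_2=\tfrac12 s_1E\,d_2+\tfrac12 s_2 d_1 .
\]
Under (3) we have $d_j(0)=0$. Uniqueness for linear ODEs then forces $d_1\equiv d_2\equiv 0$, i.e. $m_j(t)=n_j(t)$ for all $t\ge0$.

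\textbf{Step $(2)\Rightarrow(1)$.} If $d_j\equiv0$, then $K_j(t)=s_j(t)^2$, so $\sqrt{K_1(t)}+\sqrt{K_2(t)}=s_1+s_2=I_1$ for every $t$. Letting $t\to\infty$ and using the existence of $K_j^\infty$ gives (1).

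\textbf{Step $(1)\Rightarrow(2)$ (main step).} Here I would exploit the exactly conserved pendulum energy $\mathcal E=\tfrac12v_+^2+\sqrt{I_3}\cosh y$ of \autoref{prop:evolv+}, which is conserved for the full flow and not just the limiting one.

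First, by \eqref{eq:def y}, $\sqrt{I_3}\cosh y=\tfrac12(1-E)(m_1n_2+m_2n_1)$.

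Second, a direct expansion shows
\[
\tfrac18(n_1+n_2-m_1-m_2)^2+\tfrac12(m_1n_2+m_2n_1)=\tfrac18\bigl(I_1^2-K_1(t)-K_2(t)\bigr).
\]
This follows because $I_1^2-(n_1+n_2-m_1-m_2)^2=4(m_1+m_2)(n_1+n_2)$.

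Combining the two gives the exact identity
\[
\mathcal E=\tfrac18\bigl(I_1^2-K_1(t)-K_2(t)\bigr)-\tfrac12E(t)(m_1n_2+m_2n_1).
\]
Now let $t\to\infty$. Since $E\to0$ (\autoref{thm:gap}) and the masses are bounded, we get $\mathcal E=\tfrac18(I_1^2-K_1^\infty-K_2^\infty)$. Similarly, \eqref{eq:I3} gives $I_3=K_1^\infty K_2^\infty/16$. Therefore
\[
\mathcal E-\sqrt{I_3}=\tfrac18\Bigl(I_1^2-\bigl(\sqrt{K_1^\infty}+\sqrt{K_2^\infty}\bigr)^2\Bigr),
\]
which vanishes under (1). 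In the normalization \eqref{eq:Enorm} this means $k=0$. By the equilibrium case of \autoref{sec:hyperpendulum}, $v_+\equiv0$ and $y\equiv0$ for all $t$.

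It remains to translate this back to the masses. From $y\equiv0$ we get $m_1n_2=m_2n_1$, so $m_1/m_2=n_1/n_2=:r>0$. From $v_+\equiv0$ we get $m_1+m_2=n_1+n_2$, i.e. $(1+r)m_2=(1+r)n_2$. Hence $m_2=n_2$, and then $m_1=n_1$, for all $t$, which is (2).

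The main obstacle is this last implication. The key point is to evaluate the conserved pendulum energy at $t=\infty$ and recognize it as $\tfrac18(I_1^2-K_1^\infty-K_2^\infty)$, which is the identity of \autoref{lem:v+}, now justified for the full flow by the limit argument above. A naive attempt via monotonicity of $\sqrt{K_1}+\sqrt{K_2}$ fails, since $K_1$ decreases while $K_2$ increases.
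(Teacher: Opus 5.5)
Your proof is correct and follows the paper's own argument. For $(3)\Rightarrow(2)$ you use uniqueness for the linear system in $(d_1,d_2)$. For $(1)\Rightarrow(2)$ you use the bound $\mathcal E\ge\sqrt{I_3}$, where equality forces $v_+\equiv0$ and $y\equiv0$, and hence $m_j=n_j$.

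You go slightly further than the paper in one place. You justify $\mathcal E=\tfrac18(I_1^2-K_1^\infty-K_2^\infty)$ and $I_3=K_1^\infty K_2^\infty/16$ for the full flow, using your exact identity $\mathcal E=\tfrac18(I_1^2-K_1-K_2)-\tfrac12E(m_1n_2+m_2n_1)$ together with $E\to0$. The paper instead simply cites \autoref{appendix:relations}, where these relations are derived only on $\Gamma$, i.e.\ with $E\equiv0$.
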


\begin{proof}
Assume first that
\[
\sqrt{K_1^\infty}+\sqrt{K_2^\infty}=I_1.
\]
The invariant relations (see \autoref{appendix:relations})
\[
I_3=\frac{K_1^\infty K_2^\infty}{16},
\qquad
\mathcal E
=
\frac{I_1^2-K_1^\infty-K_2^\infty}{8}
\]
give
\[
\mathcal E=\frac{
\left(\sqrt{K_1^\infty}+\sqrt{K_2^\infty}\right)^2
-K_1^\infty-K_2^\infty
}{8}=\frac{\sqrt{K_1^\infty K_2^\infty}}{4}=\sqrt{I_3}.
\]
On the other hand,
\[
\mathcal E=
\frac12v_+^2+\sqrt{I_3}\cosh y
\geq \sqrt{I_3},
\]
with equality if and only if $v_+=0$ and $y=0$. Hence
\[
v_+(t)\equiv0,\qquad y(t)\equiv0.
\]

Since $y=0$, the definition of $y$ gives
\[
m_1n_2=m_2n_1.
\]
By positivity,
\[
\frac{m_1}{n_1}=\frac{m_2}{n_2}=:q.
\]
Moreover,
\[
0=v_+
=\frac12(n_1+n_2-m_1-m_2)
=\frac12(1-q)(n_1+n_2).
\]
Since $n_1+n_2>0$, we obtain $q=1$, and therefore
\[
m_j(t)=n_j(t),\qquad j=1,2.
\]
Thus $1\Rightarrow2$.

The implication $2\Rightarrow3$  is immediate.

Finally, assume $m_j(0)=n_j(0)$, or equivalently
$d_1(0)=d_2(0)=0$. The variables $d_j=m_j-n_j$ satisfy
\[
\dot d_1
=
-\frac12s_2E\,d_1-\frac12s_1d_2,
\qquad
\dot d_2
=
\frac12s_1E\,d_2+\frac12s_2d_1.
\]
By uniqueness, \(d_1(t)=d_2(t)=0\) for all \(t\geq0\). Hence
\[
s_j(t)=\sqrt{K_j(t)},
\]
and therefore
\[
I_1=s_1(t)+s_2(t)
=\sqrt{K_1(t)}+\sqrt{K_2(t)}.
\]
Passing to the limit \(t\to\infty\) yields
\[
I_1=\sqrt{K_1^\infty}+\sqrt{K_2^\infty}.
\]
Thus \(3\Rightarrow1\).
\end{proof}

By \autoref{lem:BC}, the boundary condition
\[
\sqrt{K_1^\infty}+\sqrt{K_2^\infty}=I_1
\]
is equivalent to
\[
m_j(t)=n_j(t),\qquad j=1,2.
\]
Hence \(v\equiv0\), and the system reduces to the standard positive
two-peakon Camassa--Holm flow. 

The long-time asymptotics of that flow are classical; see \cite{camassa-holm} or the results in \cite{beals-sattinger-szmigielski:Stieltjes} based on the spectral approach. In particular,
\[
K_2^\infty>K_1^\infty,
\qquad
s_j(t)\to \sqrt{K_j^\infty},
\]
and
\[
x_j(t)=\frac14\sqrt{K_j^\infty}\,t+c_j+o(1).
\]
Consequently,
\[
D(t)
=
\frac{\sqrt{K_2^\infty}-\sqrt{K_1^\infty}}{4}\,t
+(c_2-c_1)+o(1).
\]

\section{Gallery of pictures illustrating the dynamics of two peakons} \label{sec:gallery} 
Our gallery consists of 5 cases, with the first two representing how \eqref{eq:2CH} can resemble the CH equation, and the last 3 emphasizing how the sign of $G-1$ directly controls the peak separation $D$.

\subsection{Case 1; Boundary (CH) Case}
 $\mathbf{(x_1,x_2,n_1,m_1,n_2,m_2)(0)=(0, 5, 15, 15, 10, 10),\hspace{0.2cm} k=0}$ \\
In the first case, we choose $m_j(0)=n_j(0)$. By the results of the last section, we expect that $x_j$, $m_j$, and $n_j$ will evolve according to an ordinary CH two-peakon equation, and indeed, the figures show typical CH dynamics for both positions and amplitudes.
\begin{figure}[H]
    \centering
    \begin{subfigure}{0.48\textwidth}
        \includegraphics[width=\linewidth]{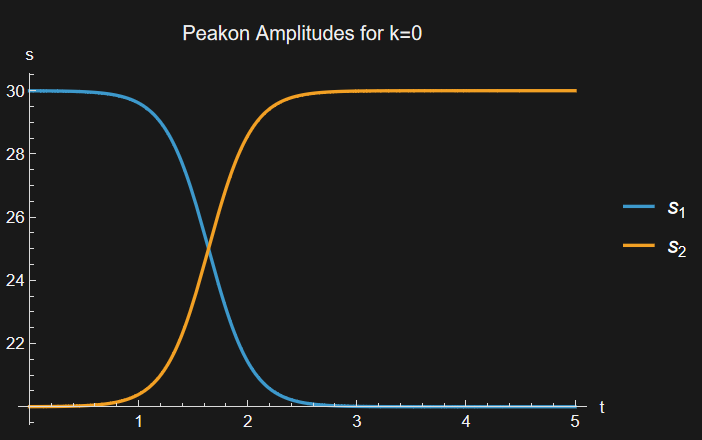}
        \caption{Peakon Amplitudes for $k=0$, Showing Typical CH Behavior}
    \end{subfigure}
    \hfill
    \begin{subfigure}{0.48\textwidth}
        \includegraphics[width=\linewidth]{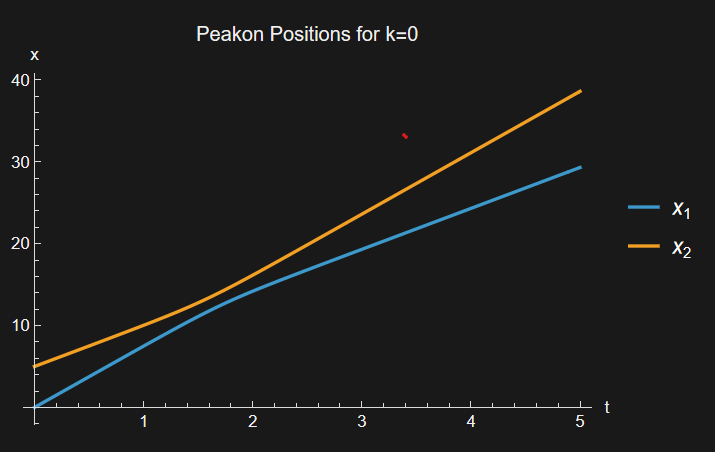}
        \caption{Peakon Positions for $k=0$}
    \end{subfigure}
    
\end{figure}
\subsection{Case 2; Weak Oscillations}
$\mathbf{(x_1,x_2,n_1,m_1,n_2,m_2)(0)=({0, 5, 14, 16, 11, 9}),\hspace{0.2cm} k=0.084}$ \\
The second case represents a slight perturbation of the CH equation, achieved by the choice of small $k=0.084$. This results in small amplitude oscillations via $v_+(t)$ and $y(t)$, though the CH terms $s_jE$ still dominate the dynamics. We note that for early times $t<1$ the masses evolve on an approximately periodic orbit, due to the relatively large initial separation ($5$ units).  Indeed, when the initial separation \(D(0)=x_2(0)-x_1(0)\) is large, the interaction parameter
\[
E(0)=e^{-2D(0)}
\]
is already small. Consequently, the full dynamics start close to the decoupled system $E\equiv 0$. In this regime, the amplitudes initially shadow the periodic orbit
\[
\Gamma_0=\Gamma(I_1,K_1(0),K_2(0)).
\]
The long-time asymptotic orbit studied in Section~\ref{sec:GC} is instead
\[
\Gamma^\infty=\Gamma(I_1,K_1^\infty,K_2^\infty).
\]
Thus, the visible oscillations in the numerical plots can be understood as finite-time manifestations of the same periodic mechanism that governs the asymptotic regime: large spatial separation suppresses the direct interaction term $E$, allowing the internal degrees of freedom to evolve approximately according to the periodic $E\equiv 0$ dynamics.

\begin{figure}[H]
    \centering
    \begin{subfigure}{0.48\textwidth}
        \includegraphics[width=\linewidth]{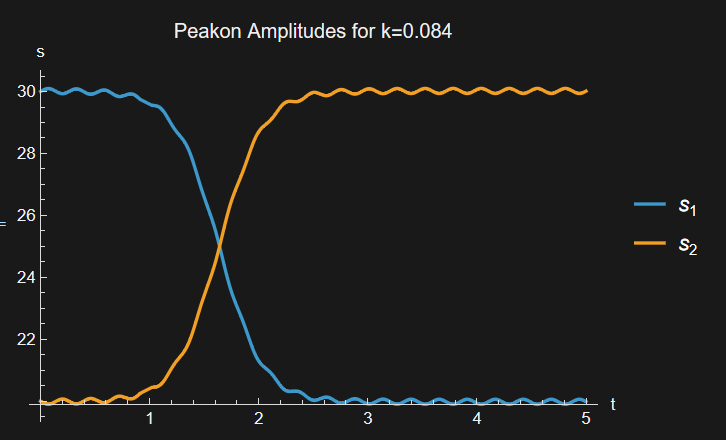}
        \caption{Peakon Amplitudes for $k=0.084$}
    \end{subfigure}
     \begin{subfigure}{0.48\textwidth}
        \includegraphics[width=\linewidth]{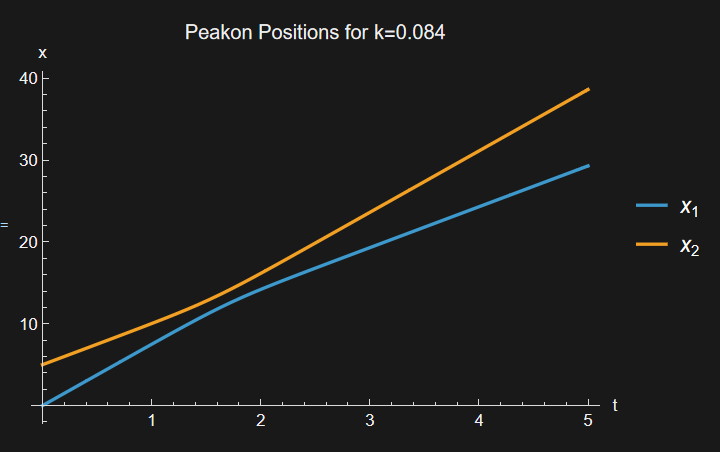}   
        \caption{Peakon Positions for $k=0.084$}
    \end{subfigure}
    \begin{subfigure}{0.48\textwidth}
        \includegraphics[width=\linewidth]{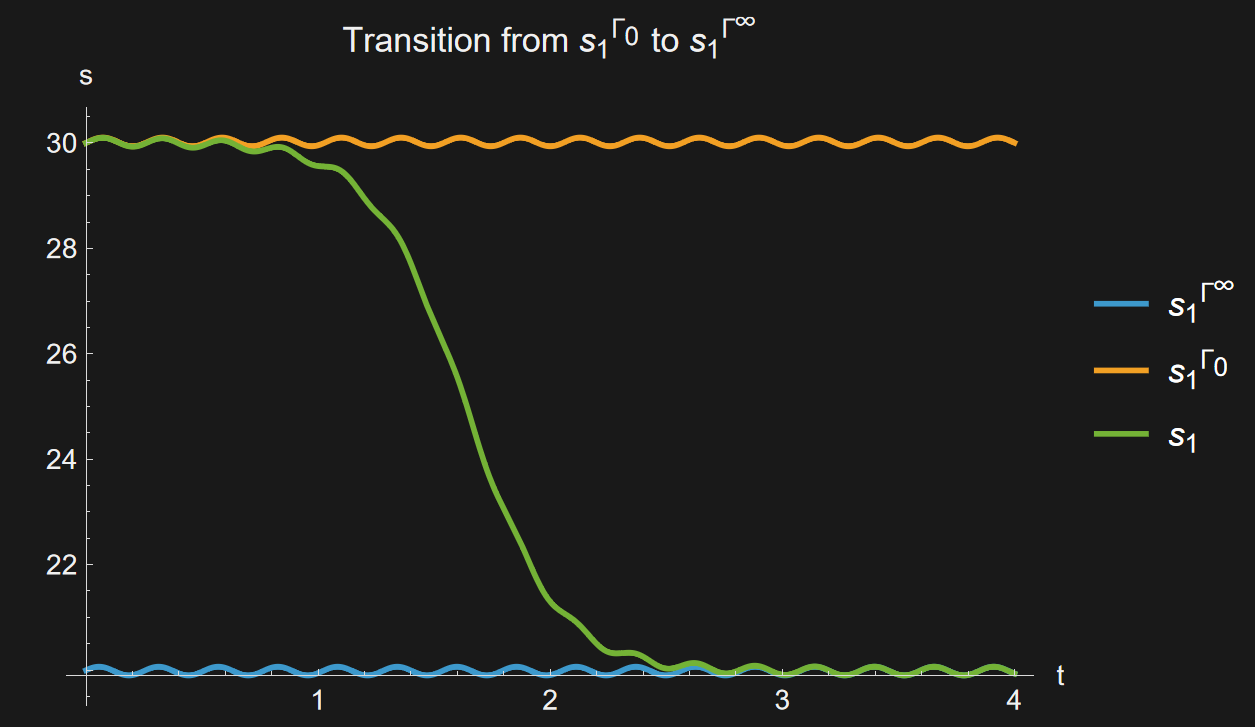}
        \caption{$s_1$ transitions from $s_1^{\Gamma_0}$ to $s_1^{\Gamma^{\infty}}$}
    \end{subfigure}
     \begin{subfigure}{0.48\textwidth}
        \includegraphics[width=\linewidth]{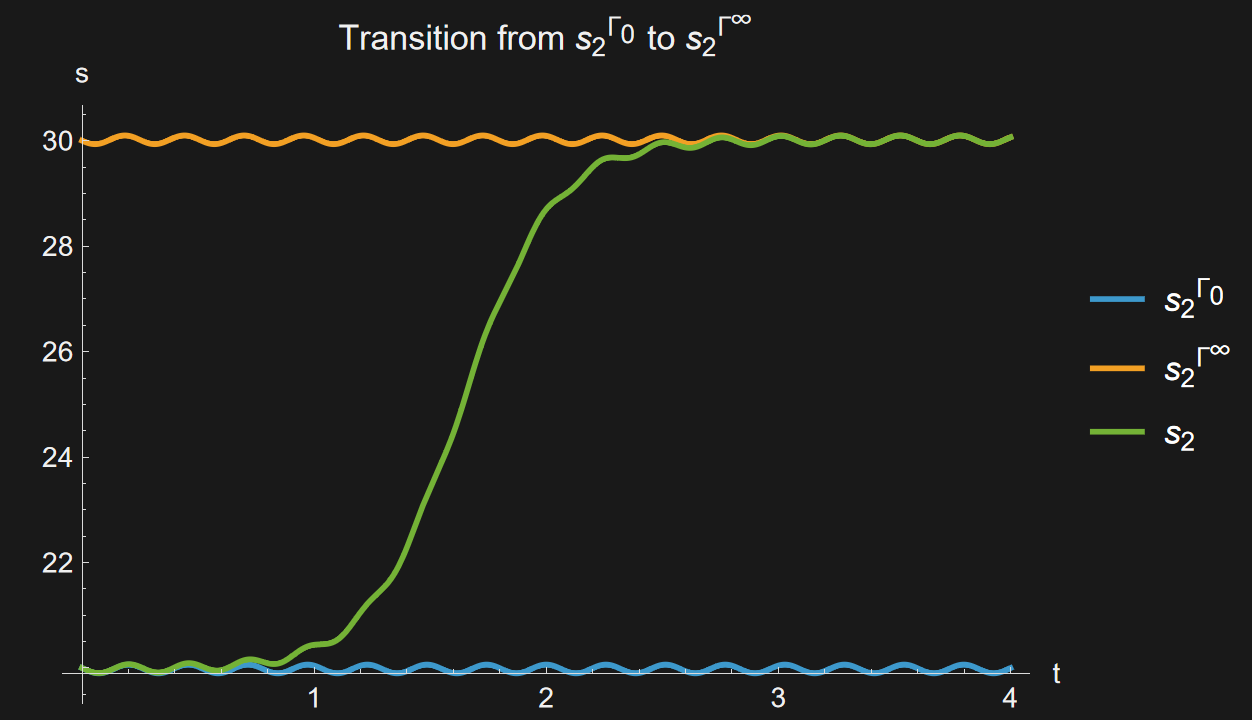} 
        \caption{$s_2$ transitions from $s_2^{\Gamma_0}$ to $s_2^{\Gamma^{\infty}}$}
    \end{subfigure}
\end{figure}

\subsection{Case 3; \texorpdfstring{$G(0)>1$}{G(0)>1} Forces the Peak Gap to Shrink}
$\mathbf{(x_1,x_2,n_1,m_1,n_2,m_2)(0)=({0, 8, 4, 8, 24, 0.1}),\hspace{0.2cm} k=0.808}$ \\
The remaining three cases all differ more radically from CH, as the dynamics are dominated by the oscillations in $v_+$ and $y$, due to the relatively large value of $k$, which is close to $1$ in each case.
One of the striking features of case 3 is that the initial amplitude of the trailing peakon, $s_1(0)=12$, is significantly less than that of the leading peakon, $s_2(0)=24.1$. In the CH case, the peak separation would start to grow to infinity immediately. In contrast, the figures below show that a "close encounter" occurs. Namely, the trailing peakon closes the gap with its neighbor before being deflected, at around $t\in(2.8,3.6)$. By \autoref{thm:Tsampling}, the shrinking in the peak gap $D$ occurs in the region $G>1$. The observed close encounter is therefore a result of the choice $G(0)=3.65>1$. 
\
\begin{figure}[H]
    \centering
    \begin{subfigure}{0.48\textwidth}
        \includegraphics[width=\linewidth]{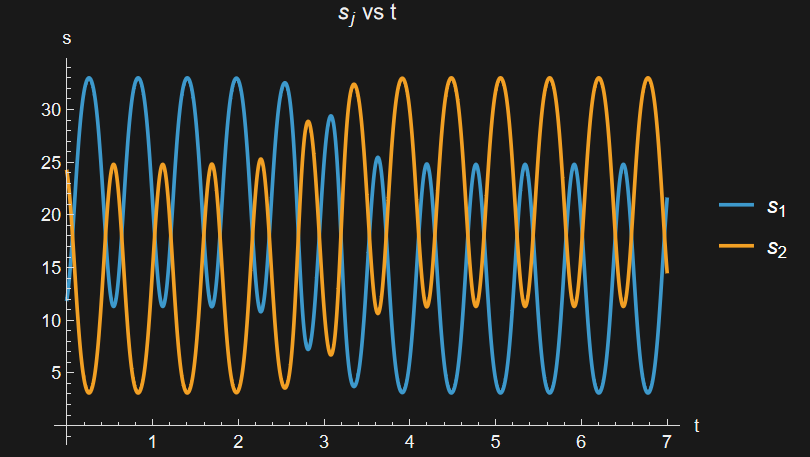}
        \caption{Peakon Amplitudes for $k=0.808$, $G(0)=3.65$}
    \end{subfigure}
     \begin{subfigure}{0.48\textwidth}
        \includegraphics[width=\linewidth]{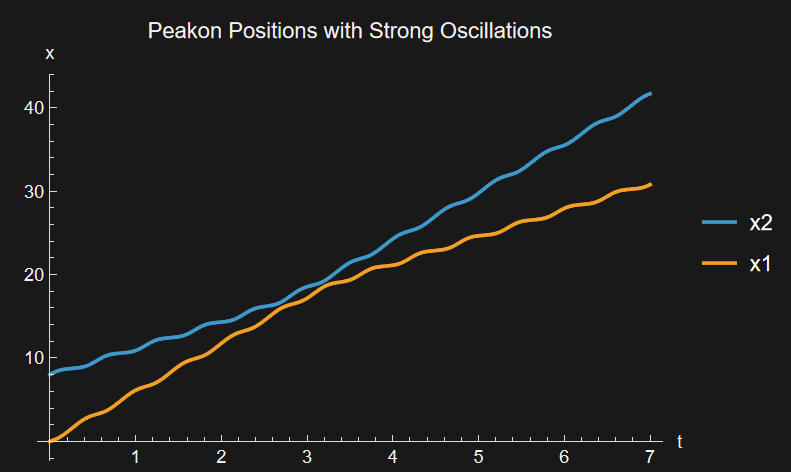}   
        \caption{Peakon Positions for $k=0.808$, $G(0)=3.65$}
    \end{subfigure}
    \label{Case3}
    \end{figure}
\subsection{Case 4; \texorpdfstring{$G(0)<1$}{G(0)<1} Forces an Immediate Increase in the Peak Gap}
$\mathbf{(x_1,x_2,n_1,m_1,n_2,m_2)(0)=({0, 8, 24, 0.1, 4, 8}),\hspace{0.2cm} k=0.808}$\\
The fourth case is the reverse of case 3, where we have chosen $G(0)=0.27<1$, rather than $G(0)>1$. As discussed in \autoref{rem: Close Encounters}, the choice $G(0)<1$ forces the gap $D$ to immediately grow over each period, namely $D(t)<D(t+T_A)$ for all $t$. The growth rate must in particular, satisfy the bound $E\leq C_Ee^{-\lambda_+t}$.
\begin{figure}[h!]
    \centering
    \begin{subfigure}{0.48\textwidth}
        \includegraphics[width=\linewidth]{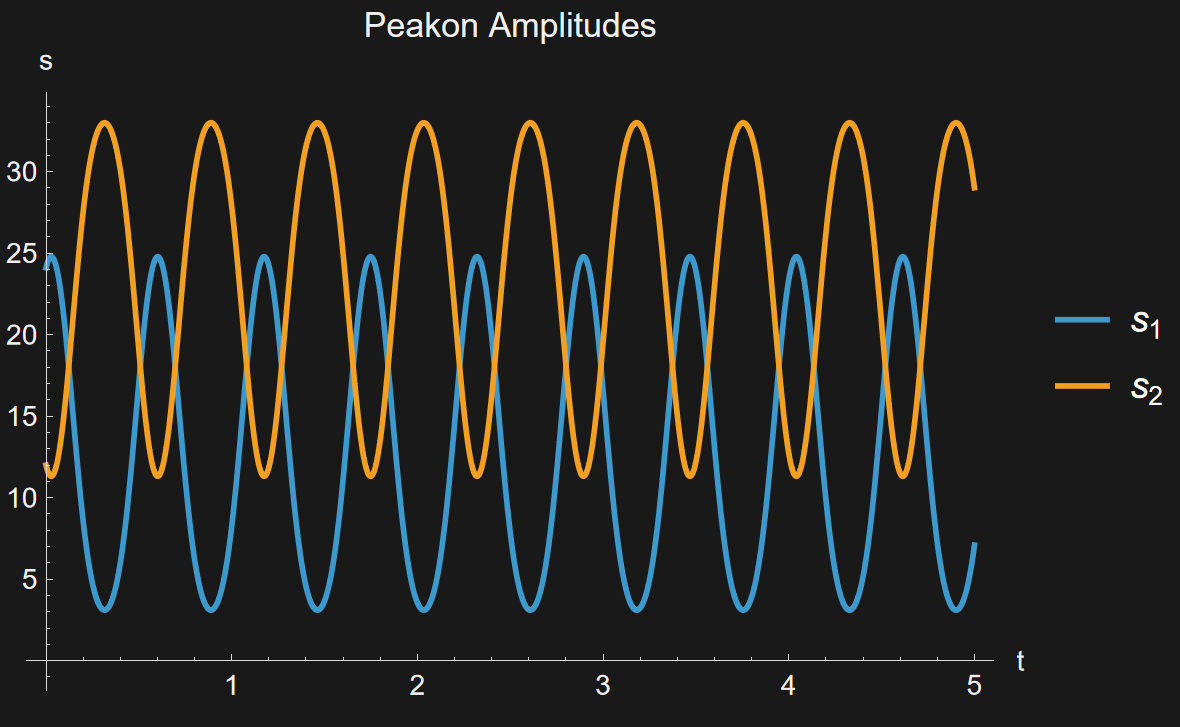}
        \caption{Peakon Amplitudes for $G(0)=0.27$}
    \end{subfigure}
     \begin{subfigure}{0.48\textwidth}
        \includegraphics[width=\linewidth]{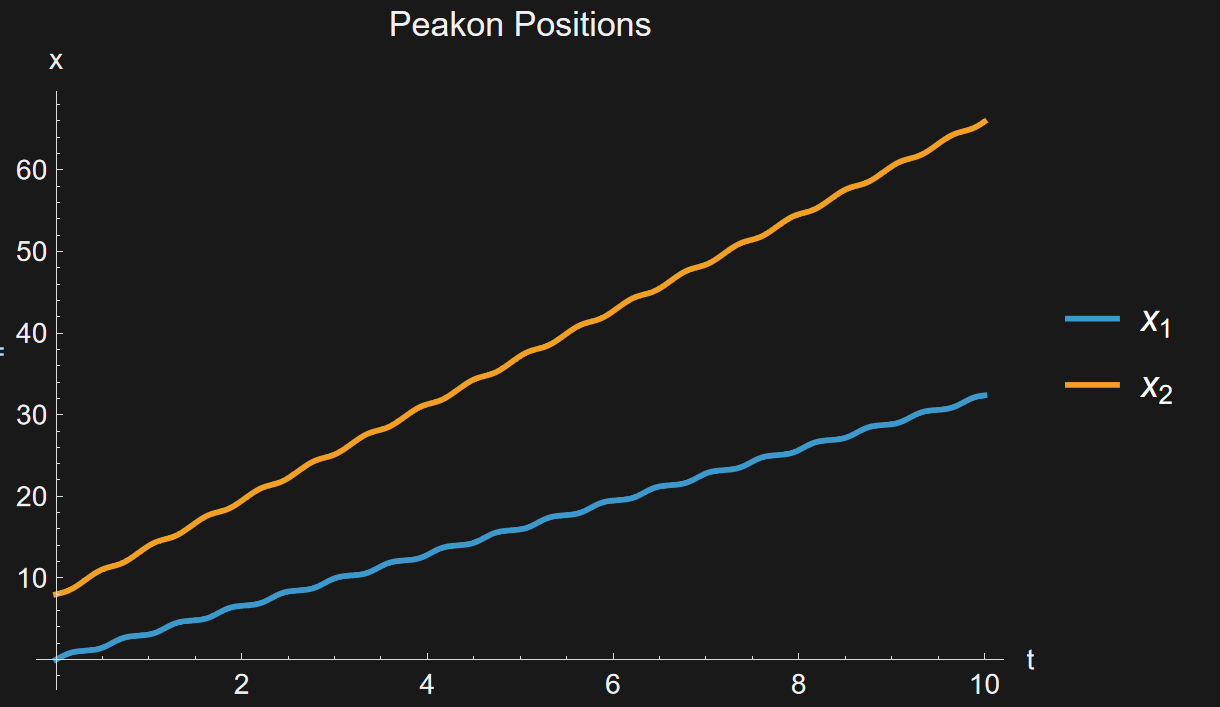}   
        \caption{Peakon Positions for $G(0)=0.27$}
    \end{subfigure}
    \end{figure}
\subsection{Case 5; Peakons Drift with Nearly Equal Average Velocities when 
\texorpdfstring{$G(0)=1$}{G(0)=1}}
$\mathbf{(x_1,x_2,n_1,m_1,n_2,m_2)(0)=({0, 8, 4.5, 2, 1, 9}),\hspace{0.2cm} k=0.943}$\\
The final case is analogous to the CH case in which two separated peaks have roughly equal amplitudes $m_1(0)\approx m_2(0)$, and move together with nearly equal velocities. In the case of \eqref{eq:2CH}, this is achieved by setting $K_1(0)\approx K_2(0)$, or equivalently $G(0)\approx1$. In the following case, we have chosen $G(0)=1$. Because the initial separation $D(0)=8$ is large, the amplitudes start close to the periodic orbit $\Gamma_0$. On this orbit, the average velocities $\frac14\langle s_1^{\Gamma_0} \rangle=\frac14\langle s_2^{\Gamma_0} \rangle$ are identical. Of course, we know that $G$ will immediately decrease, so that the distance begins to grow over each period (\autoref{thm:Tsampling}). However, because $G$ varies slowly in the region with $E<<1$, the amplitudes remain close to the orbit $\Gamma_0$, and the peaks separate exceedingly slowly.
\begin{figure}[H]
    \centering
    \begin{subfigure}{0.48\textwidth}
        \includegraphics[width=\linewidth]{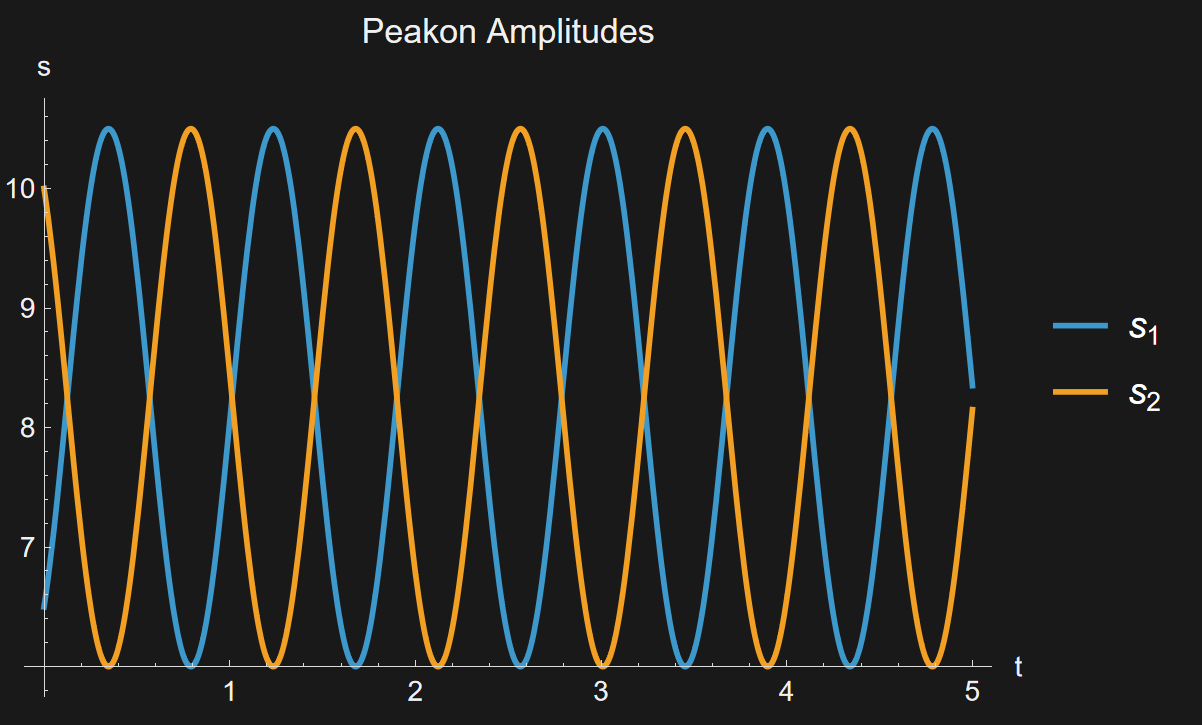}
        \caption{Peakon Amplitudes for $G(0)=1$}
    \end{subfigure}
     \begin{subfigure}{0.48\textwidth}
        \includegraphics[width=\linewidth]{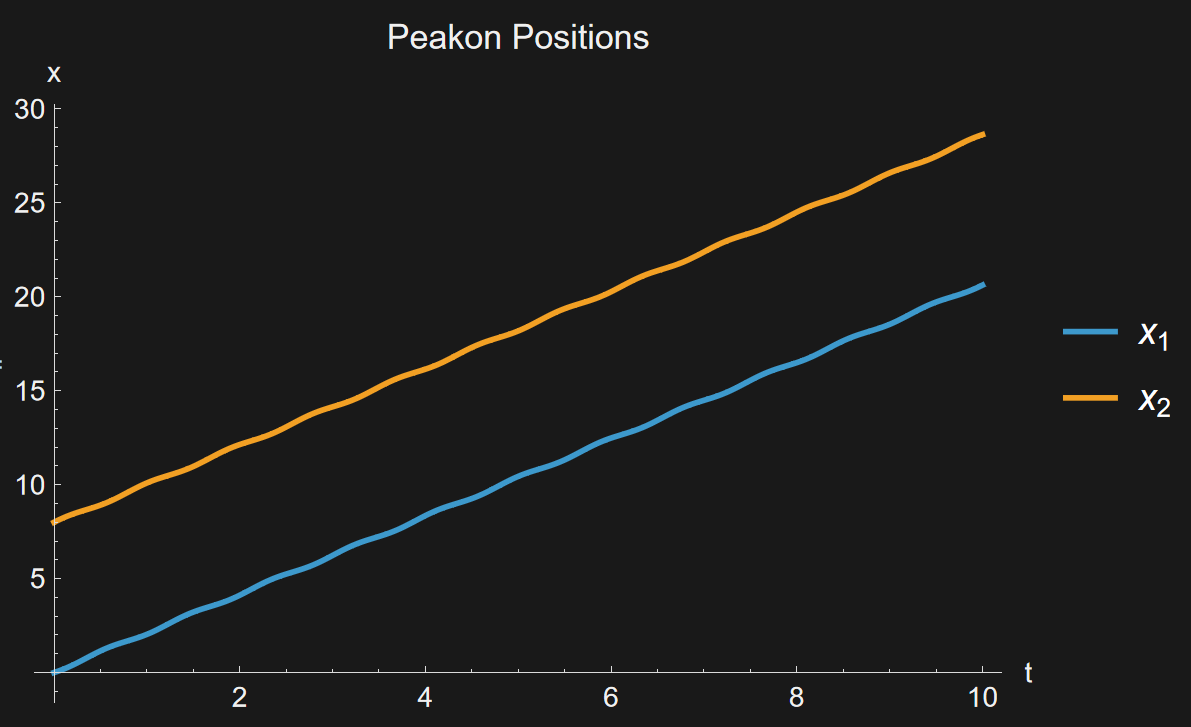}   
        \caption{Peakon Positions for $G(0)=1$}
    \end{subfigure}
    \end{figure}

\section{Conclusions}
 Peakons provide one of the clearest realizations of the particle interpretation of nonlinear waves: the infinite-dimensional PDE admits an invariant finite-dimensional manifold on which the dynamics is governed by a Hamiltonian system for particle positions and momenta. The present study demonstrates that, within the Clifford algebra-based framework developed in \cite{BealsSzm1,BealsSzm2}, the conventional particle description coexists with a robust, long-range coherence of the internal degrees of freedom. The principal result is that these internal degrees of freedom induce a fundamental modification of the long-time particle dynamics of CH peakons. Although the pairwise interactions between peak locations decay exponentially in time, the internal variables evolve toward and remain on a limiting periodic orbit, thereby sustaining a persistent, oscillatory exchange of energy.

There are several directions to pursue after the current work. We list below some of the most pressing issues to investigate:
\begin{itemize}
\item The peakon-antipeakon dynamics: dropping the assumption of positivity of the measure $M$ (see \eqref{eq:M} and \autoref{lem:Lemma1})  might offer a new perspective on the two-component theory.
\item The $N>2$ peakon dynamics, including the derivation of Riemann theta function formulas.
\item The impact of the signature (Minkowski in the present case) on peakon dynamics, not only for $d=2$ but also for general $d$.
\end{itemize}
\section{Acknowledgments}

Jacek Szmigielski's research program is supported by the Natural Sciences and Engineering Research Council of Canada (NSERC). The 2025 and 2026 summer research activities of Alexander Karlson were funded by NSERC through a grant awarded to Jacek Szmigielski.

We wish to express our gratitude to A. Hone and V. Novikov for valuable and stimulating discussions, and for drawing our attention to the reference by Olver and Sokolov \cite{Olver-Sokolov:IS-associative}. We also thank X. Chang for his comments on the manuscript.

\appendix

\section{Periodic dynamics on the asymptotic orbit \texorpdfstring{$\Gamma$}{Gamma}}
\label{appendix:periodic}

In this appendix, we analyze the limiting dynamics obtained by setting
\[
E=e^{-2D}\equiv 0.
\]
The resulting reduced system describes the asymptotic motion on the invariant curve
\[
\Gamma=
\left\{
(s_1,d_1,s_2,d_2):
\ s_1+s_2=I_1,\ 
s_1^2-d_1^2=K_1,\ 
s_2^2-d_2^2=K_2,\ 
s_j>0
\right\}.
\]

We assume throughout the non-degenerate condition
\begin{equation}
\sqrt{K_1}+\sqrt{K_2}<I_1.
\label{eq:nond}
\end{equation}

\subsection{Hyperbolic parametrization}

Introduce hyperbolic coordinates
\begin{equation}
s_j=\sqrt{K_j}\cosh\theta_j,
\qquad
d_j=\sqrt{K_j}\sinh\theta_j,
\qquad j=1,2.
\label{eq:hypcoord}
\end{equation}
Then, the reduced dynamical system becomes
\begin{equation}
\dot\theta_1
=
-\frac12\sqrt{K_2}\sinh\theta_2,
\qquad
\dot\theta_2
=
\frac12\sqrt{K_1}\sinh\theta_1.
\label{eq:thetaflow}
\end{equation}

The conservation law
\[
s_1+s_2=I_1
\]
takes the form
\begin{equation}
\sqrt{K_1}\cosh\theta_1
+
\sqrt{K_2}\cosh\theta_2
=
I_1.
\label{eq:constraint}
\end{equation}

Since $\cosh\ge 1$, condition \eqref{eq:nond} implies that the level set
\eqref{eq:constraint} is a smooth compact closed curve in the
$(\theta_1,\theta_2)$-plane.

\subsection{Reduction to one degree of freedom}

Eliminating $\theta_2$ from \eqref{eq:constraint} gives
\begin{equation}
\cosh\theta_2
=
\frac{I_1-\sqrt{K_1}\cosh\theta_1}{\sqrt{K_2}}.
\label{eq:theta2elim}
\end{equation}

Using
\[
\sinh^2\theta_2=\cosh^2\theta_2-1,
\]
we obtain
\begin{equation}
\sinh^2\theta_2
=
\frac{
(I_1-\sqrt{K_1}\cosh\theta_1)^2-K_2
}{K_2}.
\label{eq:sinh2}
\end{equation}

Substituting into \eqref{eq:thetaflow} yields
\begin{equation}
(\dot\theta_1)^2
=
\frac14
\left(
(I_1-\sqrt{K_1}\cosh\theta_1)^2-K_2
\right).
\label{eq:thetareduced}
\end{equation}

The admissible region is determined by the requirement
\[
\cosh\theta_2\ge 1.
\]
Using \eqref{eq:theta2elim}, this becomes
\[
\cosh\theta_1
\le
\frac{I_1-\sqrt{K_2}}{\sqrt{K_1}}.
\]
Hence the turning points occur at
\begin{equation}
\theta_1
=
\pm a,
\qquad
a
=
\operatorname{arccosh}
\left(
\frac{I_1-\sqrt{K_2}}{\sqrt{K_1}}
\right).
\label{eq:turning}
\end{equation}

The second algebraic root obtained from squaring is inadmissible, since it would force $\cosh\theta_2=-1$.

\subsection{Periodicity}

The vector field \eqref{eq:thetaflow} has an equilibrium only at
\[
(\theta_1,\theta_2)=(0,0).
\]
However, this point belongs to the level set \eqref{eq:constraint} only if
\[
I_1=\sqrt{K_1}+\sqrt{K_2}.
\]
Under the strict inequality \eqref{eq:nond}, the vector field therefore has no zeros on the compact invariant curve \eqref{eq:constraint}.

Consequently, every solution on $\Gamma$ is periodic.

Equivalently, the pairs $(s_1,d_1)$ and $(s_2,d_2)$ execute a bounded periodic exchange with common period.

\subsection{Period integral}

From \eqref{eq:thetareduced}, one quarter-period is
\[
\frac{T}{4}
=
\int_0^a
\frac{
2\,d\theta
}{
\sqrt{
(I_1-\sqrt{K_1}\cosh\theta)^2-K_2
}
}.
\]
Therefore,
\begin{equation}
T=8\int_0^a\frac{d\theta}{\sqrt{(I_1-\sqrt{K_1}\cosh\theta)^2-K_2}},
\label{eq:periodbis}
\end{equation}
where
\[
a=\operatorname{arcosh}\left(\frac{I_1-\sqrt{K_2}}{\sqrt{K_1}}\right).
\]

The integral \eqref{eq:period} is an elliptic integral of the first
kind. Its explicit evaluation is equivalent to the period formula
obtained from the \((v_+,y)\)-reduction in
\autoref{sec:hyperpendulum}.
We show below that
the hyperbolic parametrization is directly related to the variables
introduced in \autoref{sec:hyperpendulum}.  Indeed, 
\[
m_j=\frac{\sqrt{K_j}}{2}e^{\theta_j},
\qquad
n_j=\frac{\sqrt{K_j}}{2}e^{-\theta_j},
\]
and
\[
\sqrt{I_3}=\frac{\sqrt{K_1K_2}}{4},
\]
the definition
\[
m_1n_2=\sqrt{I_3}\,e^y
\]
gives
\[
y=\theta_1-\theta_2.
\]
Moreover,
\[\dot y=\dot\theta_1-\dot\theta_2=-\frac12\left(\sqrt{K_1}\sinh\theta_1+\sqrt{K_2}\sinh\theta_2\right)=v_+.
\]
Thus the $(\theta_1,\theta_2)$-description and the
$(v_+,y)$-description parametrize the same periodic orbit.

Consequently, the period given by \eqref{eq:period} agrees with the
period computed in Section~\ref{sec:hyperpendulum},
\[
T=\frac{4K(-k^2)}{\omega},
\]
where $K$ is written in the elliptic-parameter convention used
throughout the paper.

\subsection{Dynamical description of the curve \texorpdfstring{$\Gamma$}{Gamma}}
\begin{proposition}
Assume
\[
\sqrt{K_1}+\sqrt{K_2}<I_1.
\]
Then the set
\[
\Gamma=
\{(s_1,d_1,s_2,d_2):
s_1+s_2=I_1,\ 
s_j^2-d_j^2=K_j,\ 
s_j>0\}
\]
is a smooth compact closed orbit of the reduced system
\[
\dot s_1=-\frac12 d_1d_2,\qquad
\dot d_1=-\frac12 s_1d_2,
\qquad
\dot s_2=\frac12 d_1d_2,\qquad
\dot d_2=\frac12 s_2d_1.
\]
Equivalently, if
\[
s_j=\sqrt{K_j}\cosh\theta_j,\qquad
d_j=\sqrt{K_j}\sinh\theta_j,
\]
then \(\Gamma\) is the image of the periodic solution of
\[
\dot\theta_1=-\frac12\sqrt{K_2}\sinh\theta_2,
\qquad
\dot\theta_2=\frac12\sqrt{K_1}\sinh\theta_1
\]
restricted to the compact level set
\[
\sqrt{K_1}\cosh\theta_1+
\sqrt{K_2}\cosh\theta_2=I_1.
\]
\end{proposition}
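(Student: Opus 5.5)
We need to show that, under $\sqrt{K_1}+\sqrt{K_2}<I_1$, the set $\Gamma$ is a smooth compact closed curve, that it is invariant under the reduced flow, and that it consists of a single periodic orbit. The plan is to work in the hyperbolic coordinates and then transfer back to $(s_1,d_1,s_2,d_2)$.

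\emph{Step 1: change of variables.} The map $(\theta_1,\theta_2)\mapsto(\sqrt{K_j}\cosh\theta_j,\sqrt{K_j}\sinh\theta_j)_{j=1,2}$ is a diffeomorphism from $\R^2$ onto the set $\{s_j^2-d_j^2=K_j,\ s_j>0\}$, a product of two hyperbola branches. Under it, $\Gamma$ corresponds to the level set
\[
L=\{\sqrt{K_1}\cosh\theta_1+\sqrt{K_2}\cosh\theta_2=I_1\}.
\]
Substituting into the reduced system yields \eqref{eq:thetaflow}. For example, $\dot s_1=\sqrt{K_1}\sinh\theta_1\,\dot\theta_1$ must equal $-\tfrac12 d_1d_2=-\tfrac12\sqrt{K_1K_2}\sinh\theta_1\sinh\theta_2$. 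The corresponding $d_1$ equation, $\sqrt{K_1}\cosh\theta_1\,\dot\theta_1=-\tfrac12\sqrt{K_1K_2}\cosh\theta_1\sinh\theta_2$, fixes $\dot\theta_1$ unambiguously. The $\theta_2$ equation is obtained the same way.

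\emph{Step 2: geometry of $L$.} Set $f(\theta)=\sqrt{K_1}\cosh\theta_1+\sqrt{K_2}\cosh\theta_2$. Its gradient $(\sqrt{K_1}\sinh\theta_1,\sqrt{K_2}\sinh\theta_2)$ vanishes only at the origin. There $f(0,0)=\sqrt{K_1}+\sqrt{K_2}<I_1$, so $I_1$ is a regular value and $L$ is a smooth one-dimensional manifold. $L$ is compact because $f$ is proper. It is connected because $f$ is even in each variable and strictly increasing in $|\theta_1|$ and $|\theta_2|$. Hence in the first quadrant $L$ is the graph $\theta_2=\operatorname{arccosh}\big((I_1-\sqrt{K_1}\cosh\theta_1)/\sqrt{K_2}\big)$ for $0\le\theta_1\le a$, with $a$ as in \eqref{eq:turning}. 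Reflecting this graph into the four quadrants gives a single closed curve, diffeomorphic to a circle and star-shaped about the origin. Transferring through the diffeomorphism shows $\Gamma$ is a smooth compact closed curve.

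\emph{Step 3: invariance and periodicity.} The vector field is tangent to $L$, since $\dot f=\sqrt{K_1}\sinh\theta_1\dot\theta_1+\sqrt{K_2}\sinh\theta_2\dot\theta_2=0$ by \eqref{eq:thetaflow}. It is nowhere zero on $L$, because its only zero is $(0,0)\notin L$. A nonvanishing tangent vector field on a compact connected one-manifold diffeomorphic to $S^1$ has that whole circle as a single orbit, and the orbit is periodic with period $\oint d\ell/|X|$. This period coincides with \eqref{eq:periodbis}.

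The main obstacle is the topological step: proving $L$ is a single connected closed curve, not several components. I will handle it with the explicit quadrant graph and the monotonicity in $|\theta_j|$ from Step 2. The turning-point analysis in the appendix already provides the endpoints $\theta_1=\pm a$, and at those points the graph meets the $\theta_1$-axis smoothly because the gradient is nonzero there.
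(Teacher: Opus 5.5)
Your proposal is correct and follows the same route as the paper. You pass to hyperbolic coordinates, note that the level set $\sqrt{K_1}\cosh\theta_1+\sqrt{K_2}\cosh\theta_2=I_1$ is a smooth compact closed curve avoiding the only equilibrium $(0,0)$, and conclude that the nonvanishing tangent field makes it a single periodic orbit. Your regular-value and properness argument, the explicit tangency check, and especially the quadrant-graph proof of connectedness (needed so that one orbit fills all of $\Gamma$) supply details that the paper's proof only asserts.
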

\begin{proof}
The hyperbolic parametrization maps the level set
\[
\sqrt{K_1}\cosh\theta_1+
\sqrt{K_2}\cosh\theta_2=I_1
\]
onto \(\Gamma\), because it is exactly equivalent to
\[
s_j^2-d_j^2=K_j,\qquad s_j>0,\qquad s_1+s_2=I_1.
\]
Under the strict inequality
\[
\sqrt{K_1}+\sqrt{K_2}<I_1,
\]
this level set is a smooth compact closed curve. The vector field has an equilibrium only at
\[
(\theta_1,\theta_2)=(0,0),
\]
but this point is not on the level set under the strict inequality. Therefore, the vector field is nowhere zero on the level set. Hence, the flow winds around the whole compact component periodically. Its image under the hyperbolic parametrization is precisely \(\Gamma\).
\end{proof}
\section{Relation between \texorpdfstring{$(I_1,K_1,K_2)$ and $(I_3,\mathcal{E})$}{(I1,K1,K2) and (I3,E)} }\label{appendix:relations} 

We derive an explicit correspondence between the invariant description of the curve $\Gamma$ in terms of $(I_1,K_1,K_2)$ and the $(v_+,y)$ Hamiltonian coordinates involving $(I_3,\mathcal{E})$.
\subsection*{Step 1: Expression for $I_3$}

On $\Gamma$ (i.e.\ in the limit $E\equiv 0$), we have
\[
I_3 = m_1 n_1 m_2 n_2.
\]
Using
\[
m_j = \frac{s_j - d_j}{2}, \qquad n_j = \frac{s_j + d_j}{2},
\]
we obtain
\[
m_j n_j = \frac{s_j^2 - d_j^2}{4} = \frac{K_j}{4}.
\]
Therefore,
\begin{equation}
I_3 = \frac{K_1 K_2}{16}.
\end{equation}

\subsection*{Step 2: Expression for $\mathcal{E}$}

Recall that
\[
v_+ = \frac12(n_1+n_2-m_1-m_2) = -\frac12(d_1+d_2),
\]
and
\[
m_1 n_2 = \sqrt{I_3}\,e^y, \qquad
m_2 n_1 = \sqrt{I_3}\,e^{-y}.
\]
Hence,
\[
\sqrt{I_3}\cosh y = \frac12(m_1 n_2 + m_2 n_1).
\]

The energy is
\[
\mathcal{E} = \frac12 v_+^2 + \sqrt{I_3}\cosh y.
\]
Substituting the above expressions, we obtain
\[
\mathcal{E}
=
\frac18(d_1+d_2)^2 + \frac12(m_1 n_2 + m_2 n_1).
\]

Next, observe that
\[
m_1 n_2 + m_2 n_1=\frac12(s_1 s_2 - d_1 d_2).
\]
Therefore,
\[
\mathcal{E}
=
\frac18(d_1+d_2)^2 + \frac14(s_1 s_2 - d_1 d_2).
\]

Expanding and simplifying,
\[
\mathcal{E}
=
\frac18(d_1^2 + d_2^2) + \frac14 s_1 s_2.
\]

Using $d_j^2 = s_j^2 - K_j$, we obtain
\[
\mathcal{E}
=
\frac18(s_1^2 + s_2^2 - K_1 - K_2) + \frac14 s_1 s_2.
\]

Thus,
\[
\mathcal{E}
=
\frac18\bigl((s_1+s_2)^2 - K_1 - K_2\bigr).
\]

Since $s_1+s_2 = I_1$, we conclude that
\begin{equation}
\mathcal{E}
=
\frac{I_1^2 - K_1 - K_2}{8}.
\end{equation}

\subsection*{Final correspondence}

We have established the relations
\begin{equation} \label{eq:I3E}
I_3 = \frac{K_1 K_2}{16},
\qquad
\mathcal{E} = \frac{I_1^2 - K_1 - K_2}{8}.
\end{equation}

Conversely, given $(I_1,I_3,\mathcal{E})$, the quantities $K_1,K_2$ are the two roots of the quadratic equation
\begin{equation}
\lambda^2 - \bigl(I_1^2 - 8\mathcal{E}\bigr)\lambda + 16 I_3 = 0.
\end{equation}

Hence,
\begin{equation}
K_{1,2}
=
\frac12\left[
I_1^2 - 8\mathcal{E}
\pm
\sqrt{
\bigl(I_1^2 - 8\mathcal{E}\bigr)^2 - 64 I_3
}
\right].
\end{equation}

This establishes a one-to-one correspondence between the invariant descriptions $(I_1,K_1,K_2)$ and $(I_1,I_3,\mathcal{E})$, up to permutation of $K_1$ and $K_2$.
\section{Periodic Dynamics on \texorpdfstring{$\Gamma$}{Gamma} and Drift Velocities}
\label{appendix:positions}

Let $G>0$ be fixed and introduce the notation
\[
m:=-k^2,\qquad n(G):=-\frac{4k^2G}{(1+G)^2}.  
\]
For the asymptotic orbit, we take
\[
G=G^\infty.
\]

Define
\[
\mathcal P(u):=\int_0^u\frac{d\xi}{1-n\sn^2(\xi\mid m)}.
\]
Our initial goal is to compute the primitives $S_j^\Gamma(t)=\int_0^t s_j^\Gamma(\tau)\, d\tau$ associated with the orbit $\Gamma$.  
Since $s_1^\Gamma+s_2^\Gamma=I_1$ we immediately obtain the relation 
\begin{equation} \label{eq:S1S2}
S_1^\Gamma(t)=I_1t-S_2^\Gamma(t). 
\end{equation} 
Using \autoref{prop:m1n1m2n2-d}, we obtain
\begin{align}
S_2^\Gamma(t)=\int_0^t s_2^{\Gamma}(\tau)d\tau=\frac {I_1}{2}\int_0^t\left(\frac{1}{1+G e^{-y(\tau)}}+\frac{1}{1+G e^{y(\tau)}}\right)d\tau+\int_0^t \left(\frac{\dot y(\tau)}{1+Ge^{-y(\tau)}}-\frac{\dot y(\tau)}{1+G e^{y(\tau)}}\right)d\tau.  
\end{align}
We will work out each integral individually, and combine them at the end. The simpler integral is the one involving $\dot y$. We apply the substitution $u=e^{y}$ to get 
\[
 \int_0^t \left(\frac{\dot y}{1+G e^{-y}}-\frac{\dot y}{1+G e^y}\right)\, d\tau =\ln\frac{1+2G\cosh y(t)+G^2}{1+2G\cosh y(0)+G^2}. 
\]

The integral 
$
\frac {I_1}{2}\int_0^t\left (\frac{1}{1+G e^{-y}}+\frac{1}{1+Ge^y}\right)\, d\tau 
$
is more involved. 
We first combine the denominators to obtain an integral involving $\cosh{y}$:
\begin{align*}
\frac {I_1}{2}\int_0^t\left(\frac{1}{1+Ge^{-y(\tau)}}+\frac{1}{1+Ge^{y(\tau)}}\right)d\tau
&=\frac{I_1}{2}\int_0^t \frac{2+2G \cosh{y(\tau)}}{1+2G\cosh{y(\tau)}+G^2}d\tau. 
\end{align*}
Then, after applying the energy equation \eqref{eq:Enorm}, 
 we reduce it to an $\mathrm{sn}^2$ integral of the following form:
\begin{align*}
    \frac{I_1}{2}\int_0^t \frac{2+2G \cosh{y(\tau)}}{1+2G \cosh{y(\tau)}+G^2}d\tau=\frac{I_1}{2}t + \frac{I_1}{2} \frac{1-G}{1+G}\int_0^t\frac{1}{1+\frac{4k^2G}{(1+G)^2}\mathrm{sn}^2(\omega(\tau-t_0)|m)}d\tau
\end{align*}
After the substitution \(u=\omega(\tau-t_0)\), and noting that $n(G)=-\frac{4k^2G}{(1+G)^2}$, the remaining integral can be expressed in terms of $\mathcal{P}(u)$.  More precisely, we obtain:

$$
\frac {I_1}{2}\int_0^t\left (\frac{1}{1+G e^{-y}}+\frac{1}{1+Ge^y}\right)\, d\tau =\frac{I_1}{2}t+ \frac{I_1}{2\omega} \frac{1-G}{1+G}\left (\mathcal{P}(\omega(t-t_0))-\mathcal{P}(-\omega t_0)\right).  
$$

 This gives us an intermediate result
 \begin{align*} 
 S_1^\Gamma(t)&=\frac{I_1}{2}t -4\kappa \left[\mathcal{P}(\omega (t-t_0))-\mathcal{P}(-\omega t_0)\right] -\ln\frac{1+2G\cosh y(t)+G^2}{1+2G\cosh y(0)+G^2} \\
 S_2^\Gamma(t)&=\frac{I_1}{2}t + 4\kappa \left[\mathcal{P}(\omega (t-t_0))-\mathcal{P}(-\omega t_0)\right] +\ln\frac{1+2G\cosh y(t)+G^2}{1+2G\cosh y(0)+G^2}, \\
 \text{ where }  \kappa&:= \frac{1}{8\omega}A(G), \qquad \textrm{and } A(G):=I_1\frac{1-G}{1+G}.  
 \end{align*}

Now, for $k>0$ we use the fact that the integrand of $\mathcal{P}(u)$, namely, 
$$
f(u)=\frac{1}{1-n\mathrm{sn}^2(u\mid m)}, 
$$ 
is periodic with period $2K(m)$. Thus, we can write $\mathcal{P}(u)$ as a linear term plus a periodic correction $P_{\mathrm{ell}} (u)$ defined by:
\[
    \mathcal{P}(u)=\alpha u+P_{\mathrm{ell}}(u).  
    \]
The constant $\alpha$ is the average value of the integrand over the period and is given by:  
\[
    \alpha=\frac{1}{2K(m)}\int_0^{2K(m)} \frac{1}{1-n\mathrm{sn}^2(u\mid m)}\, du=\frac{\Pi(n(G)\mid m)}{K(m)}, 
    \]
where $\Pi$ is the complete elliptic integral of the third kind \cite{AS64}[17.7.2].  
Since $u=\omega(t-t_0)$, the period in the time variable is $T_A=\frac{2K(m)}{\omega}$, yielding:
\begin{subequations} \label{eq:S1S2bis}
\begin{align} 
    S_1^\Gamma(t) &=[\frac{I_1}{2}-4\kappa\omega\frac{\Pi(n(G)  \mid m)}{K(m)}]t-P_{\mathrm{per}}(t;G), \\
    S_2^\Gamma(t) &=[\frac{I_1}{2}+4\kappa\omega\frac{\Pi(n(G) \mid m)}{K(m)}]t+P_{\mathrm{per} }(t;G), 
\end{align}
\end{subequations}
with 
\[
P_{\text{per}}(t;G):=4\kappa \left[P_{\text{ell}}(\omega(t-t_0))- P_{\text{ell}}(-\omega t_0)\right]+\ln \frac{1+2G\cosh y(t)+G^2}{1+2G\cosh y(0)+G^2}, 
\]
which is $T_A$-periodic and satisfies $P_{\mathrm{per}}(0; G)=0$. 

The time averages over one period, $\langle s_j^\Gamma \rangle:=\frac{1}{T_A}\int_0^{T _A}s_j^\Gamma(\tau)\, d\tau$, are now easily computed to be 
\begin{subequations} 
\begin{align}
\langle s_1^\Gamma\rangle=\frac{I_1}{2}-4\kappa\omega
\frac{\Pi(n(G)|m)}
     {K(m)}, \label{eq:s1ave}\\
\langle s_2^\Gamma\rangle
=
\frac{I_1}{2}
+
4\kappa\omega
\frac{\Pi(n(G)|m)}
     {K(m)}, \label{eq:s2ave}
\end{align}
\end{subequations} 

and,
\begin{equation} \label{eq:s2s1ave} 
\langle s_2^\Gamma-s_1^\Gamma\rangle
=A(G)
\frac{\Pi(n(G)|m)}
     {K(m)}.
\end{equation} 

Since
$n\leq 0$, the complete elliptic integral $\Pi(n(G)|m)$ is positive. Consequently, the sign of the averaged quantity
$
\langle s_2^\Gamma-s_1^\Gamma\rangle
$
is determined entirely by the sign of $A(G)$, hence by the sign of $1-G.$

We defined the asymptotic drift velocities $ v_j^\infty:= \frac14 \langle s_j^\Gamma \rangle$.  Thus 
\begin{align} 
v_1^\infty&=\frac{I_1}{8}-\kappa\omega\frac{\Pi(n(G)|m)}{K(m)},  \\
v_2^\infty&=\frac{I_1}{8}+\kappa\omega\frac{\Pi(n(G)|m)}{K(m)}. 
\end{align}  
As a consistency check, letting $k\rightarrow0$ gives
\[
n\rightarrow0, \qquad \frac{\Pi(0|0)}{K(0)}=1,
\]
so that
\[
\langle s_1^\Gamma\rangle=\frac{I_1G}{1+G},\qquad \langle s_2^\Gamma\rangle=\frac{I_1}{1+G}.
\]

Using
\[
G=\sqrt{\frac{K_1}{K_2}}, \qquad  I_1=\sqrt{K_1}+\sqrt{K_2},
\]
one recovers
\[
\langle s_1^\Gamma\rangle=\sqrt{K_1}, \qquad \langle s_2^\Gamma\rangle=\sqrt{K_2},
\]
which agrees with the degenerate (Camassa--Holm) limit.
We conclude this appendix with the following elementary remark, which appears in several places in the main body of the paper.  
\begin{remark} \label{rem:AppC/G} 
Throughout this appendix, $G$ was treated as a positive constant.
Although the computation was motivated by
\[
G^\infty=\sqrt{\frac{K_1^\infty}{K_2^\infty}},
\]
the algebraic formulas remain valid for every fixed $G>0$. For arbitrary $G$ though, the resulting periodic functions serve as frozen-parameter comparison functions; they represent an actual orbit of the $E=0$ dynamical system only
when $G$ satisfies the compatibility condition $I_1^2-8\mathcal{E}=4\sqrt{I_3}(G+G^{-1})$ (see \autoref{lem:v+}). \end{remark}

\def\cydot{\leavevmode\raise.4ex\hbox{.}}
  \def\cydot{\leavevmode\raise.4ex\hbox{.}}

\bibliographystyle{plain} 
%\bibliography{KS}

\end{document}